\newtheorem{theorem}{THEOREM}
\newtheorem{lemma}[theorem]{LEMMA}
\newtheorem{corollary}[theorem]{COROLLARY}
\newenvironment{proof}{
PROOF:
\begin{quotation}}{
$\Box$ \end{quotation}}
\newcommand{\lmysticeti}{\textsf{Odontoceti}\xspace}
\newenvironment{caseof}{}{\vskip.5\baselineskip}
\newcommand{\case}[2]{\vskip.5\baselineskip\par\noindent {\bfseries Case #1}:\\\noindent #2}
\newcommand{\proofpart}[2]{\vskip.5\baselineskip\par\noindent {\bfseries #1:}\\\noindent #2}
\lstdefinelanguage{Rust}{
  keywords={abstract,alignof,as,become,box,break,const,continue,crate,do,else,enum,extern,false,final,fn,for,if,impl,in,let,loop,macro,match,mod,move,mut,offsetof,override,priv,proc,pub,pure,ref,return,Self,self,sizeof,static,struct,super,trait,true,type,typeof,unsafe,unsized,use,virtual,where,while,yield},
  keywordstyle=\color{blue}\bfseries,
  ndkeywords={bool,char,f32,f64,i8,i16,i32,i64,isize,str,u8,u16,u32,u64,usize,Option,Result,Vec,String},
  ndkeywordstyle=\color{purple}\bfseries,
  identifierstyle=\color{black},
  sensitive=true,
  comment=[l]{//},
  morecomment=[s]{/*}{*/},
  commentstyle=\color{gray}\ttfamily,
  stringstyle=\color{red}\ttfamily,
  string=[b]",
  showstringspaces=false,
}
\title{  	{ \includegraphics[scale=.5]{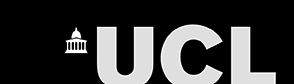}}\\
{{\Huge \lmysticeti: Ultra-Fast DAG Consensus with Two Round Commitment}}\\
		}
\date{Submission date: 8 September 2025}
\author{Preston Vander Vos\thanks{
{\bf Disclaimer:}
This report is submitted as part requirement for the MSc Information Security at UCL. It is substantially the result of my own work except where explicitly indicated in the text. The report may be freely copied and distributed provided the source is explicitly acknowledged.
}
\\ \\
MSc Information Security\\ \\
Philipp Jovanovic and Alberto Sonnino}
\begin{document}

\onehalfspacing
\maketitle
\begin{abstract}
  Users of blockchains value scalability, expecting fast confirmations and immediate transaction processing. \lmysticeti, the latest in DAG-based consensus, addresses these concerns by prioritizing low latency and high throughput, making a strategic trade-off in security by operating with a 20\% fault tolerance instead of the established 33\% level. It is the first DAG-based protocol to achieve commitment in just two communication rounds, delivering median latency of 300 milliseconds while processing 10,000 transactions per second under realistic network conditions. \lmysticeti operates with $n = 5f + 1$ validators and creates an uncertified DAG with a novel decision rule for committing blocks. The protocol includes an optimization that advances progress when participants are slow, benefiting crash fault scenarios which are more common in practice than Byzantine faults. Evaluation results demonstrate 20-25\% latency improvements compared to an existing production protocol, validating that reducing wave length from three rounds to two rounds yields meaningful performance benefits. This paper establishes the practical viability of lower fault tolerance consensus protocols for blockchains.

\end{abstract}
\tableofcontents
\setcounter{page}{1}

\chapter{Introduction}\label{chap:introduction}

Blockchains face an inherent difficulty of simultaneously achieving security, decentralization, and scalability. This fundamental design challenge, known as the blockchain trilemma, was first articulated by Vitalik Buterin \cite{trilemma}. Blockchain protocols need to decide how to balance these three properties, as improvements in one dimension come at the expense of the others. Security ensures that the system remains resilient against malicious attacks. Decentralization guarantees that no single entity controls the network, preserving its distributed nature. Scalability refers to the system's ability to quickly process many transactions.

While all three properties are crucial for a robust blockchain system, end users primarily experience scalability in daily interactions. Security and decentralization operate largely invisibly from the user's perspective. Users care most about confirmation times and fees when submitting transactions. Improved scalability delivers direct benefits across these dimensions: faster confirmations, higher throughput, and lower fees as less congested networks require fewer resources per transaction. In contrast, a slow or congested network creates friction through delays and higher costs, which overshadows even the most robust security and decentralization.

\lmysticeti, the latest innovation in DAG-based consensus protocols, addresses this user centric concern by prioritizing low latency and high throughput, thus delivering strong scalability. To accomplish this goal, \lmysticeti makes a strategic trade-off in security. Specifically, it operates with a 20\% fault tolerance instead of the established 33\% level, accepting a lower threshold in exchange for improved performance. This design choice enables \lmysticeti to achieve consensus faster than traditional protocols.

The motivation for this approach stems from the reality that blockchain systems ultimately compete with centralized payment processors like Visa and Mastercard for user adoption. These companies provide near instantaneous payment processing since they control the entire network infrastructure without the decentralization constraints that the blockchain trilemma carries. For blockchains to gain even more widespread adoption, protocols must deliver user experiences that approach the responsiveness of these centralized alternatives, even if this requires accepting certain security trade-offs.

\section{Contributions}

The primary contributions of this paper are:

\begin{itemize}
    \item \textbf{First two round DAG consensus protocol}: \lmysticeti represents the first DAG-based consensus protocol to commit blocks in just two communication rounds.
    
    \item \textbf{Sub half second latency performance}: The protocol demonstrates exceptional speed, achieving \textbf{median latency of 300 milliseconds} (0.3 seconds) while processing 10,000 transactions per second under realistic network conditions.
    
    \item \textbf{Security-performance trade-off analysis}: This paper provides a comprehensive evaluation of a consensus protocol operating under a relaxed security assumption, showing empirically that \textbf{a 20\% fault tolerance can achieve practical performance gains}.
\end{itemize}

\section{Organization}

This paper is organized as follows. Chapter~\ref{chap:background} supplies the necessary background on foundational distributed systems concepts and traces the evolution of DAG-based consensus protocols. Building on this, Chapter~\ref{chap:overview} presents the system model, threat assumptions, and high level design principles that guide \lmysticeti's architecture. Chapter~\ref{chap:protocol} gives a detailed technical specification of the protocol, including the novel decision rule, a performance optimization, and complete pseudocode. The theoretical guarantees are established in Chapter~\ref{chap:proofs}, which demonstrates formal protocol correctness through safety and liveness. Chapter~\ref{chap:implementation} transitions to practical considerations, discussing the development of \lmysticeti. Chapter~\ref{chap:evaluation} presents performance results that validate \lmysticeti's improvements against existing state of the art. Chapter~\ref{chap:related} examines the historical development and current adoption of 20\% fault tolerant consensus protocols. Finally, Chapter~\ref{chap:conclusion} summarizes the contributions and discusses future research directions.

\chapter{Background}\label{chap:background}

\section{Network Assumptions}\label{sec:network-assumptions}

Distributed systems operate in networks. The type of network impacts the designs and guarantees of any protocol the system runs. Three main models define message delivery assumptions between network participants \cite{networks}:

\begin{itemize}
    \item \textbf{Synchronous networks}: Feature message delivery within a known upper bound. The bound is called $\Delta$. Thus, a message sent at time $t$ will arrive by time $t + \Delta$. This provides predictable communication.

    \item \textbf{Asynchronous networks}: Impose no timing constraints on message delivery. Messages can be arbitrarily delayed, but they will still be delivered. A message sent at time $t$ will eventually arrive, but the delivery time remains uncertain and unbounded.

    \item \textbf{Partially synchronous networks}: Begin in an asynchronous phase but transition to synchronous behavior after an unknown Global Stabilization Time (GST). Messages sent before GST experience the unpredictable delays of asynchrony, while those sent after GST arrive within the known bound $\Delta$. Formally, a message sent at time $t$ will arrive by time $\max(t, \text{GST}) + \Delta$.
\end{itemize}

These models contain different amounts of underlying assumptions. Synchronous networks have the strongest assurances by requiring the most restrictive conditions, asynchronous networks make minimal assumptions, and partially synchronous networks combine the two.

An alternative representation (not used here) of partial synchrony assumes the network is always synchronous but with an \textit{unknown} delay bound. This is in contrast to the model above where $\Delta$ is \textit{known}. While this alternative model offers theoretical simplicity by avoiding asynchronous phases, it presents significant practical challenges. The approach requires timeout parameters to accommodate worst case network behavior rather than average case conditions, often leading to unnecessarily conservative performance during normal operation. Moreover, it has been shown that adaptive timeouts which increase delay estimates upon failures create vulnerabilities to Denial of Service attacks where adversaries can force protocols to exponentially increase the timeout values, severely worsening performance \cite{prime}.

\section{Byzantine Fault Tolerance}

One fundamental challenge of distributed systems is achieving agreement when some participants behave maliciously or unpredictably. Byzantine Fault Tolerance (BFT) addresses this challenge by enabling consensus amongst honest participants despite the presence of adversarial ones, termed \textit{Byzantine}, whose behavior remains unconstrained and arbitrary.

The theoretical foundation is from the Byzantine Generals Problem \cite{byzantine_generals}, which established that distributed systems can withstand at most one-third Byzantine nodes while maintaining correctness. Traditional BFT protocols use this insight by operating in networks of size $n \geq 3f + 1$, where $f$ represents the maximum number of Byzantine nodes. Thus, the total number of participants $n$ contains at least $2f + 1$ honest participants. These honest participants constitute the supermajority required for correctness. This configuration ensures that Byzantine nodes are always less than a third of the network ($f \leq \frac{n - 1}{3}$) and consequently, do not break the consensus protocol.

\subsection{BFT Trade-offs}\label{sec:byz_tradeoff}

Alternative protocol designs have explored different network compositions to achieve different performance characteristics. For instance, \cite{async_random}, \cite{rabin}, and more (discussed in Section~\ref{chap:conclusion}) employ networks of $n = 5f + 1$ nodes, pitting $4f + 1$ honest against $f$ Byzantine. This means the Byzantine nodes are always less than a fifth of the network ($f \leq \frac{n - 1}{5}$). The design choice reflects a fundamental trade-off between the fault tolerance threshold and the minimum communication rounds to attain consensus.

Various theoretical lower bounds have been found for consensus under different network makeups and models. In non-Byzantine settings where all consensus participants are honest, a lower bound of two message rounds has been formalized \cite{fast_no_faults}. For synchronous systems where processes can crash, the same two message rounds bound exists \cite{fischer_1982}. However, the introduction of Byzantine faults (a more formidable adversary than simple crashes) complicates the problem.

Traditional BFT protocols require $n \geq 3f + 1$ participants to tolerate up to $f$ Byzantine faults. These protocols require at least three communication rounds to reach consensus in the common case \cite{pbft}. The additional round is necessary because Byzantine participants can send conflicting information to different honest nodes, creating scenarios where honest participants cannot immediately distinguish between legitimate network delays and Byzantine deception. In the common case, the first round allows participants to propose values, the second round enables participants to observe and report what they received from others, and the third round provides sufficient information for honest nodes to identify and resolve any inconsistencies introduced by Byzantine behavior.

A major theoretical result established that protocols requiring only two communication rounds in Byzantine settings must employ at least $5f + 1$ participants to tolerate $f$ Byzantine faults \cite{fab}. The proof demonstrates that with fewer than $5f + 1$ participants, adversaries can construct scenarios where honest nodes receive identical views despite different underlying executions, violating consensus correctness. This lower bound reveals that reducing common case communication rounds from three to two requires accepting lower fault tolerance. Specifically, the fault tolerance in the network is decreased from 33\% to 20\%. The additional honest participants in a network of $5f + 1$ supply the necessary redundancy to help correct nodes distinguish between conflicting scenarios within two rounds, eliminating the need for the third round of communication that traditional 33\% BFT protocols require.

\section{Goals of Consensus}\label{sec:goals}

Consensus algorithms must maintain two fundamental properties that define their operational guarantees. These properties serve as the universal correctness standards that all consensus protocols, regardless of the specific assumptions or implementation, must hold.

\textbf{Safety} guarantees that every honest participant has the same final state of consensus as all other honest participants, making sure that conflicting decisions are never reached. This property prevents scenarios where different nodes commit contradictory outcomes. Safety must hold under all network conditions and timing scenarios, making it an invariant regardless of periods of asynchrony or network partition.

\textbf{Liveness} guarantees that the system makes progress on achieving consensus, preventing the protocol from stalling indefinitely. This property means that honest participants will eventually reach agreement and that new proposals will be processed within finite time. Unlike safety, liveness often depends on network timing assumptions, as seen in Section~\ref{sec:network-assumptions}.

These properties can be summarized as safety ensures nothing bad ever happens and liveness ensures good things eventually happen.

\section{Byzantine Atomic Broadcast}\label{sec:bab-def}

Byzantine Atomic Broadcast (BAB) is a specific instantiation of consensus as it formalizes safety and liveness. While basic consensus protocols like Byzantine Agreement \cite{byzantine_generals,pease_agreement} focus on agreement regarding a single decision, BAB addresses the harder challenge of coordinating all honest participants to agree on a sequence of multiple decisions. This ordering requirement makes BAB particularly suitable for blockchain protocols, where the sequence of blocks/messages is critical to the system.

BAB was created in order to achieve reliable message delivery with total ordering in the presence of Byzantine faults \cite{bab}. In BAB terminology, \textit{broadcast} refers to a participant proposing a message for inclusion in the final sequence, while \textit{deliver} refers to a participant outputting a message from the total ordering for local processing. Formally, a BAB protocol must satisfy four properties for any set of messages $M$ and honest participants $P$:

\begin{itemize}
    \item \textbf{Validity}: If an honest participant $p \in P$ broadcasts a message $m \in M$, then all honest participants in $P$ eventually deliver $m$.

    \item \textbf{Agreement}: If an honest participant $p \in P$ delivers a message $m \in M$, then all honest participants in $P$ eventually deliver $m$.

    \item \textbf{Integrity}: For any message $m \in M$ and honest participant $p \in P$, participant $p$ delivers $m$ at most once, and only if $m$ was previously broadcast by some participant.

    \item \textbf{Total Order}: If honest participants $p, q \in P$ both deliver messages $m_1 \in M$ and $m_2 \in M$, then $p$ and $q$ deliver these messages in the same order.
\end{itemize}

Blockchain consensus protocols are commonly analyzed against the BAB specification to test their ability to provide a total ordering.

\section{FLP Impossibility Result}\label{sec:flp}

While BFT protocols can sustain consensus properties, they face a fundamental theoretical limitation in asynchronous environments. The Fischer-Lynch-Paterson (FLP) impossibility result \cite{flp}, discovered shortly after the Byzantine Generals Problem, demonstrates that no deterministic consensus protocol can guarantee both safety and liveness in asynchronous systems where at least one process can be faulty. This impossibility was subsequently circumvented through the introduction of randomness into consensus protocols \cite{async_random}, enabling probabilistic termination during asynchrony. As a result, many asynchronous consensus protocols currently incorporate randomness as their primary means for achieving termination.

The FLP impossibility extends beyond purely asynchronous networks to impact partially synchronous ones as well. Since partially synchronous networks experience periods of asynchrony before entering synchrony, deterministic consensus protocols must rely on timeouts to ensure progress. A \textit{timeout} is when a participant waits a predefined duration before moving on to the next step in the protocol.

\section{Quorum Intersection}

A \textit{quorum} is a subset of nodes whose collective agreement is sufficient to make binding decisions for the entire network. \textit{Quorum intersection} guarantees that any two quorums must share at least one honest participant, preventing conflicting decisions from being simultaneously accepted. In BFT networks of $3f + 1$ nodes, quorums are $2f + 1$ in size. Then, any two quorums overlap by a minimum of $f + 1$ nodes. Since $f$ of the shared nodes can be malicious, there will always be at least one honest participant who witnesses both decisions.

A proof by contradiction illustrates why conflicting quorum decisions cannot coexist. Consider two conflicting quorums $Q_1$ and $Q_2$, each containing $2f + 1$ participants from a network of $3f + 1$ total participants. Since $f$ participants are dishonest, there are $f + 1$ honest participants which belong to $Q_1$ and a distinct set of $f + 1$ honest participants which belong to $Q_2$. The same honest participant would not belong to both $Q_1$ and $Q_2$ as their decisions conflict each other. This means there are $(f + 1) + (f + 1) = 2f + 2$ honest participants. With the $f$ dishonest participants, the network contains $(2f + 2) + f = 3f + 2$ participants. This is impossible as the network only has $3f + 1$ total participants. Thus, quorums $Q_1$ and $Q_2$ cannot exist and any quorum of size $2f + 1$ makes binding decisions for the network.

\section{Directed Acyclic Graphs}

A graph $G = (V, E)$ comprises a vertex set $V$ and an edge set $E$, with each edge linking two vertices. Graphs can be undirected, lacking edge orientation, or directed, where edges explicitly point from one vertex to the other.

A Directed Acyclic Graph (DAG) is a directed graph without any cycles. A cycle exists when starting at a vertex and following a path of connected edges returns to the starting vertex. A \textit{sub-DAG} represents a subset of vertices from the original DAG along with their interconnecting edges. An example of a DAG and a sub-DAG is shown on the right in Figure~\ref{fig:linear_dag}.

Graphs are useful when modelling blockchains. Blocks are the vertices and references between the blocks are the edges. The edges are directed as a block will reference (or point to) one or multiple other blocks. Blocks can only reference previously created blocks, so there are no cycles. This forms a DAG.

\section{Linear Consensus}

The emergence of blockchain technology began when Satoshi Nakamoto created Bitcoin \cite{bitcoin} in 2008. Bitcoin's revolutionary approach to solving the double spending problem without requiring a trusted central authority established the foundation for all subsequent blockchains.

Linear consensus, pioneered by Bitcoin, is a distributed agreement approach where participants converge on a single, totally ordered sequence of blocks. This forms a chain-like structure of blocks, hence the name blockchain. Bitcoin accomplished this by requiring each block to contain a hash pointer to its immediate predecessor. Thus, when miners (what nodes in Bitcoin's network are called) propose a new block, they must reference the previous block in the chain along with including new transactions, a valid Proof of Work solution, and more. The proposed block is then broadcast across the network, where other nodes verify it before accepting it as the next link in the chain.

In linear consensus, when multiple valid blocks are simultaneously proposed which reference the same previous block, a fork occurs. When this happens, different parts of the network accept different blocks as the correct next block. Honest miners will not accept both conflicting blocks so they take one or the other. Fork resolution in linear consensus systems relies on the longest chain rule. The rule handles forks by designating the chain with more blocks as the correct chain. Thus, miners continuously monitor for longer chains and will reorganize their local blockchain state to adopt any chain that is longer than what they currently know. This orphans shorter competing forks. The process ensures eventual consistency as the network converges on a single final sequence. Many blockchains after Bitcoin have adopted variations of this longest chain rule as their consensus protocol. The left of Figure~\ref{fig:linear_dag} shows a linear chain structure with a fork.

\begin{figure}[hbtp]
    \centering
    \begin{tikzpicture}[scale=1, transform shape, node distance=0.5cm and 0.95cm,
        every node/.style={draw, minimum size=1cm, circle},
        >=Stealth]
    
    \node[teal!75, thick] (d1) at (7, 0) {};
    \node[teal!75, thick] (d2) [right=of d1] {};
    \node (d3) [right=of d2] {};
    \node (d4) [right=of d3] {};
    \node[teal!75, thick] (d5) [below=of d1] {};
    \node[fill=teal!50] (d6) [below=of d3] {};
    \node (d7) [below=of d4] {};
    \node (d8) [below=of d5] {};
    \node (d9) [right=of d8] {};
    \node (d10) [below=of d7] {};
    
    \node (l4) [left=of d5] {};
    \node (l3) [left=of l4] {};
    \node (l2) [left=of l3] {};
    \node (l1) [left=of l2] {};
    \node[fill=red!50] (l5) [above=of l3] {};
    
    \draw[<-] (d1.east) -- (d9.west);
    \draw[<-] (d2.east) -- (d3.west);
    \draw[<-] (d3.east) -- (d7.west);
    \draw[<-] (d3.east) -- (d10.west);
    \draw[<-] (d5.east) -- (d9.west);
    \draw[<-] (d6.east) -- (d7.west);
    \draw[<-] (d6.east) -- (d10.west);
    \draw[<-] (d8.east) -- (d3.west);
    \draw[<-] (d9.east) -- (d10.west);
    \draw[<-] (d9.east) -- (d3.west);
    \draw[<-] (d5.east) -- (d4.west);
    \draw[<-] (d6.east) -- (d4.west);
    \draw[<-] (d8.east) -- (d9.west);
    
    \draw[<-, teal!75, thick] (d1.east) -- (d2.west);
    \draw[<-, teal!75, thick] (d1.east) -- (d6.west);
    \draw[<-, teal!75, thick] (d2.east) -- (d6.west);
    \draw[<-, teal!75, thick] (d5.east) -- (d2.west);
    
    \draw[<-] (l1.east) -- (l2.west);
    \draw[<-] (l2.east) -- (l3.west);
    \draw[<-] (l3.east) -- (l4.west);
    \draw[<-] (l2.east) -- (l5.west);
    
    \end{tikzpicture}
    \caption{An example of \textbf{linear} (left) vs \textbf{DAG-based} (right) block structures. On the left, the \colorbox{red!50}{red block}, which created a fork, will be orphaned as it is not part of the longest chain. On the right, the sub-DAG of the \colorbox{teal!50}{blue block} is \textcolor{teal!100}{highlighted}.}
    \label{fig:linear_dag}
\end{figure}
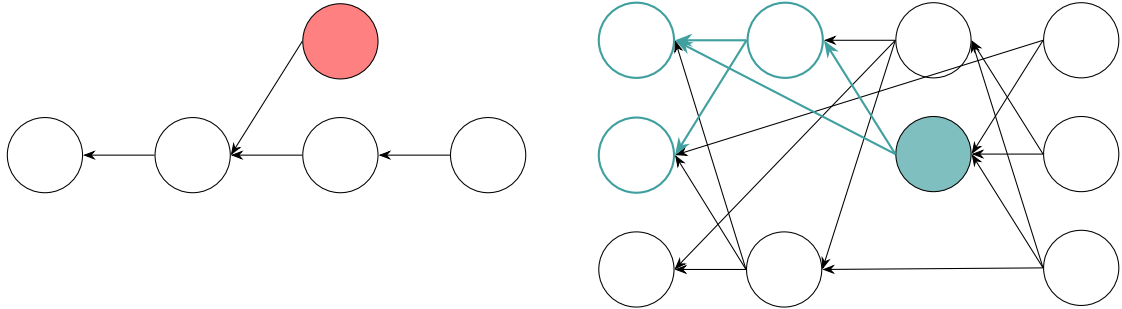

\section{DAG-Based Consensus Evolution}\label{sec:dag-evolution}

Linear consensus protocols face several inherent constraints that limit their performance. First, the sequential nature of linear blockchain construction creates a scalability bottleneck, as only one block can be added to the chain at any given time, limiting the number of transactions that can be processed. Second, fork resolution through the longest chain rule introduces a delay, as nodes must wait for other blocks to be added after theirs in order to reduce the likelihood of theirs being orphaned. Third, increasing the number of nodes in the network does not improve performance, rather it only increases competition for adding the next block in the chain.

\subsection{Origins}

The limitations of linear consensus spurred the development of alternative approaches, with Hashgraph \cite{hashgraph} emerging in 2016 as the first DAG-based consensus protocol. Rather than forcing blocks into a single chain, Hashgraph validators create blocks that reference multiple predecessors, forming a DAG structure. Operating in asynchronous network conditions, Hashgraph addressed the critical challenge of validator equivocation—where malicious participants could send conflicting blocks to different parts of the network. The protocol's solution required any block entering the total ordering to receive references from $2f + 1$ subsequent blocks created by distinct validators. Having a quorum attest to the validity of a block meant that equivocating blocks could not simultaneously be in the final output. A variation of Hashgraph is currently being used in Hedera \cite{hedera}.

\subsection{Certified DAGs and Rounds}

Building upon Hashgraph's innovation, Aleph \cite{aleph} advanced DAG-based consensus by introducing rounds and certified DAGs. Aleph tackled the equivocation problem through a different approach than Hashgraph. Instead of solely relying on subsequent block references, Aleph implemented a certification process where blocks must receive explicit approval from $2f + 1$ validators before entering the DAG.

The certification process takes three steps. Initially, a validator proposes a block and broadcasts it across the network. Other validators then verify the block's correctness and, if valid, respond with individual acknowledgments sent back to the original proposer. Once the proposer accumulates a quorum of acknowledgments from other validators, they broadcast a certificate containing proof of these approvals. The certificate of a block allows it to enter the DAG.

Aleph's other contribution was the addition of rounds, which brought logical structure to Hashgraph's previously unordered DAG construction. Rounds organize blocks into sequential stages where each round builds on the blocks of the previous rounds. This ensures consistent progress and enables validators to reason systematically about the network across time. To advance rounds, validators must collect $2f + 1$ blocks from the current round and reference them in their next proposal. 

\subsection{Leaders and Linearization}

DAG-Rider \cite{dag-rider} was the first DAG-based consensus protocol to use leaders, adapting the coordination mechanism that has been utilized in distributed systems for decades \cite{viewstamped_replication}. The protocol designates specific validators as leaders for predetermined periods, with these leaders playing a key role in consensus decisions. DAG-Rider's leader rotation emphasizes fairness by uniformly assigning every validator the same amount of leader spots.

The introduction of leaders necessitated a new approach to block ordering within the DAG. Previously, all blocks were treated equally. However, now with leaders, whose blocks drive the decision making, a distinction is needed as non-leader blocks are less important. At a high level, the consensus algorithm now focuses on agreement regarding whether leaders successfully proposed blocks in their assigned rounds or not. This creates a sequence of committed leader blocks across rounds. Once this leader sequence is found, the sub-DAGs of each committed leader block are processed through some deterministic ordering algorithm to put them in their final positions in the total ordering. Only the unique parts of each sub-DAG are included in this ordering to eliminate duplicates. This procedure of processing all the blocks after the committed leader sequence is called \textit{linearization}.

\subsection{Zero Overhead and Waves}

In combination with using leaders, DAG-Rider scored a breakthrough by eliminating extra communication for block ordering decisions. Hashgraph and Aleph required validators to exchange additional messages beyond block proposals to establish consensus on the total ordering. In contrast, DAG-Rider's commitment rule operates purely through local DAG analysis. Each validator can independently examine the DAG structure and derive the same total ordering without any supplementary communication, significantly enhancing protocol efficiency. This \textit{zero overhead} rule removes steps to achieve consensus, resulting in quicker decisions.

The approach also reduces the protocol's message complexity in the decision making phase to a single message type: blocks themselves. While the DAG construction process itself requires multiple message types for block certification, the consensus decision making phase operates exclusively through blocks that reference other blocks, eliminating the need for separate voting or commitment messages after the certified DAG is constructed. This design choice of minimizing message types was explored earlier in a Casper FFG \cite{casper_ffg} variant that finds consensus through a single message type \cite{casper_one_message}. By relying entirely on the DAG's structure for decisions, protocol implementations can be simpler.

To execute the consensus protocol, DAG-Rider organizes rounds into \textit{waves}. Waves are collections of consecutive rounds containing a single designated leader in the first round. For DAG-Rider waves are four rounds in length. Having waves provides a framework for applying the zero overhead decision rule, as the protocol evaluates whether each leader's block should enter the total ordering based on the DAG structure within and across waves.

DAG-Rider's decision rule allows leader blocks to be committed in two different ways. A direct commitment occurs if a quorum of validators ($2f + 1$) in the wave's final round maintain strong paths to the leader's block. A strong path exists when the validator and leader are connected with intermediate blocks in every round between them. This is in contrast to a weak path where the validator and leader are connected but at least one round in between them is skipped over, due to a block referencing a block in a round earlier than the directly prior round. The other way a leader block can be committed is indirectly. Indirect commitments are when a later leader block is committed and there is a strong path to a previously undecided leader block. The undecided leader will now be committed.

\subsection{Garbage Collection}

The difficulty of managing ever growing protocol state has been recognized since early consensus work, with PBFT \cite{pbft} implementing checkpoints to enable garbage collection of processed messages and preserve bounded storage requirements. Along these lines, Tusk \cite{tusk} addressed the analogous memory challenge in DAG-based consensus by introducing garbage collection capabilities. The protocol accomplished this by implementing a simple structural constraint: blocks can only reference other blocks from the immediately preceding round. This eliminated the arbitrary historical references that characterized earlier protocols like Hashgraph and DAG-Rider. The rule is vital for practical deployments, as it empowers validators to safely discard finalized portions of the DAG after consensus decisions are completed. Without the restriction, previous DAG protocols required indefinite storage of the entire DAG history due to the possibility that a block could link back to the genesis round, causing storage requirements to grow without limit.

\subsection{Pipelining}

The concept of pipelining in consensus protocols, running multiple instances in parallel, originated with early work on overlapping consensus execution. Multi-Paxos \cite{paxos_made_simple} started this by innovating to handle running multiple Paxos instances simultaneously, while Fast Paxos \cite{fast_paxos} improved on their parallel execution technique. This foundation demonstrated how overlapping consensus rounds could improve throughput by eliminating the sequential bottleneck in single instance algorithms.

Tusk adapted pipelining to DAG-based consensus through its wave architecture. The protocol employs three round waves with each wave's initial round coinciding with the preceding wave's final round. With this structural change multiple waves run simultaneously, translating the parallel execution benefits to the DAG setting.

\subsection{From Asynchronous to Partially Synchronous}

The DAG-based consensus protocols discussed thus far all operated in the asynchronous network setting, requiring randomness to circumvent the FLP impossibility result. Meanwhile, the partially synchronous setting had been dominated by non-DAG leader-based protocols such as PBFT \cite{pbft}, HotStuff \cite{hotstuff}, Tendermint \cite{tendermint}, and SBFT \cite{sbft}, which ascertained progress through timeouts but suffered from expensive recovery procedures when leaders failed.

Bullshark \cite{bullshark} bridged this gap by becoming the first DAG-based consensus protocol designed for a partially synchronous network\footnote{Bullshark also presents an asynchronous variant, but this paper focuses exclusively on the partially synchronous version.}. It was able to combine the structural advantages of DAGs with the progress guarantees of partial synchrony. The protocol's central innovation was its improved handling of faulty leaders compared to existing partially synchronous protocols. The traditional approaches require costly view-changes and view-synchronizations when leaders become unresponsive. These involve additional communication of either $O(n)$ or $O(n^2)$ messages which slows down consensus.

Bullshark eliminates expensive recovery fallbacks through a streamlined approach that incorporates timeouts directly into the DAG structure. The protocol maintains progress by having validators advance rounds dependent on block availability rather than burdensome leader coordination. When functioning properly, validators progress after collecting the leader's block and $2f$ other blocks from the current round for a total of $2f + 1$ blocks. However, if a leader becomes unresponsive or malicious, validators wait for a timeout and continue to the next round when it runs out. This straightforward design restricts the impact of faulty leaders to a single timeout delay per round, avoiding the compounding delays and extra communications that plague traditional recovery processes in partially synchronous protocols.

\subsection{Mysticeti}

Mysticeti\footnote{This paper uses Mysticeti synonymously with Mysticeti-C. Mysticeti-FPC is not covered as it is a consensusless protocol.} \cite{mysticeti} represents the culmination of DAG-based consensus evolution. Operating as a BFT consensus protocol in the partially synchronous setting, it utilizes leaders, pipelining, zero communication overhead, and garbage collection.

Like Hashgraph, Mysticeti constructs an uncertified DAG. Unlike certified approaches that require multiple communication steps (propose, acknowledge, certify) per block, uncertified DAGs allow validators to propose blocks through a single broadcast. Thus, Mysticeti eliminates all auxiliary message types. The protocol operates with only block proposals; there are no separate certification messages. This choice reintroduces the equivocation challenge where malicious validators may create conflicting blocks within the same round. To handle this trade-off, Mysticeti develops a commitment rule that keeps safety without requiring certifications.

Zero overhead commitment rules were simplified throughout the earlier protocols. They began with DAG-Rider evaluating support for the leader by seeing how many connections there were from the wave's final round back to the leader block. Tusk simplified this process by only looking at the support in the round immediately following the leader rather than waiting for the wave to end. Under Tusk's approach, leader blocks were committed after receiving $f + 1$ supports in the next round, reducing commitment latency while preserving safety.

Without certifying blocks, Mysticeti had to create a stronger commitment rule. It did this by synthesizing the approaches of DAG-Rider and Tusk. Mysticeti organizes the DAG into waves of three rounds with leaders in the first round. A leader's block is committed through a two-stage process. First, the leader's block must receive $2f + 1$ votes in the second round, echoing Tusk's immediate next round evaluation. Second, in the third round, $2f + 1$ blocks must support each of the $2f + 1$ second round blocks that voted for the leader, reflecting DAG-Rider's final round verification strategy. This dual-layer commitment rule provides the necessary checks to prevent equivocations in the uncertified DAG.

Mysticeti also incorporates an indirect decision rule by which earlier undecided leaders can be committed through a connection to a subsequently committed leader. The difference in Mysticeti from DAG-Rider's approach (which was mirrored in Tusk) requires that the connection between the future committed leader and the leader in question must pass through one of the third-round blocks that supports the $2f + 1$ second round blocks which voted for the original leader.

Unlike its predecessors which have only one leader per wave, Mysticeti has capacity for multiple leaders per wave, increasing throughput through parallel leader block proposals that each undergo independent verification. These combined innovations make Mysticeti a practical, high-performance consensus protocol. This is demonstrated by its current deployment in the Sui blockchain \cite{suiwebsite}.

Developed in roughly the same timeframe, Cordial Miners \cite{cordial_miners} explored similar ideas as Mysticeti. While both protocols share their approach to uncertified DAGs, Mysticeti distinguishes itself through creating a protocol implementation, including an indirect decision rule, pipelining waves together, and supporting multiple leaders per wave.

\subsection{Alternative Directions}

Mysticeti made specific design choices in its approach to DAG-based consensus, but many other protocols have explored alternative directions.

Shoal \cite{shoal} improves efficiency through a leader reputation scoring system, where validators assess a leader's performance based on their behavior. Nodes who consistently broadcast valid blocks and sustain honest network participation earn higher reputation scores, while those who exhibit faulty or malicious behavior earn low scores. A validator's reputation score influences their likelihood of being selected as a leader, creating economic incentives for honest behavior. Since the protocol favors reliable validators to be leaders, validators who have crashed are avoided. This reduces the probability that nodes are required to wait the full timeout and will speed up consensus.

While Mysticeti brought back uncertified DAGs to simplify communications, other protocols continue to build certified DAGs. Shoal++ \cite{shoalpp}, built on top of Shoal, kept the certified DAG structure to a degree while introducing multiple optimizations to reduce latency. The protocol commits faster by allowing some decisions to occur on uncertified proposals rather than waiting for full certification and implementing anchor pipelining where all nodes can serve as anchors to eliminate delays in finding one. Similarly, Sailfish \cite{sailfish} explores improvements for certified DAGs by introducing explicit no-vote messages that equip validators to signal their inability to support specific blocks. The protocol innovatively uses these no-vote messages to accelerate consensus decisions through expedited resolution when validators will not support blocks.

Ebb-and-flow consensus protocols \cite{sleepy,ebb_flow}, such as Slipstream \cite{slipstream}, represent a fundamentally different approach to network assumptions. They distinguish between participants being Byzantine and temporarily unavailable. This is in contrast to traditional partially synchronous protocols which treat crashed nodes as Byzantine faults, limiting the total number of non-participating nodes to at most one-third of the network. Ebb-and-flow protocols recognize that temporary unavailability is different from malicious behavior, enabling systems to tolerate higher rates of non-participation. Slipstream handles this by tolerating up to 50\% of awake nodes being Byzantine in its ``sleepy'' mode, compared to the 33\% fault tolerance in its ``synchronous'' mode. This dual approach sustains liveness under high unavailability and a total ordering that ensures safety under traditional Byzantine assumptions. Their key motivation is that real world networks often experience temporary partitions and node unavailability that should not be conflated with malicious attacks. These network assumptions empower ebb-and-flow protocols to maintain progress in scenarios where all other BFT protocols would halt.

\section{Advantages of DAG-Based Consensus}

DAG-based consensus protocols directly address the shortcomings of linear consensus through their graph structure. The ability to reference multiple predecessors enables parallel block creation, substantially increasing the amount of transactions that can be processed as validators can propose blocks simultaneously. Speed improvements come from not needing to wait a long time to see if a block is orphaned, since the DAG structure provides stronger consistency guarantees through its interconnected references. Additionally, computational efficiency increases because many blocks per round can coexist within the DAG and contribute to consensus, eliminating the waste associated with orphaned blocks in linear chains. Finally, adding more validators improves performance as more blocks are able to be created in every round. These combined advantages establish DAG-based consensus as a more scalable approach to blockchains than their linear counterparts.

\chapter{Overview}\label{chap:overview}

\section{Setting}

\lmysticeti operates within a Proof of Stake blockchain setting where a committee of validators is responsible for consensus. In Proof of Stake systems, validators are selected to be in the committee depending on their stake in the network, creating economic incentives for honest behavior since malicious behavior risks the loss of staked assets. The blockchain is organized into epochs, discrete time periods during which the committee remains fixed. Before each epoch, a selection process chooses validators to form the committee from their proportions of overall stake and other protocol-specific criteria. This committee structure provides stability during consensus execution while allowing for validator set updates between epochs to accommodate changes in stakes or participation. This framework is similar to most current blockchain consensus protocols. Throughout this paper, it is assumed that each validator in the committee has equal stake. The \lmysticeti protocol is presented during the execution of a single epoch. For simplicity, the committee concept is subsequently ignored and validators are referred to in general with the assumption that they are part of the current committee; validators outside the committee do not participate in consensus and are disregarded.

The complete transaction lifecycle involves multiple stages from user creation to final execution. Users create and digitally sign transactions, which are then broadcast to the validator network through direct connections or intermediary services. Validators receive these transactions, perform checks including signature verification and balance confirmation, and add valid transactions to their local mempools. During block production, validators select transactions from their mempools based on fees and protocol rules and bundle them into blocks. The blocks are then propagated through the network where they undergo the \lmysticeti consensus process to determine their inclusion and order in the chain. Once a block achieves consensus and is committed to the chain, the transactions within that block are considered finalized and their state changes are applied to the global blockchain state. Individual transaction finality is entirely dependent on block finality. Transactions cannot be executed or considered final until the block they are contained in successfully completes the consensus process. Therefore, this paper focuses exclusively on block-level consensus, treating blocks as having arbitrary payloads rather than examining individual transactions.

\section{Threat Model, Goals, and Assumptions}

\lmysticeti operates in the partially synchronous network model described in Section~\ref{sec:network-assumptions}, where messages are eventually delivered within bound $\Delta$ after GST. This prioritizes practical deployability as it more closely aligns with real world network behavior, where communication typically exhibits bounded delays under normal conditions but may experience unbounded delays during periods of network congestion or attack. The partially synchronous model enables \lmysticeti to optimize for common case performance by setting timeout parameters according to reasonable network assumptions while preserving safety guarantees even when these assumptions are violated during occasional disruptions.

The protocol employs a BFT design with $n = 5f + 1$ validators. A computationally bounded adversary can corrupt up to $f$ validators who may deviate arbitrarily from the protocol, while all other validators follow the protocol faithfully. Thus, there are $4f + 1$ honest and $f$ Byzantine participants. Due to the adversary being computationally bounded, standard cryptographic assumptions hold, ensuring the security of hash functions, digital signatures, and other primitives. Links between honest validators are reliable and authenticated, meaning all messages among honest parties eventually arrive and receivers can verify sender identity.

Under these network and adversary assumptions, the protocol maintains the fundamental consensus properties of safety and liveness outlined in Section~\ref{sec:goals}, proved in Section~\ref{chap:proofs}, while solving BAB as defined in Section~\ref{sec:bab-def}. BAB is selected in line with prior work \cite{dag-rider,tusk,bullshark,shoal,mysticeti,mahimahi}.

\section{Intuition Behind \lmysticeti Design}

The evolution of DAG-based consensus protocols, as described in Section~\ref{sec:dag-evolution}, has progressively reduced commitment latency through various structural and algorithmic improvements. Among these developments, uncertified DAGs have emerged as a promising approach, exemplified by protocols like Mysticeti \cite{mysticeti} and Mahi-Mahi \cite{mahimahi}. This trend is further validated by recent research in certified DAG protocols. Shoal++ \cite{shoalpp} introduced an optimization that treats certain blocks in an uncertified manner, indicating that they recognize the benefits of reducing the demands of certifications.

Certified DAG protocols require three message delays (propose, acknowledge, certify) before consensus can even evaluate a block for commitment. In contrast, uncertified DAGs eliminate this certification bottleneck. Mysticeti can commit blocks in three message delays total under partially synchronous conditions, while Mahi-Mahi achieves commitment in four or five message delays in fully asynchronous conditions, both representing substantial improvements over their certified counterparts.

\lmysticeti ventures into underexplored territory by adopting the network size $n = 5f + 1$ in the blockchain setting. This departs from the well-established $3f + 1$ size. While theoretical results, discussed in Section~\ref{sec:byz_tradeoff}, demonstrate that networks with $5f + 1$ participants have a lower bound of two communication rounds for consensus, compared to the three round lower bound in traditional systems, the implications of this transition remain largely unknown. \lmysticeti aims to explore the question of whether the expectation of reduced latency justifies the trade-off of accepting a lower fault tolerance threshold.

\section{Structure of \lmysticeti DAG}

\lmysticeti operates over a series of consecutive rounds that are grouped into waves. Each round sees every honest validator create a single block while Byzantine validators may not create a block or can equivocate by sending multiple blocks. At all times, validators collect transactions from clients and blocks from peers, integrating both into their block proposals. Each new block contains references to blocks from the directly preceding round and includes new transactions not yet included in prior blocks. If a client notices that their transaction is taking a while to be included in a block, they will resubmit it to a different validator.

\subsection{Blocks}

Formally, all blocks must include:

\begin{itemize}
    \item The author of the block along with their digital signature over the block contents
    
    \item A round number
    
    \item A list of transactions
    
    \item $4f + 1$ distinct hashes of valid blocks from the previous round, with the first hash being the author's own block
\end{itemize}

Each block is uniquely identified by the triplet (author, round, hash of the block contents). A block is considered valid if the signature is valid and the author is part of the validator set, there are at least $4f + 1$ hashes which point to distinct valid blocks from the previous round with the first being to the author's own previous block, and the block's causal history can be fully verified.

Honest validators maintain the integrity of their local DAG by accepting only valid blocks and rejecting malformed or invalid proposals. Additionally, validators protect causal consistency by incorporating a block only after recursively retrieving and verifying all blocks in the list of references. This ensures that the entirety of the DAG is legitimate by disallowing Byzantine validators to insert invalid blocks or fabricate references.

\subsection{Rounds and Waves}

Waves in \lmysticeti are two rounds long. The first round, called the propose round, is when validators create and disseminate their block proposals. The second round, called the decide round, sees validators evaluate the blocks from the propose round and make decisions based on them.

Wave pipelining is employed by starting a new wave with every round. This means that while validators are making decisions for blocks in one wave during a decide round, they are simultaneously proposing new blocks for the next wave. This overlapping execution produces maximum progress in the protocol, as there is always an active propose round generating new proposals while previous waves are being decided, with each propose round serving as the decide round for the preceding wave. Shoal \cite{shoal}, also using waves of two rounds, shares the exact same pipelined wave structure which can be seen in Figure~\ref{fig:rounds}.

While waves in Bullshark \cite{bullshark} and Shoal are also two rounds long, there is a major difference between them and \lmysticeti waves. Bullshark and Shoal both certify blocks. Thus, their two round waves equate to six message delays (three delays per round for certifications). Each \lmysticeti round is one message delay, so the waves are, in total, two message delays.

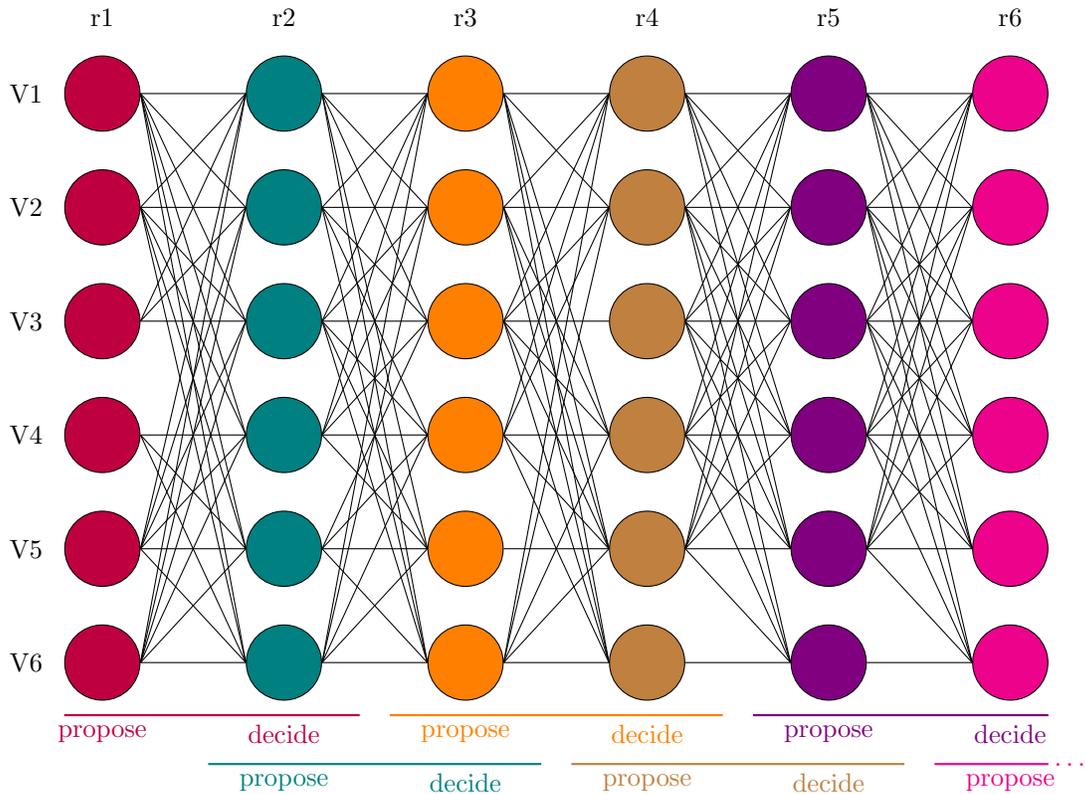
\begin{figure}[htbp]
\centering
\begin{tikzpicture}[scale=1, transform shape, node distance=0.5cm and 1.4cm,
    every node/.style={draw, minimum size=1cm, circle}]

\node[fill=purple] (n1) at (0, 0) {};
\node[fill=teal] (n2) [right=of n1] {};
\node[fill=orange] (n3) [right=of n2] {};
\node[fill=brown] (n4) [right=of n3] {};
\node[fill=purple] (n5) [below=of n1] {};
\node[fill=teal] (n6) [right=of n5] {};
\node[fill=orange] (n7) [right=of n6] {};
\node[fill=brown] (n8) [right=of n7] {};
\node[fill=purple] (n9) [below=of n5] {};
\node[fill=teal] (n10) [right=of n9] {};
\node[fill=orange] (n11) [right=of n10] {};
\node[fill=brown] (n12) [right=of n11] {};
\node[fill=purple] (n13) [below=of n9] {};
\node[fill=teal] (n14) [right=of n13] {};
\node[fill=orange] (n15) [right=of n14] {};
\node[fill=brown] (n16) [right=of n15] {};
\node[fill=purple] (n17) [below=of n13] {};
\node[fill=teal] (n18) [right=of n17] {};
\node[fill=orange] (n19) [right=of n18] {};
\node[fill=brown] (n20) [right=of n19] {};
\node[fill=purple] (n21) [below=of n17] {};
\node[fill=teal] (n22) [right=of n21] {};
\node[fill=orange] (n23) [right=of n22] {};
\node[fill=brown] (n24) [right=of n23] {};

\node[fill=violet] (n25) [right=of n4] {};
\node[fill=magenta] (n26) [right=of n25] {};
\node[fill=violet] (n27) [right=of n8] {};
\node[fill=magenta] (n28) [right=of n27] {};
\node[fill=violet] (n29) [right=of n12] {};
\node[fill=magenta] (n30) [right=of n29] {};
\node[fill=violet] (n31) [right=of n16] {};
\node[fill=magenta] (n32) [right=of n31] {};
\node[fill=violet] (n33) [right=of n20] {};
\node[fill=magenta] (n34) [right=of n33] {};
\node[fill=violet] (n35) [right=of n24] {};
\node[fill=magenta] (n36) [right=of n35] {};

\draw[-] (n4.east) -- (n25.west);
\draw[-] (n4.east) -- (n27.west);
\draw[-] (n4.east) -- (n29.west);
\draw[-] (n4.east) -- (n31.west);
\draw[-] (n4.east) -- (n33.west);
\draw[-] (n8.east) -- (n25.west);
\draw[-] (n8.east) -- (n27.west);
\draw[-] (n8.east) -- (n29.west);
\draw[-] (n8.east) -- (n31.west);
\draw[-] (n8.east) -- (n33.west);
\draw[-] (n12.east) -- (n25.west);
\draw[-] (n12.east) -- (n27.west);
\draw[-] (n12.east) -- (n29.west);
\draw[-] (n12.east) -- (n31.west);
\draw[-] (n12.east) -- (n33.west);
\draw[-] (n16.east) -- (n25.west);
\draw[-] (n16.east) -- (n27.west);
\draw[-] (n16.east) -- (n29.west);
\draw[-] (n16.east) -- (n31.west);
\draw[-] (n16.east) -- (n33.west);
\draw[-] (n20.east) -- (n25.west);
\draw[-] (n20.east) -- (n27.west);
\draw[-] (n20.east) -- (n29.west);
\draw[-] (n20.east) -- (n31.west);
\draw[-] (n20.east) -- (n33.west);
\draw[-] (n24.east) -- (n35.west);
\draw[-] (n8.east) -- (n35.west);
\draw[-] (n12.east) -- (n35.west);
\draw[-] (n16.east) -- (n35.west);
\draw[-] (n20.east) -- (n35.west);

\draw[-] (n25.east) -- (n26.west);
\draw[-] (n25.east) -- (n28.west);
\draw[-] (n25.east) -- (n30.west);
\draw[-] (n25.east) -- (n32.west);
\draw[-] (n25.east) -- (n34.west);
\draw[-] (n27.east) -- (n26.west);
\draw[-] (n27.east) -- (n28.west);
\draw[-] (n27.east) -- (n30.west);
\draw[-] (n27.east) -- (n32.west);
\draw[-] (n27.east) -- (n34.west);
\draw[-] (n29.east) -- (n26.west);
\draw[-] (n29.east) -- (n28.west);
\draw[-] (n29.east) -- (n30.west);
\draw[-] (n29.east) -- (n32.west);
\draw[-] (n29.east) -- (n34.west);
\draw[-] (n31.east) -- (n26.west);
\draw[-] (n31.east) -- (n28.west);
\draw[-] (n31.east) -- (n30.west);
\draw[-] (n31.east) -- (n32.west);
\draw[-] (n31.east) -- (n34.west);
\draw[-] (n33.east) -- (n26.west);
\draw[-] (n33.east) -- (n28.west);
\draw[-] (n33.east) -- (n30.west);
\draw[-] (n33.east) -- (n32.west);
\draw[-] (n33.east) -- (n34.west);
\draw[-] (n35.east) -- (n36.west);
\draw[-] (n27.east) -- (n36.west);
\draw[-] (n29.east) -- (n36.west);
\draw[-] (n31.east) -- (n36.west);
\draw[-] (n33.east) -- (n36.west);

\draw[-] (n3.east) -- (n4.west);
\draw[-] (n11.east) -- (n4.west);
\draw[-] (n11.east) -- (n12.west);
\draw[-] (n15.east) -- (n4.west);
\draw[-] (n23.east) -- (n4.west);
\draw[-] (n3.east) -- (n8.west);
\draw[-] (n11.east) -- (n8.west);
\draw[-] (n15.east) -- (n8.west);
\draw[-] (n23.east) -- (n8.west);
\draw[-] (n3.east) -- (n16.west);
\draw[-] (n11.east) -- (n16.west);
\draw[-] (n15.east) -- (n16.west);
\draw[-] (n23.east) -- (n16.west);
\draw[-] (n3.east) -- (n24.west);
\draw[-] (n11.east) -- (n24.west);
\draw[-] (n15.east) -- (n24.west);
\draw[-] (n23.east) -- (n24.west);
\draw[-] (n3.east) -- (n20.west);
\draw[-] (n11.east) -- (n20.west);
\draw[-] (n15.east) -- (n20.west);
\draw[-] (n19.east) -- (n20.west);
\draw[-] (n23.east) -- (n20.west);

\draw[-] (n18.east) -- (n11.west);
\draw[-] (n18.east) -- (n15.west);
\draw[-] (n18.east) -- (n19.west);
\draw[-] (n18.east) -- (n23.west);
\draw[-] (n2.east) -- (n3.west);
\draw[-] (n10.east) -- (n3.west);
\draw[-] (n14.east) -- (n3.west);
\draw[-] (n22.east) -- (n3.west);

\draw[-] (n2.east) -- (n11.west);
\draw[-] (n10.east) -- (n11.west);
\draw[-] (n14.east) -- (n11.west);
\draw[-] (n22.east) -- (n15.west);
\draw[-] (n10.east) -- (n15.west);
\draw[-] (n14.east) -- (n15.west);
\draw[-] (n2.east) -- (n19.west);
\draw[-] (n10.east) -- (n19.west);
\draw[-] (n14.east) -- (n19.west);
\draw[-] (n22.east) -- (n19.west);
\draw[-] (n2.east) -- (n23.west);
\draw[-] (n10.east) -- (n23.west);
\draw[-] (n22.east) -- (n23.west);

\draw[-] (n7.east) -- (n4.west);
\draw[-] (n7.east) -- (n8.west);
\draw[-] (n7.east) -- (n16.west);
\draw[-] (n7.east) -- (n20.west);
\draw[-] (n7.east) -- (n24.west);
\draw[-] (n6.east) -- (n3.west);
\draw[-] (n6.east) -- (n11.west);
\draw[-] (n6.east) -- (n15.west);
\draw[-] (n6.east) -- (n23.west);

\draw[-] (n13.east) -- (n18.west);
\draw[-] (n1.east) -- (n2.west);
\draw[-] (n5.east) -- (n2.west);
\draw[-] (n17.east) -- (n2.west);
\draw[-] (n21.east) -- (n2.west);
\draw[-] (n1.east) -- (n6.west);
\draw[-] (n5.east) -- (n6.west);
\draw[-] (n17.east) -- (n6.west);
\draw[-] (n21.east) -- (n6.west);
\draw[-] (n1.east) -- (n10.west);
\draw[-] (n5.east) -- (n10.west);
\draw[-] (n17.east) -- (n10.west);
\draw[-] (n21.east) -- (n10.west);
\draw[-] (n1.east) -- (n14.west);
\draw[-] (n5.east) -- (n14.west);
\draw[-] (n17.east) -- (n14.west);
\draw[-] (n21.east) -- (n14.west);
\draw[-] (n1.east) -- (n18.west);
\draw[-] (n5.east) -- (n18.west);
\draw[-] (n17.east) -- (n18.west);
\draw[-] (n21.east) -- (n18.west);
\draw[-] (n1.east) -- (n22.west);
\draw[-] (n5.east) -- (n22.west);
\draw[-] (n17.east) -- (n22.west);
\draw[-] (n21.east) -- (n22.west);

\draw[-] (n13.east) -- (n14.west);
\draw[-] (n13.east) -- (n22.west);
\draw[-] (n14.east) -- (n7.west);
\draw[-] (n22.east) -- (n7.west);

\draw[-] (n9.east) -- (n2.west);
\draw[-] (n9.east) -- (n6.west);
\draw[-] (n9.east) -- (n10.west);
\draw[-] (n2.east) -- (n7.west);
\draw[-] (n10.east) -- (n7.west);
\draw[-] (n6.east) -- (n7.west);

\node[draw=none, anchor=east] at (n1.west) {V1};
\node[draw=none, anchor=east] at (n5.west) {V2};
\node[draw=none, anchor=east] at (n9.west) {V3};
\node[draw=none, anchor=east] at (n13.west) {V4};
\node[draw=none, anchor=east] at (n17.west) {V5};
\node[draw=none, anchor=east] at (n21.west) {V6};

\node[draw=none, anchor=south] at (n1.north) {r1};
\node[draw=none, anchor=south] at (n2.north) {r2};
\node[draw=none, anchor=south] at (n3.north) {r3};
\node[draw=none, anchor=south] at (n4.north) {r4};
\node[draw=none, anchor=south] at (n25.north) {r5};
\node[draw=none, anchor=south] at (n26.north) {r6};

\draw[thick, -, purple] ([yshift=-0.7cm]n21.west) -- ([yshift=-0.7cm, xshift=0.5cm]n22.east);
\node[draw=none, purple] at ([yshift=-0.95cm]n21) {propose};
\node[draw=none, purple] at ([yshift=-0.95cm]n22) {decide};

\draw[thick, -, orange] ([yshift=-0.7cm, xshift=-0.5cm]n23.west) -- ([yshift=-0.7cm, xshift=0.5cm]n24.east);
\node[draw=none, orange] at ([yshift=-0.95cm]n23) {propose};
\node[draw=none, orange] at ([yshift=-0.95cm]n24) {decide};

\draw[thick, -, violet] ([yshift=-0.7cm, xshift=-0.5cm]n35.west) -- ([yshift=-0.7cm]n36.east);
\node[draw=none, violet] at ([yshift=-0.95cm]n35) {propose};
\node[draw=none, violet] at ([yshift=-0.95cm]n36) {decide};

\draw[thick, -, teal] ([yshift=-1.35cm, xshift=-0.5cm]n22.west) -- ([yshift=-1.35cm, xshift=0.5cm]n23.east);
\node[draw=none, teal] at ([yshift=-1.6cm]n22) {propose};
\node[draw=none, teal] at ([yshift=-1.6cm]n23) {decide};

\draw[thick, -, brown] ([yshift=-1.35cm, xshift=-0.5cm]n24.west) -- ([yshift=-1.35cm, xshift=0.5cm]n35.east);
\node[draw=none, brown] at ([yshift=-1.6cm]n24) {propose};
\node[draw=none, brown] at ([yshift=-1.6cm]n35) {decide};

\draw[thick, -, magenta] ([yshift=-1.35cm, xshift=-0.5cm]n36.west) -- ([yshift=-1.35cm]n36.east) node[draw=none, pos=1, right, magenta, xshift=-0.2cm] {\dots};
\node[draw=none, magenta] at ([yshift=-1.6cm]n36) {propose};

\end{tikzpicture}
\vspace{-0.5cm}
\caption{\textbf{DAG Structure}. The DAG shows six validators (V1-V6) across six rounds (r1-r6). The colored lines at the bottom indicate waves, where each wave consists of a propose round followed by a decide round. Blocks are colored based on the propose round they were created during. Multiple waves execute in parallel through pipelining, with a new wave beginning each round.}
\label{fig:rounds}
\end{figure}

\chapter{Protocol}\label{chap:protocol}

\section{Leader Slots}

Leader slots in \lmysticeti are represented by a pair (validator, round) and can either be empty or contain the validator's proposal for the respective round. If the chosen validator is Byzantine, they may have equivocated, in which case the slot contains more than one block.

Multiple leader slots can exist per round, enabling parallel leader block proposals. Each slot is one of three states: \textbf{committed}, \textbf{skipped}, or \textbf{undecided}. All slots begin in the undecided state, and the protocol's objective is to classify them as either committed or skipped. The commit state indicates that the slot's block should be included in the total ordering, while the skip state allows the protocol to exclude slots from crashed or Byzantine validators. Crucially, the undecided state forces subsequent leader slots to wait, preventing unsafe commitments. This is discussed in more detail later. The set up mirrors related work \cite{mysticeti,mahimahi}.

There is a ranking amongst all leader slots as within each round, leaders are ranked. Validators wait for the leader proposals for up to $2 \cdot \Delta$ before timing out and proceeding to the next round. The assignment of validators to slots follows a round robin schedule, presented in Algorithm~\ref{alg:helper}, ensuring that every validator will be assigned to the highest-ranked leader slot exactly once during a set of $5f + 1$ consecutive rounds. This rotation favors a fair, uniform leadership distribution while maintaining the predictable ordering necessary for consensus. Both the timeout parameter and rotation schedule contribute to the formal liveness guarantee in Section~\ref{sec:liveness}.

The number of leader slots per round is configurable through the protocol given in Algorithm~\ref{alg:lmys}, allowing the system to be altered based on the expected network behavior and desired throughput characteristics. Increasing the number of leaders in a round may improve throughput as more blocks have the possibility of being committed in a round. However, it could also slow down consensus as validators need to wait for all leaders before producing blocks in the next round.

\section{Decision Rule}

Before detailing the decision rule, some terminology must be established. A block \textbf{supports} a leader block if it includes the leader block as a parent (a reference). A block \textbf{blames} a leader block if it does not include the leader block as a parent. Note that parents can only be from the previous round, unlike in earlier consensus protocols \cite{dag-rider,hashgraph} where a block's parents could be from any prior round.

An \textbf{anchor}, with relation to a leader block, is the first committed or undecided leader block outside of the current leader block's wave. Formally, let $B$ be a leader block in round $r$. The search for $B$'s anchor begins in round $r + 2$ as $B$'s wave ends with round $r + 1$. Leader blocks are checked in rank order from highest to lowest until one is either undecided or committed. If the leaders in a round are exhausted, then the search continues with the next higher round. Anchors may not exist. For example, if round $r + 1$ is the highest round in the DAG, then no anchors will be found for leaders in round $r$ as the wave ends the DAG. 

\lmysticeti's decision rule contains three steps and is illustrated in full in Figure~\ref{fig:decision}.

\subsection{Step 1: Direct Decision}

The direct decision rule is acted upon right after a wave ends. Let $B$ be a leader block in round $r$. $B$ is committed if it has $4f + 1$ supports or $B$ is skipped if it has $4f + 1$ blames. In Figure~\ref{fig:decision}, L3a and L2b are committed and L3b is skipped by the direct rule. If $B$ fails to be decided by the direct decision rule, the indirect decision rule commences.

Note that crash faults are handled effortlessly within this step. If a leader crashes and fails to propose a block, the corresponding leader slot remains empty. Since validators can only support blocks that exist, an empty slot will naturally accumulate $4f + 1$ blames, causing it to be skipped. No special crash detection is required.

\subsection{Step 2: Indirect Decision}

The indirect decision rule starts by locating an anchor for $B$. If the anchor is undecided or cannot be found, then $B$ stays undecided. Otherwise, the causal history of the anchor is investigated. If there are $2f + 1$ supports for $B$ in the history of the anchor, then $B$ is committed. If there are not enough supports, then $B$ is skipped. In Figure~\ref{fig:decision}, L1b is committed, L1a is skipped, and L2a is kept undecided by the indirect decision rule. By leveraging the predefined ranking of leader slots, any indirect decision is carried forward through its anchor.

\subsection{Step 3: Sequencing}

Steps 1 and 2 start at the lowest ranked leader block in the highest round and proceed to the highest ranked leader block in that round before progressing to the lowest ranked leader block in the previous round. This continues until a leader block is encountered which is already in the total ordering. In Figure~\ref{fig:decision}, assuming that no leader blocks are currently in the total ordering, the order of leader blocks processed is L4b, L3b, L3a, L2b, L2a, L1b, L1a. 

Validators must then sequence the processed leader slots. They begin by ordering leaders by round then rank, the reverse of the processing order. Skipped leader blocks are not included and the sequence terminates at the first undecided leader block. In Figure~\ref{fig:decision}, after the decision rules are applied, the ordering proceeds as follows: L1a is skipped and excluded, L1b is committed and included, and the undecided L2a terminates the series. Thus, only L1b is in the leader sequence.

The unique sub-DAG of each (committed) leader block in the leader sequence is then linearized.

\captionsetup[subfigure]{justification=centering, skip=-0.75cm}
\begin{figure}[htbp]
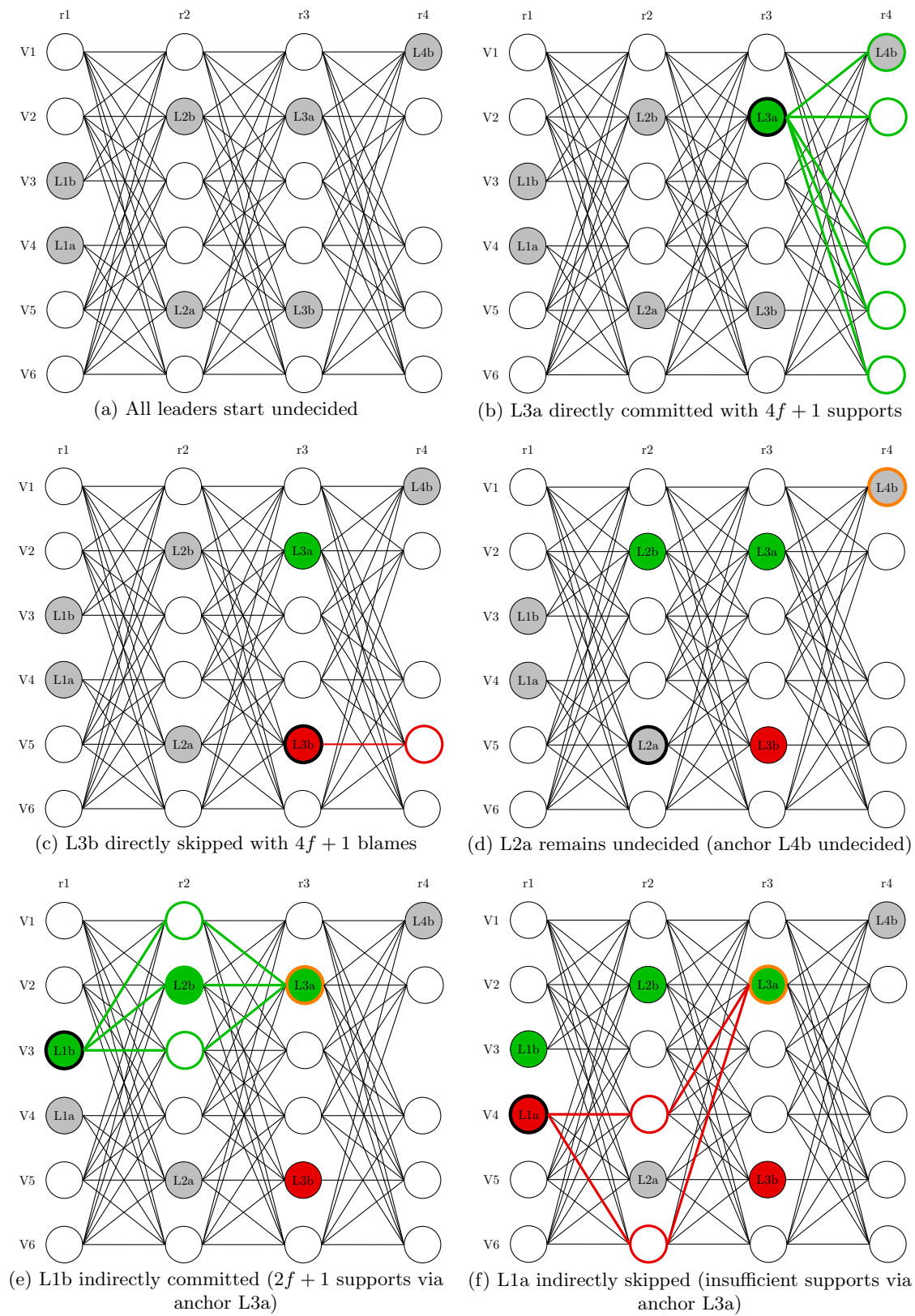

\centering

\begin{subfigure}[t]{0.49\textwidth}
  \centering
  \begin{tikzpicture}[scale=0.6, transform shape, node distance=0.75cm and 2.25cm,
    every node/.style={draw, minimum size=1cm}, circle]

\node[] (n1) at (0, 0) {};
\node[] (n2) [right=of n1] {};
\node[] (n3) [right=of n2] {};
\node[fill=gray!50] (n4) [right=of n3] {L4b};
\node[] (n5) [below=of n1] {};
\node[fill=gray!50] (n6) [right=of n5] {L2b};
\node[fill=gray!50] (n7) [right=of n6] {L3a};
\node[] (n8) [right=of n7] {};
\node[fill=gray!50] (n9) [below=of n5] {L1b};
\node[] (n10) [right=of n9] {};
\node[] (n11) [right=of n10] {};
\node[draw=none] (n12) [right=of n11] {};
\node[fill=gray!50] (n13) [below=of n9] {L1a};
\node[] (n14) [right=of n13] {};
\node[] (n15) [right=of n14] {};
\node[] (n16) [right=of n15] {};
\node[] (n17) [below=of n13] {};
\node[fill=gray!50] (n18) [right=of n17] {L2a};
\node[fill=gray!50] (n19) [right=of n18] {L3b};
\node[] (n20) [right=of n19] {};
\node[] (n21) [below=of n17] {};
\node[] (n22) [right=of n21] {};
\node[] (n23) [right=of n22] {};
\node[] (n24) [right=of n23] {};

\draw[-] (n3.east) -- (n4.west);
\draw[-] (n11.east) -- (n4.west);
\draw[-] (n15.east) -- (n4.west);
\draw[-] (n23.east) -- (n4.west);
\draw[-] (n3.east) -- (n8.west);
\draw[-] (n11.east) -- (n8.west);
\draw[-] (n15.east) -- (n8.west);
\draw[-] (n23.east) -- (n8.west);
\draw[-] (n3.east) -- (n16.west);
\draw[-] (n11.east) -- (n16.west);
\draw[-] (n15.east) -- (n16.west);
\draw[-] (n23.east) -- (n16.west);
\draw[-] (n3.east) -- (n24.west);
\draw[-] (n11.east) -- (n24.west);
\draw[-] (n15.east) -- (n24.west);
\draw[-] (n23.east) -- (n24.west);
\draw[-] (n3.east) -- (n20.west);
\draw[-] (n11.east) -- (n20.west);
\draw[-] (n15.east) -- (n20.west);
\draw[-] (n19.east) -- (n20.west);
\draw[-] (n23.east) -- (n20.west);

\draw[-] (n18.east) -- (n11.west);
\draw[-] (n18.east) -- (n15.west);
\draw[-] (n18.east) -- (n19.west);
\draw[-] (n18.east) -- (n23.west);
\draw[-] (n2.east) -- (n3.west);
\draw[-] (n10.east) -- (n3.west);
\draw[-] (n14.east) -- (n3.west);
\draw[-] (n22.east) -- (n3.west);

\draw[-] (n2.east) -- (n11.west);
\draw[-] (n10.east) -- (n11.west);
\draw[-] (n14.east) -- (n11.west);
\draw[-] (n22.east) -- (n15.west);
\draw[-] (n10.east) -- (n15.west);
\draw[-] (n14.east) -- (n15.west);
\draw[-] (n2.east) -- (n19.west);
\draw[-] (n10.east) -- (n19.west);
\draw[-] (n14.east) -- (n19.west);
\draw[-] (n22.east) -- (n19.west);
\draw[-] (n2.east) -- (n23.west);
\draw[-] (n10.east) -- (n23.west);
\draw[-] (n22.east) -- (n23.west);

\draw[-] (n7.east) -- (n4.west);
\draw[-] (n7.east) -- (n8.west);
\draw[-] (n7.east) -- (n16.west);
\draw[-] (n7.east) -- (n20.west);
\draw[-] (n7.east) -- (n24.west);
\draw[-] (n6.east) -- (n3.west);
\draw[-] (n6.east) -- (n11.west);
\draw[-] (n6.east) -- (n15.west);
\draw[-] (n6.east) -- (n23.west);

\draw[-] (n13.east) -- (n18.west);
\draw[-] (n1.east) -- (n2.west);
\draw[-] (n5.east) -- (n2.west);
\draw[-] (n17.east) -- (n2.west);
\draw[-] (n21.east) -- (n2.west);
\draw[-] (n1.east) -- (n6.west);
\draw[-] (n5.east) -- (n6.west);
\draw[-] (n17.east) -- (n6.west);
\draw[-] (n21.east) -- (n6.west);
\draw[-] (n1.east) -- (n10.west);
\draw[-] (n5.east) -- (n10.west);
\draw[-] (n17.east) -- (n10.west);
\draw[-] (n21.east) -- (n10.west);
\draw[-] (n1.east) -- (n14.west);
\draw[-] (n5.east) -- (n14.west);
\draw[-] (n17.east) -- (n14.west);
\draw[-] (n21.east) -- (n14.west);
\draw[-] (n1.east) -- (n18.west);
\draw[-] (n5.east) -- (n18.west);
\draw[-] (n17.east) -- (n18.west);
\draw[-] (n21.east) -- (n18.west);
\draw[-] (n1.east) -- (n22.west);
\draw[-] (n5.east) -- (n22.west);
\draw[-] (n17.east) -- (n22.west);
\draw[-] (n21.east) -- (n22.west);

\draw[-] (n13.east) -- (n14.west);
\draw[-] (n13.east) -- (n22.west);
\draw[-] (n14.east) -- (n7.west);
\draw[-] (n22.east) -- (n7.west);


\draw[-] (n9.east) -- (n2.west);
\draw[-] (n9.east) -- (n6.west);
\draw[-] (n9.east) -- (n10.west);
\draw[-] (n2.east) -- (n7.west);
\draw[-] (n10.east) -- (n7.west);
\draw[-] (n6.east) -- (n7.west);


\input{assets/decide/setup}

\end{tikzpicture}
  \caption{All leaders start undecided}
  \label{fig:decide1}
\end{subfigure}
\hfill
\begin{subfigure}[t]{0.49\textwidth}
  \centering
  \begin{tikzpicture}[scale=0.6, transform shape, node distance=0.75cm and 2.25cm,
    every node/.style={draw, minimum size=1cm}, circle]

\node[] (n1) at (0, 0) {};
\node[] (n2) [right=of n1] {};
\node[] (n3) [right=of n2] {};
\node[fill=gray!50, draw=green!75!black, very thick] (n4) [right=of n3] {L4b};
\node[] (n5) [below=of n1] {};
\node[fill=gray!50] (n6) [right=of n5] {L2b};
\node[fill=green!75!black, draw=black, ultra thick] (n7) [right=of n6] {L3a};
\node[draw=green!75!black, very thick] (n8) [right=of n7] {};
\node[fill=gray!50] (n9) [below=of n5] {L1b};
\node[] (n10) [right=of n9] {};
\node[] (n11) [right=of n10] {};
\node[draw=none] (n12) [right=of n11] {};
\node[fill=gray!50] (n13) [below=of n9] {L1a};
\node[] (n14) [right=of n13] {};
\node[] (n15) [right=of n14] {};
\node[draw=green!75!black, very thick] (n16) [right=of n15] {};
\node[] (n17) [below=of n13] {};
\node[fill=gray!50] (n18) [right=of n17] {L2a};
\node[fill=gray!50] (n19) [right=of n18] {L3b};
\node[draw=green!75!black, very thick] (n20) [right=of n19] {};
\node[] (n21) [below=of n17] {};
\node[] (n22) [right=of n21] {};
\node[] (n23) [right=of n22] {};
\node[draw=green!75!black, very thick] (n24) [right=of n23] {};

\draw[-] (n3.east) -- (n4.west);
\draw[-] (n11.east) -- (n4.west);
\draw[-] (n15.east) -- (n4.west);
\draw[-] (n23.east) -- (n4.west);
\draw[-] (n3.east) -- (n8.west);
\draw[-] (n11.east) -- (n8.west);
\draw[-] (n15.east) -- (n8.west);
\draw[-] (n23.east) -- (n8.west);
\draw[-] (n3.east) -- (n16.west);
\draw[-] (n11.east) -- (n16.west);
\draw[-] (n15.east) -- (n16.west);
\draw[-] (n23.east) -- (n16.west);
\draw[-] (n3.east) -- (n24.west);
\draw[-] (n11.east) -- (n24.west);
\draw[-] (n15.east) -- (n24.west);
\draw[-] (n23.east) -- (n24.west);
\draw[-] (n3.east) -- (n20.west);
\draw[-] (n11.east) -- (n20.west);
\draw[-] (n15.east) -- (n20.west);
\draw[-] (n19.east) -- (n20.west);
\draw[-] (n23.east) -- (n20.west);

\draw[-] (n18.east) -- (n11.west);
\draw[-] (n18.east) -- (n15.west);
\draw[-] (n18.east) -- (n19.west);
\draw[-] (n18.east) -- (n23.west);
\draw[-] (n2.east) -- (n3.west);
\draw[-] (n10.east) -- (n3.west);
\draw[-] (n14.east) -- (n3.west);
\draw[-] (n22.east) -- (n3.west);

\draw[-] (n2.east) -- (n11.west);
\draw[-] (n10.east) -- (n11.west);
\draw[-] (n14.east) -- (n11.west);
\draw[-] (n22.east) -- (n15.west);
\draw[-] (n10.east) -- (n15.west);
\draw[-] (n14.east) -- (n15.west);
\draw[-] (n2.east) -- (n19.west);
\draw[-] (n10.east) -- (n19.west);
\draw[-] (n14.east) -- (n19.west);
\draw[-] (n22.east) -- (n19.west);
\draw[-] (n2.east) -- (n23.west);
\draw[-] (n10.east) -- (n23.west);
\draw[-] (n22.east) -- (n23.west);

\draw[-] (n13.east) -- (n18.west);
\draw[-] (n1.east) -- (n2.west);
\draw[-] (n5.east) -- (n2.west);
\draw[-] (n17.east) -- (n2.west);
\draw[-] (n21.east) -- (n2.west);
\draw[-] (n1.east) -- (n6.west);
\draw[-] (n5.east) -- (n6.west);
\draw[-] (n17.east) -- (n6.west);
\draw[-] (n21.east) -- (n6.west);
\draw[-] (n1.east) -- (n10.west);
\draw[-] (n5.east) -- (n10.west);
\draw[-] (n17.east) -- (n10.west);
\draw[-] (n21.east) -- (n10.west);
\draw[-] (n1.east) -- (n14.west);
\draw[-] (n5.east) -- (n14.west);
\draw[-] (n17.east) -- (n14.west);
\draw[-] (n21.east) -- (n14.west);
\draw[-] (n1.east) -- (n18.west);
\draw[-] (n5.east) -- (n18.west);
\draw[-] (n17.east) -- (n18.west);
\draw[-] (n21.east) -- (n18.west);
\draw[-] (n1.east) -- (n22.west);
\draw[-] (n5.east) -- (n22.west);
\draw[-] (n17.east) -- (n22.west);
\draw[-] (n21.east) -- (n22.west);

\draw[-] (n13.east) -- (n14.west);
\draw[-] (n13.east) -- (n22.west);
\draw[-] (n14.east) -- (n7.west);
\draw[-] (n22.east) -- (n7.west);

\draw[-] (n9.east) -- (n2.west);
\draw[-] (n9.east) -- (n6.west);
\draw[-] (n9.east) -- (n10.west);
\draw[-] (n2.east) -- (n7.west);
\draw[-] (n10.east) -- (n7.west);


\draw[-, green!75!black, very thick] (n7.east) -- (n4.west);
\draw[-, green!75!black, very thick] (n7.east) -- (n8.west);
\draw[-, green!75!black, very thick] (n7.east) -- (n16.west);
\draw[-, green!75!black, very thick] (n7.east) -- (n20.west);
\draw[-, green!75!black, very thick] (n7.east) -- (n24.west);
\draw[-] (n6.east) -- (n3.west);
\draw[-] (n6.east) -- (n11.west);
\draw[-] (n6.east) -- (n15.west);
\draw[-] (n6.east) -- (n23.west);
\draw[-] (n6.east) -- (n7.west);

\input{assets/decide/setup}

\end{tikzpicture}
  \caption{L3a directly committed with $4f+1$ supports}
  \label{fig:decide2}
\end{subfigure}

\vspace{0.05cm}

\begin{subfigure}[t]{0.49\textwidth}
  \centering
  \begin{tikzpicture}[scale=0.6, transform shape, node distance=0.75cm and 2.25cm,
    every node/.style={draw, minimum size=1cm}, circle]

\node[] (n1) at (0, 0) {};
\node[] (n2) [right=of n1] {};
\node[] (n3) [right=of n2] {};
\node[fill=gray!50] (n4) [right=of n3] {L4b};
\node[] (n5) [below=of n1] {};
\node[fill=gray!50] (n6) [right=of n5] {L2b};
\node[fill=green!75!black] (n7) [right=of n6] {L3a};
\node[] (n8) [right=of n7] {};
\node[fill=gray!50] (n9) [below=of n5] {L1b};
\node[] (n10) [right=of n9] {};
\node[] (n11) [right=of n10] {};
\node[draw=none] (n12) [right=of n11] {};
\node[fill=gray!50] (n13) [below=of n9] {L1a};
\node[] (n14) [right=of n13] {};
\node[] (n15) [right=of n14] {};
\node[] (n16) [right=of n15] {};
\node[] (n17) [below=of n13] {};
\node[fill=gray!50] (n18) [right=of n17] {L2a};
\node[fill=red!90!black, draw=black, ultra thick] (n19) [right=of n18] {L3b};
\node[draw=red!90!black, very thick] (n20) [right=of n19] {};
\node[] (n21) [below=of n17] {};
\node[] (n22) [right=of n21] {};
\node[] (n23) [right=of n22] {};
\node[] (n24) [right=of n23] {};

\draw[-] (n3.east) -- (n4.west);
\draw[-] (n11.east) -- (n4.west);
\draw[-] (n15.east) -- (n4.west);
\draw[-] (n23.east) -- (n4.west);
\draw[-] (n3.east) -- (n8.west);
\draw[-] (n11.east) -- (n8.west);
\draw[-] (n15.east) -- (n8.west);
\draw[-] (n23.east) -- (n8.west);
\draw[-] (n3.east) -- (n16.west);
\draw[-] (n11.east) -- (n16.west);
\draw[-] (n15.east) -- (n16.west);
\draw[-] (n23.east) -- (n16.west);
\draw[-] (n3.east) -- (n24.west);
\draw[-] (n11.east) -- (n24.west);
\draw[-] (n15.east) -- (n24.west);
\draw[-] (n23.east) -- (n24.west);
\draw[-] (n3.east) -- (n20.west);
\draw[-] (n11.east) -- (n20.west);
\draw[-] (n15.east) -- (n20.west);
\draw[-] (n23.east) -- (n20.west);

\draw[-] (n18.east) -- (n11.west);
\draw[-] (n18.east) -- (n15.west);
\draw[-] (n18.east) -- (n19.west);
\draw[-] (n18.east) -- (n23.west);
\draw[-] (n2.east) -- (n3.west);
\draw[-] (n10.east) -- (n3.west);
\draw[-] (n14.east) -- (n3.west);
\draw[-] (n22.east) -- (n3.west);

\draw[-] (n2.east) -- (n11.west);
\draw[-] (n10.east) -- (n11.west);
\draw[-] (n14.east) -- (n11.west);
\draw[-] (n22.east) -- (n15.west);
\draw[-] (n10.east) -- (n15.west);
\draw[-] (n14.east) -- (n15.west);
\draw[-] (n2.east) -- (n19.west);
\draw[-] (n10.east) -- (n19.west);
\draw[-] (n14.east) -- (n19.west);
\draw[-] (n22.east) -- (n19.west);
\draw[-] (n2.east) -- (n23.west);
\draw[-] (n10.east) -- (n23.west);
\draw[-] (n22.east) -- (n23.west);

\draw[-] (n13.east) -- (n18.west);
\draw[-] (n1.east) -- (n2.west);
\draw[-] (n5.east) -- (n2.west);
\draw[-] (n17.east) -- (n2.west);
\draw[-] (n21.east) -- (n2.west);
\draw[-] (n1.east) -- (n6.west);
\draw[-] (n5.east) -- (n6.west);
\draw[-] (n17.east) -- (n6.west);
\draw[-] (n21.east) -- (n6.west);
\draw[-] (n1.east) -- (n10.west);
\draw[-] (n5.east) -- (n10.west);
\draw[-] (n17.east) -- (n10.west);
\draw[-] (n21.east) -- (n10.west);
\draw[-] (n1.east) -- (n14.west);
\draw[-] (n5.east) -- (n14.west);
\draw[-] (n17.east) -- (n14.west);
\draw[-] (n21.east) -- (n14.west);
\draw[-] (n1.east) -- (n18.west);
\draw[-] (n5.east) -- (n18.west);
\draw[-] (n17.east) -- (n18.west);
\draw[-] (n21.east) -- (n18.west);
\draw[-] (n1.east) -- (n22.west);
\draw[-] (n5.east) -- (n22.west);
\draw[-] (n17.east) -- (n22.west);
\draw[-] (n21.east) -- (n22.west);

\draw[-] (n13.east) -- (n14.west);
\draw[-] (n13.east) -- (n22.west);
\draw[-] (n14.east) -- (n7.west);
\draw[-] (n22.east) -- (n7.west);

\draw[-] (n9.east) -- (n2.west);
\draw[-] (n9.east) -- (n6.west);
\draw[-] (n9.east) -- (n10.west);
\draw[-] (n2.east) -- (n7.west);
\draw[-] (n10.east) -- (n7.west);


\draw[-] (n7.east) -- (n4.west);
\draw[-] (n7.east) -- (n8.west);
\draw[-] (n7.east) -- (n16.west);
\draw[-] (n7.east) -- (n20.west);
\draw[-] (n7.east) -- (n24.west);
\draw[-] (n6.east) -- (n3.west);
\draw[-] (n6.east) -- (n11.west);
\draw[-] (n6.east) -- (n15.west);
\draw[-] (n6.east) -- (n23.west);
\draw[-] (n6.east) -- (n7.west);

\draw[-, red!90!black, thick] (n19.east) -- (n20.west);

\input{assets/decide/setup}

\end{tikzpicture}
  \caption{L3b directly skipped with $4f+1$ blames}
  \label{fig:decide3}
\end{subfigure}
\hfill
\begin{subfigure}[t]{0.49\textwidth}
  \centering
  \begin{tikzpicture}[scale=0.6, transform shape, node distance=0.75cm and 2.25cm,
    every node/.style={draw, minimum size=1cm}, circle]

\node[] (n1) at (0, 0) {};
\node[] (n2) [right=of n1] {};
\node[] (n3) [right=of n2] {};
\node[fill=gray!50, draw=orange, ultra thick] (n4) [right=of n3] {L4b};
\node[] (n5) [below=of n1] {};
\node[fill=green!75!black] (n6) [right=of n5] {L2b};
\node[fill=green!75!black] (n7) [right=of n6] {L3a};
\node[] (n8) [right=of n7] {};
\node[fill=gray!50] (n9) [below=of n5] {L1b};
\node[] (n10) [right=of n9] {};
\node[] (n11) [right=of n10] {};
\node[draw=none] (n12) [right=of n11] {};
\node[fill=gray!50] (n13) [below=of n9] {L1a};
\node[] (n14) [right=of n13] {};
\node[] (n15) [right=of n14] {};
\node[] (n16) [right=of n15] {};
\node[] (n17) [below=of n13] {};
\node[fill=gray!50, draw=black, ultra thick] (n18) [right=of n17] {L2a};
\node[fill=red!90!black] (n19) [right=of n18] {L3b};
\node[] (n20) [right=of n19] {};
\node[] (n21) [below=of n17] {};
\node[] (n22) [right=of n21] {};
\node[] (n23) [right=of n22] {};
\node[] (n24) [right=of n23] {};

\draw[-] (n3.east) -- (n4.west);
\draw[-] (n11.east) -- (n4.west);
\draw[-] (n15.east) -- (n4.west);
\draw[-] (n23.east) -- (n4.west);
\draw[-] (n3.east) -- (n8.west);
\draw[-] (n11.east) -- (n8.west);
\draw[-] (n15.east) -- (n8.west);
\draw[-] (n23.east) -- (n8.west);
\draw[-] (n3.east) -- (n16.west);
\draw[-] (n11.east) -- (n16.west);
\draw[-] (n15.east) -- (n16.west);
\draw[-] (n23.east) -- (n16.west);
\draw[-] (n3.east) -- (n24.west);
\draw[-] (n11.east) -- (n24.west);
\draw[-] (n15.east) -- (n24.west);
\draw[-] (n23.east) -- (n24.west);
\draw[-] (n3.east) -- (n20.west);
\draw[-] (n11.east) -- (n20.west);
\draw[-] (n15.east) -- (n20.west);
\draw[-] (n19.east) -- (n20.west);
\draw[-] (n23.east) -- (n20.west);

\draw[-] (n18.east) -- (n11.west);
\draw[-] (n18.east) -- (n15.west);
\draw[-] (n18.east) -- (n19.west);
\draw[-] (n18.east) -- (n23.west);
\draw[-] (n2.east) -- (n3.west);
\draw[-] (n10.east) -- (n3.west);
\draw[-] (n14.east) -- (n3.west);
\draw[-] (n22.east) -- (n3.west);

\draw[-] (n2.east) -- (n11.west);
\draw[-] (n10.east) -- (n11.west);
\draw[-] (n14.east) -- (n11.west);
\draw[-] (n22.east) -- (n15.west);
\draw[-] (n10.east) -- (n15.west);
\draw[-] (n14.east) -- (n15.west);
\draw[-] (n2.east) -- (n19.west);
\draw[-] (n10.east) -- (n19.west);
\draw[-] (n14.east) -- (n19.west);
\draw[-] (n22.east) -- (n19.west);
\draw[-] (n2.east) -- (n23.west);
\draw[-] (n10.east) -- (n23.west);
\draw[-] (n22.east) -- (n23.west);

\draw[-] (n6.east) -- (n3.west);
\draw[-] (n6.east) -- (n11.west);
\draw[-] (n6.east) -- (n15.west);
\draw[-] (n6.east) -- (n23.west);

\draw[-] (n13.east) -- (n18.west);
\draw[-] (n1.east) -- (n2.west);
\draw[-] (n5.east) -- (n2.west);
\draw[-] (n17.east) -- (n2.west);
\draw[-] (n21.east) -- (n2.west);
\draw[-] (n1.east) -- (n6.west);
\draw[-] (n5.east) -- (n6.west);
\draw[-] (n17.east) -- (n6.west);
\draw[-] (n21.east) -- (n6.west);
\draw[-] (n1.east) -- (n10.west);
\draw[-] (n5.east) -- (n10.west);
\draw[-] (n17.east) -- (n10.west);
\draw[-] (n21.east) -- (n10.west);
\draw[-] (n1.east) -- (n14.west);
\draw[-] (n5.east) -- (n14.west);
\draw[-] (n17.east) -- (n14.west);
\draw[-] (n21.east) -- (n14.west);
\draw[-] (n1.east) -- (n18.west);
\draw[-] (n5.east) -- (n18.west);
\draw[-] (n17.east) -- (n18.west);
\draw[-] (n21.east) -- (n18.west);
\draw[-] (n1.east) -- (n22.west);
\draw[-] (n5.east) -- (n22.west);
\draw[-] (n17.east) -- (n22.west);
\draw[-] (n21.east) -- (n22.west);

\draw[-] (n13.east) -- (n14.west);
\draw[-] (n13.east) -- (n22.west);
\draw[-] (n14.east) -- (n7.west);
\draw[-] (n22.east) -- (n7.west);

\draw[-] (n9.east) -- (n2.west);
\draw[-] (n9.east) -- (n6.west);
\draw[-] (n9.east) -- (n10.west);
\draw[-] (n2.east) -- (n7.west);
\draw[-] (n10.east) -- (n7.west);
\draw[-] (n6.east) -- (n7.west);

\draw[-] (n7.east) -- (n4.west);
\draw[-] (n7.east) -- (n8.west);
\draw[-] (n7.east) -- (n16.west);
\draw[-] (n7.east) -- (n20.west);
\draw[-] (n7.east) -- (n24.west);

\input{assets/decide/setup}

\end{tikzpicture}
  \caption{L2a remains undecided (anchor L4b undecided)}
  \label{fig:decide4}
\end{subfigure}

\vspace{0.05cm}

\begin{subfigure}[t]{0.49\textwidth}
  \centering
  \begin{tikzpicture}[scale=0.6, transform shape, node distance=0.75cm and 2.25cm,
    every node/.style={draw, minimum size=1cm}, circle]

\node[] (n1) at (0, 0) {};
\node[draw=green!75!black, very thick] (n2) [right=of n1] {};
\node[] (n3) [right=of n2] {};
\node[fill=gray!50] (n4) [right=of n3] {L4b};
\node[] (n5) [below=of n1] {};
\node[fill=green!75!black, draw=green!75!black, very thick] (n6) [right=of n5] {L2b};
\node[fill=green!75!black, draw=orange, ultra thick] (n7) [right=of n6] {L3a};
\node[] (n8) [right=of n7] {};
\node[fill=green!75!black, draw=black, ultra thick] (n9) [below=of n5] {L1b};
\node[draw=green!75!black, very thick] (n10) [right=of n9] {};
\node[] (n11) [right=of n10] {};
\node[draw=none] (n12) [right=of n11] {};
\node[fill=gray!50] (n13) [below=of n9] {L1a};
\node[] (n14) [right=of n13] {};
\node[] (n15) [right=of n14] {};
\node[] (n16) [right=of n15] {};
\node[] (n17) [below=of n13] {};
\node[fill=gray!50] (n18) [right=of n17] {L2a};
\node[fill=red!90!black] (n19) [right=of n18] {L3b};
\node[] (n20) [right=of n19] {};
\node[] (n21) [below=of n17] {};
\node[] (n22) [right=of n21] {};
\node[] (n23) [right=of n22] {};
\node[] (n24) [right=of n23] {};

\draw[-] (n3.east) -- (n4.west);
\draw[-] (n11.east) -- (n4.west);
\draw[-] (n15.east) -- (n4.west);
\draw[-] (n23.east) -- (n4.west);
\draw[-] (n3.east) -- (n8.west);
\draw[-] (n11.east) -- (n8.west);
\draw[-] (n15.east) -- (n8.west);
\draw[-] (n23.east) -- (n8.west);
\draw[-] (n3.east) -- (n16.west);
\draw[-] (n11.east) -- (n16.west);
\draw[-] (n15.east) -- (n16.west);
\draw[-] (n23.east) -- (n16.west);
\draw[-] (n3.east) -- (n24.west);
\draw[-] (n11.east) -- (n24.west);
\draw[-] (n15.east) -- (n24.west);
\draw[-] (n23.east) -- (n24.west);
\draw[-] (n3.east) -- (n20.west);
\draw[-] (n11.east) -- (n20.west);
\draw[-] (n15.east) -- (n20.west);
\draw[-] (n19.east) -- (n20.west);
\draw[-] (n23.east) -- (n20.west);

\draw[-] (n18.east) -- (n11.west);
\draw[-] (n18.east) -- (n15.west);
\draw[-] (n18.east) -- (n19.west);
\draw[-] (n18.east) -- (n23.west);
\draw[-] (n2.east) -- (n3.west);
\draw[-] (n10.east) -- (n3.west);
\draw[-] (n14.east) -- (n3.west);
\draw[-] (n22.east) -- (n3.west);

\draw[-] (n2.east) -- (n11.west);
\draw[-] (n10.east) -- (n11.west);
\draw[-] (n14.east) -- (n11.west);
\draw[-] (n22.east) -- (n15.west);
\draw[-] (n10.east) -- (n15.west);
\draw[-] (n14.east) -- (n15.west);
\draw[-] (n2.east) -- (n19.west);
\draw[-] (n10.east) -- (n19.west);
\draw[-] (n14.east) -- (n19.west);
\draw[-] (n22.east) -- (n19.west);
\draw[-] (n2.east) -- (n23.west);
\draw[-] (n10.east) -- (n23.west);
\draw[-] (n22.east) -- (n23.west);

\draw[-] (n6.east) -- (n3.west);
\draw[-] (n6.east) -- (n11.west);
\draw[-] (n6.east) -- (n15.west);
\draw[-] (n6.east) -- (n23.west);

\draw[-] (n13.east) -- (n18.west);
\draw[-] (n1.east) -- (n2.west);
\draw[-] (n5.east) -- (n2.west);
\draw[-] (n17.east) -- (n2.west);
\draw[-] (n21.east) -- (n2.west);
\draw[-] (n1.east) -- (n6.west);
\draw[-] (n5.east) -- (n6.west);
\draw[-] (n17.east) -- (n6.west);
\draw[-] (n21.east) -- (n6.west);
\draw[-] (n1.east) -- (n10.west);
\draw[-] (n5.east) -- (n10.west);
\draw[-] (n17.east) -- (n10.west);
\draw[-] (n21.east) -- (n10.west);
\draw[-] (n1.east) -- (n14.west);
\draw[-] (n5.east) -- (n14.west);
\draw[-] (n17.east) -- (n14.west);
\draw[-] (n21.east) -- (n14.west);
\draw[-] (n1.east) -- (n18.west);
\draw[-] (n5.east) -- (n18.west);
\draw[-] (n17.east) -- (n18.west);
\draw[-] (n21.east) -- (n18.west);
\draw[-] (n1.east) -- (n22.west);
\draw[-] (n5.east) -- (n22.west);
\draw[-] (n17.east) -- (n22.west);
\draw[-] (n21.east) -- (n22.west);

\draw[-] (n13.east) -- (n14.west);
\draw[-] (n13.east) -- (n22.west);
\draw[-] (n14.east) -- (n7.west);
\draw[-] (n22.east) -- (n7.west);

\draw[-] (n9.east) -- (n2.west);
\draw[-] (n9.east) -- (n6.west);
\draw[-] (n9.east) -- (n10.west);
\draw[-] (n2.east) -- (n7.west);
\draw[-] (n10.east) -- (n7.west);
\draw[-] (n6.east) -- (n7.west);
\draw[-, green!75!black, very thick] (n9.east) -- (n2.west);
\draw[-, green!75!black, very thick] (n9.east) -- (n6.west);
\draw[-, green!75!black, very thick] (n9.east) -- (n10.west);
\draw[-, green!75!black, very thick] (n2.east) -- (n7.west);
\draw[-, green!75!black, very thick] (n10.east) -- (n7.west);
\draw[-, green!75!black, very thick] (n6.east) -- (n7.west);

\draw[-] (n7.east) -- (n4.west);
\draw[-] (n7.east) -- (n8.west);
\draw[-] (n7.east) -- (n16.west);
\draw[-] (n7.east) -- (n20.west);
\draw[-] (n7.east) -- (n24.west);

\input{assets/decide/setup}

\end{tikzpicture}
  \caption{L1b indirectly committed ($2f+1$ supports via anchor L3a)}
  \label{fig:decide5}
\end{subfigure}
\hfill
\begin{subfigure}[t]{0.49\textwidth}
  \centering
  \begin{tikzpicture}[scale=0.6, transform shape, node distance=0.75cm and 2.25cm,
    every node/.style={draw, minimum size=1cm}, circle]

\node[] (n1) at (0, 0) {};
\node[] (n2) [right=of n1] {};
\node[] (n3) [right=of n2] {};
\node[fill=gray!50] (n4) [right=of n3] {L4b};
\node[] (n5) [below=of n1] {};
\node[fill=green!75!black] (n6) [right=of n5] {L2b};
\node[fill=green!75!black, draw=orange, ultra thick] (n7) [right=of n6] {L3a};
\node[] (n8) [right=of n7] {};
\node[fill=green!75!black] (n9) [below=of n5] {L1b};
\node[] (n10) [right=of n9] {};
\node[] (n11) [right=of n10] {};
\node[draw=none] (n12) [right=of n11] {};
\node[fill=red!90!black, draw=black, ultra thick] (n13) [below=of n9] {L1a};
\node[draw=red!90!black, very thick] (n14) [right=of n13] {};
\node[] (n15) [right=of n14] {};
\node[] (n16) [right=of n15] {};
\node[] (n17) [below=of n13] {};
\node[fill=gray!50] (n18) [right=of n17] {L2a};
\node[fill=red!90!black] (n19) [right=of n18] {L3b};
\node[] (n20) [right=of n19] {};
\node[] (n21) [below=of n17] {};
\node[draw=red!90!black, very thick] (n22) [right=of n21] {};
\node[] (n23) [right=of n22] {};
\node[] (n24) [right=of n23] {};

\draw[-] (n3.east) -- (n4.west);
\draw[-] (n11.east) -- (n4.west);
\draw[-] (n15.east) -- (n4.west);
\draw[-] (n23.east) -- (n4.west);
\draw[-] (n3.east) -- (n8.west);
\draw[-] (n11.east) -- (n8.west);
\draw[-] (n15.east) -- (n8.west);
\draw[-] (n23.east) -- (n8.west);
\draw[-] (n3.east) -- (n16.west);
\draw[-] (n11.east) -- (n16.west);
\draw[-] (n15.east) -- (n16.west);
\draw[-] (n23.east) -- (n16.west);
\draw[-] (n3.east) -- (n24.west);
\draw[-] (n11.east) -- (n24.west);
\draw[-] (n15.east) -- (n24.west);
\draw[-] (n23.east) -- (n24.west);
\draw[-] (n3.east) -- (n20.west);
\draw[-] (n11.east) -- (n20.west);
\draw[-] (n15.east) -- (n20.west);
\draw[-] (n19.east) -- (n20.west);
\draw[-] (n23.east) -- (n20.west);

\draw[-] (n18.east) -- (n11.west);
\draw[-] (n18.east) -- (n15.west);
\draw[-] (n18.east) -- (n19.west);
\draw[-] (n18.east) -- (n23.west);
\draw[-] (n2.east) -- (n3.west);
\draw[-] (n10.east) -- (n3.west);
\draw[-] (n14.east) -- (n3.west);
\draw[-] (n22.east) -- (n3.west);

\draw[-] (n2.east) -- (n11.west);
\draw[-] (n10.east) -- (n11.west);
\draw[-] (n14.east) -- (n11.west);
\draw[-] (n22.east) -- (n15.west);
\draw[-] (n10.east) -- (n15.west);
\draw[-] (n14.east) -- (n15.west);
\draw[-] (n2.east) -- (n19.west);
\draw[-] (n10.east) -- (n19.west);
\draw[-] (n14.east) -- (n19.west);
\draw[-] (n22.east) -- (n19.west);
\draw[-] (n2.east) -- (n23.west);
\draw[-] (n10.east) -- (n23.west);
\draw[-] (n22.east) -- (n23.west);

\draw[-] (n7.east) -- (n4.west);
\draw[-] (n7.east) -- (n8.west);
\draw[-] (n7.east) -- (n16.west);
\draw[-] (n7.east) -- (n20.west);
\draw[-] (n7.east) -- (n24.west);
\draw[-] (n6.east) -- (n3.west);
\draw[-] (n6.east) -- (n11.west);
\draw[-] (n6.east) -- (n15.west);
\draw[-] (n6.east) -- (n23.west);

\draw[-] (n13.east) -- (n18.west);
\draw[-] (n1.east) -- (n2.west);
\draw[-] (n5.east) -- (n2.west);
\draw[-] (n17.east) -- (n2.west);
\draw[-] (n21.east) -- (n2.west);
\draw[-] (n1.east) -- (n6.west);
\draw[-] (n5.east) -- (n6.west);
\draw[-] (n17.east) -- (n6.west);
\draw[-] (n21.east) -- (n6.west);
\draw[-] (n1.east) -- (n10.west);
\draw[-] (n5.east) -- (n10.west);
\draw[-] (n17.east) -- (n10.west);
\draw[-] (n21.east) -- (n10.west);
\draw[-] (n1.east) -- (n14.west);
\draw[-] (n5.east) -- (n14.west);
\draw[-] (n17.east) -- (n14.west);
\draw[-] (n21.east) -- (n14.west);
\draw[-] (n1.east) -- (n18.west);
\draw[-] (n5.east) -- (n18.west);
\draw[-] (n17.east) -- (n18.west);
\draw[-] (n21.east) -- (n18.west);
\draw[-] (n1.east) -- (n22.west);
\draw[-] (n5.east) -- (n22.west);
\draw[-] (n17.east) -- (n22.west);
\draw[-] (n21.east) -- (n22.west);

\draw[-] (n9.east) -- (n2.west);
\draw[-] (n9.east) -- (n6.west);
\draw[-] (n9.east) -- (n10.west);
\draw[-] (n2.east) -- (n7.west);
\draw[-] (n10.east) -- (n7.west);
\draw[-] (n6.east) -- (n7.west);

\draw[-, red!90!black, very thick] (n13.east) -- (n14.west);
\draw[-, red!90!black, very thick] (n13.east) -- (n22.west);
\draw[-, red!90!black, very thick] (n14.east) -- (n7.west);
\draw[-, red!90!black, very thick] (n22.east) -- (n7.west);

\input{assets/decide/setup}

\end{tikzpicture}
  \caption{L1a indirectly skipped (insufficient supports via anchor L3a)}
  \label{fig:decide6}
\end{subfigure}

\caption{Walkthrough of the \lmysticeti decision rule}
\label{fig:decision}
\end{figure}

\section{Early Block Production Optimization}\label{sec:optimization}

Undecided leader blocks slow down protocol performance by requiring the indirect rule, which takes additional rounds to decide a leader, and by terminating the leader sequence as described in Step 3. This optimization reduces undecided leaders by letting validators produce blocks before the timeout expires.

The standard protocol requires validators to wait up to $2 \cdot \Delta$ for all leader blocks before timing out and broadcasting a block for the next round. When a leader is slow, validators may need to wait the full timeout. Since validators do not proceed in exact lockstep but enter rounds at different times, some validators will timeout and produce blocks for the next round before others who remain waiting on the slow leader.

The optimization allows validators to produce blocks early after receiving $2f + 1$ blames for a slow leader from other validators. With $2f + 1$ blames, it is impossible for the slow leader to achieve the $4f + 1$ supports required for direct commitment. At least $f + 1$ of the blames originate from honest validators, leaving only $3f$ honest validators and $f$ dishonest validators, who can equivocate, to produce blocks supporting the slow leader. However, $3f + f = 4f$ supports is insufficient for direct commitment.

Consider a scenario where a validator has received all the necessary information to produce the next block except for a single current leader block. Under the standard protocol, this validator must wait the full timeout period despite being otherwise ready. With the optimization, once $2f + 1$ blames are seen, the validator can immediately produce the block. The slow leader will either be directly skipped if it accumulates $4f + 1$ blames or remain undecided, but neither scenario impacts the safety of the protocol.

This optimization reduces latency by encouraging direct skips over undecided outcomes after the direct decision rule. The benefits are especially helpful for crashed leaders, which are more common than Byzantine faults in practice \cite{mysticeti}. When a leader crashes and will never produce a block, only $2f + 1$ validators must wait the full timeout before producing blames, while the remaining validators will proceed immediately afterwards.

The optimization preserves safety guarantees as slow leaders that eventually produce valid blocks remain included in the total ordering. If a slow leader eventually produces a block, it will be included because the next block that the validator produces will link to it and eventually find its way into the causal history of a committed leader, ensuring no valid transactions are lost. Figure~\ref{fig:optimization} illustrates the full process of the optimization.

\captionsetup[subfigure]{justification=centering, skip=-0.25cm}
\begin{figure}[htbp]
\centering

\begin{subfigure}[t]{0.32\textwidth}
    \centering
    \begin{tikzpicture}[scale=0.8, transform shape, node distance=0.75cm and 2.25cm,
    every node/.style={draw, minimum size=1cm}, circle]

\node[] (n3) at (0, 0) {};
\node[draw=none] (n4) [right=of n3] {};
\node[] (n7) [below=of n3] {};
\node[draw=none] (n8) [right=of n7] {};
\node[] (n11) [below=of n7] {};
\node[dashed] (n12) [right=of n11] {?};
\node[] (n15) [below=of n11] {};
\node[draw=none] (n16) [right=of n15] {};
\node[dotted] (n19) [below=of n15] {L};
\node[draw=none] (n20) [right=of n19] {};
\node[] (n23) [below=of n19] {};
\node[draw=none] (n24) [right=of n23] {};

\node[draw=none, anchor=east] at (n3.west) {V1};
\node[draw=none, anchor=east] at (n7.west) {V2};
\node[draw=none, anchor=east] at (n11.west) {V3};
\node[draw=none, anchor=east] at (n15.west) {V4};
\node[draw=none, anchor=east] at (n19.west) {V5};
\node[draw=none, anchor=east] at (n23.west) {V6};

\end{tikzpicture}
    \caption{V3 waits for slow leader V5}
    \label{fig:opti1}
\end{subfigure}
\hfill
\begin{subfigure}[t]{0.32\textwidth}
    \centering
    \begin{tikzpicture}[scale=0.8, transform shape, node distance=0.75cm and 2.25cm,
    every node/.style={draw, minimum size=1cm}, circle]

\node[] (n3) at (0, 0) {};
\node[] (n4) [right=of n3] {};
\node[] (n7) [below=of n3] {};
\node[] (n8) [right=of n7] {};
\node[] (n11) [below=of n7] {};
\node[dashed] (n12) [right=of n11] {?};
\node[] (n15) [below=of n11] {};
\node[draw=none] (n16) [right=of n15] {};
\node[dotted] (n19) [below=of n15] {L};
\node[draw=none] (n20) [right=of n19] {};
\node[] (n23) [below=of n19] {};
\node[] (n24) [right=of n23] {};

\draw[-] (n3.east) -- (n4.west);
\draw[-] (n7.east) -- (n4.west);
\draw[-] (n11.east) -- (n4.west);
\draw[-] (n23.east) -- (n4.west);
\draw[-] (n3.east) -- (n8.west);
\draw[-] (n7.east) -- (n8.west);
\draw[-] (n11.east) -- (n8.west);
\draw[-] (n15.east) -- (n8.west);
\draw[-] (n23.east) -- (n8.west);
\draw[-] (n3.east) -- (n24.west);
\draw[-] (n11.east) -- (n24.west);
\draw[-] (n15.east) -- (n24.west);
\draw[-] (n23.east) -- (n24.west);
\draw[-] (n7.east) -- (n24.west);

\node[draw=none, anchor=east] at (n3.west) {V1};
\node[draw=none, anchor=east] at (n7.west) {V2};
\node[draw=none, anchor=east] at (n11.west) {V3};
\node[draw=none, anchor=east] at (n15.west) {V4};
\node[draw=none, anchor=east] at (n19.west) {V5};
\node[draw=none, anchor=east] at (n23.west) {V6};

\end{tikzpicture}
    \caption{V1, V2, V6 timeout and blame V5}
    \label{fig:opti2}
\end{subfigure}
\hfill
\begin{subfigure}[t]{0.32\textwidth}
    \centering
    \begin{tikzpicture}[scale=0.8, transform shape, node distance=0.75cm and 2.25cm,
    every node/.style={draw, minimum size=1cm}, circle]

\node[] (n3) at (0, 0) {};
\node[] (n4) [right=of n3] {};
\node[] (n7) [below=of n3] {};
\node[] (n8) [right=of n7] {};
\node[] (n11) [below=of n7] {};
\node[] (n12) [right=of n11] {};
\node[] (n15) [below=of n11] {};
\node[draw=none] (n16) [right=of n15] {};
\node[dotted] (n19) [below=of n15] {L};
\node[draw=none] (n20) [right=of n19] {};
\node[] (n23) [below=of n19] {};
\node[] (n24) [right=of n23] {};

\draw[-] (n3.east) -- (n4.west);
\draw[-] (n7.east) -- (n4.west);
\draw[-] (n11.east) -- (n4.west);
\draw[-] (n23.east) -- (n4.west);
\draw[-] (n3.east) -- (n8.west);
\draw[-] (n7.east) -- (n8.west);
\draw[-] (n11.east) -- (n8.west);
\draw[-] (n15.east) -- (n8.west);
\draw[-] (n23.east) -- (n8.west);
\draw[-] (n3.east) -- (n12.west);
\draw[-] (n7.east) -- (n12.west);
\draw[-] (n11.east) -- (n12.west);
\draw[-] (n15.east) -- (n12.west);
\draw[-] (n23.east) -- (n12.west);
\draw[-] (n3.east) -- (n24.west);
\draw[-] (n11.east) -- (n24.west);
\draw[-] (n15.east) -- (n24.west);
\draw[-] (n23.east) -- (n24.west);
\draw[-] (n7.east) -- (n24.west);

\node[draw=none, anchor=east] at (n3.west) {V1};
\node[draw=none, anchor=east] at (n7.west) {V2};
\node[draw=none, anchor=east] at (n11.west) {V3};
\node[draw=none, anchor=east] at (n15.west) {V4};
\node[draw=none, anchor=east] at (n19.west) {V5};
\node[draw=none, anchor=east] at (n23.west) {V6};

\end{tikzpicture}
    \caption{V3 proceeds after observing $2f+1$ blames}
    \label{fig:opti3}
\end{subfigure}

\vspace{0.5cm}

\begin{subfigure}[t]{0.32\textwidth}
    \centering
    \begin{tikzpicture}[scale=0.8, transform shape, node distance=0.75cm and 2.25cm,
    every node/.style={draw, minimum size=1cm}, circle]

\node[] (n3) at (0, 0) {};
\node[] (n4) [right=of n3] {};
\node[] (n7) [below=of n3] {};
\node[] (n8) [right=of n7] {};
\node[] (n11) [below=of n7] {};
\node[] (n12) [right=of n11] {};
\node[] (n15) [below=of n11] {};
\node[draw=none] (n16) [right=of n15] {};
\node[fill=gray!50] (n19) [below=of n15] {L};
\node[draw=none] (n20) [right=of n19] {};
\node[] (n23) [below=of n19] {};
\node[] (n24) [right=of n23] {};

\draw[-] (n3.east) -- (n4.west);
\draw[-] (n7.east) -- (n4.west);
\draw[-] (n11.east) -- (n4.west);
\draw[-] (n23.east) -- (n4.west);
\draw[-] (n3.east) -- (n8.west);
\draw[-] (n7.east) -- (n8.west);
\draw[-] (n11.east) -- (n8.west);
\draw[-] (n15.east) -- (n8.west);
\draw[-] (n23.east) -- (n8.west);
\draw[-] (n3.east) -- (n12.west);
\draw[-] (n7.east) -- (n12.west);
\draw[-] (n11.east) -- (n12.west);
\draw[-] (n15.east) -- (n12.west);
\draw[-] (n23.east) -- (n12.west);
\draw[-] (n3.east) -- (n24.west);
\draw[-] (n11.east) -- (n24.west);
\draw[-] (n15.east) -- (n24.west);
\draw[-] (n23.east) -- (n24.west);
\draw[-] (n7.east) -- (n24.west);

\node[draw=none, anchor=east] at (n3.west) {V1};
\node[draw=none, anchor=east] at (n7.west) {V2};
\node[draw=none, anchor=east] at (n11.west) {V3};
\node[draw=none, anchor=east] at (n15.west) {V4};
\node[draw=none, anchor=east] at (n19.west) {V5};
\node[draw=none, anchor=east] at (n23.west) {V6};

\end{tikzpicture}
    \caption{V5's block arrives late}
    \label{fig:opti4}
\end{subfigure}
\hfill
\begin{subfigure}[t]{0.32\textwidth}
    \centering
    \begin{tikzpicture}[scale=0.8, transform shape, node distance=0.75cm and 2.25cm,
    every node/.style={draw, minimum size=1cm}, circle]

\node[] (n3) at (0, 0) {};
\node[] (n4) [right=of n3] {};
\node[] (n7) [below=of n3] {};
\node[] (n8) [right=of n7] {};
\node[] (n11) [below=of n7] {};
\node[] (n12) [right=of n11] {};
\node[] (n15) [below=of n11] {};
\node[] (n16) [right=of n15] {};
\node[fill=red!90!black] (n19) [below=of n15] {L};
\node[] (n20) [right=of n19] {};
\node[] (n23) [below=of n19] {};
\node[] (n24) [right=of n23] {};

\draw[-] (n3.east) -- (n4.west);
\draw[-] (n7.east) -- (n4.west);
\draw[-] (n11.east) -- (n4.west);
\draw[-] (n23.east) -- (n4.west);
\draw[-] (n3.east) -- (n8.west);
\draw[-] (n7.east) -- (n8.west);
\draw[-] (n11.east) -- (n8.west);
\draw[-] (n15.east) -- (n8.west);
\draw[-] (n23.east) -- (n8.west);
\draw[-] (n3.east) -- (n12.west);
\draw[-] (n7.east) -- (n12.west);
\draw[-] (n11.east) -- (n12.west);
\draw[-] (n15.east) -- (n12.west);
\draw[-] (n23.east) -- (n12.west);
\draw[-] (n3.east) -- (n16.west);
\draw[-] (n7.east) -- (n16.west);
\draw[-] (n11.east) -- (n16.west);
\draw[-] (n15.east) -- (n16.west);
\draw[-] (n23.east) -- (n16.west);
\draw[-] (n3.east) -- (n20.west);
\draw[-] (n7.east) -- (n20.west);
\draw[-] (n11.east) -- (n20.west);
\draw[-] (n15.east) -- (n20.west);
\draw[-] (n23.east) -- (n20.west);
\draw[-] (n3.east) -- (n24.west);
\draw[-] (n11.east) -- (n24.west);
\draw[-] (n15.east) -- (n24.west);
\draw[-] (n23.east) -- (n24.west);
\draw[-] (n7.east) -- (n24.west);

\draw[-, red!90!black, thick] (n19.east) -- (n20.west);

\node[draw=none, anchor=east] at (n3.west) {V1};
\node[draw=none, anchor=east] at (n7.west) {V2};
\node[draw=none, anchor=east] at (n11.west) {V3};
\node[draw=none, anchor=east] at (n15.west) {V4};
\node[draw=none, anchor=east] at (n19.west) {V5};
\node[draw=none, anchor=east] at (n23.west) {V6};

\end{tikzpicture}
    \caption{V5 directly skipped with insufficient support}
    \label{fig:opti5}
\end{subfigure}
\hfill
\begin{subfigure}[t]{0.32\textwidth}
    \centering
    \begin{tikzpicture}[scale=0.8, transform shape, node distance=0.75cm and 2.25cm,
    every node/.style={draw, minimum size=1cm}, circle]

\node[] (n3) at (0, 0) {};
\node[] (n4) [right=of n3] {};
\node[] (n7) [below=of n3] {};
\node[] (n8) [right=of n7] {};
\node[] (n11) [below=of n7] {};
\node[] (n12) [right=of n11] {};
\node[] (n15) [below=of n11] {};
\node[] (n16) [right=of n15] {};
\node[fill=gray!50] (n19) [below=of n15] {L};
\node[] (n20) [right=of n19] {};
\node[] (n23) [below=of n19] {};
\node[] (n24) [right=of n23] {};

\draw[-] (n3.east) -- (n4.west);
\draw[-] (n7.east) -- (n4.west);
\draw[-] (n11.east) -- (n4.west);
\draw[-] (n23.east) -- (n4.west);
\draw[-] (n3.east) -- (n8.west);
\draw[-] (n7.east) -- (n8.west);
\draw[-] (n11.east) -- (n8.west);
\draw[-] (n15.east) -- (n8.west);
\draw[-] (n23.east) -- (n8.west);
\draw[-] (n3.east) -- (n12.west);
\draw[-] (n7.east) -- (n12.west);
\draw[-] (n11.east) -- (n12.west);
\draw[-] (n15.east) -- (n12.west);
\draw[-] (n23.east) -- (n12.west);
\draw[-] (n3.east) -- (n16.west);
\draw[-] (n7.east) -- (n16.west);
\draw[-] (n11.east) -- (n16.west);
\draw[-] (n15.east) -- (n16.west);
\draw[-] (n23.east) -- (n16.west);
\draw[-] (n3.east) -- (n20.west);
\draw[-] (n7.east) -- (n20.west);
\draw[-] (n11.east) -- (n20.west);
\draw[-] (n15.east) -- (n20.west);
\draw[-] (n19.east) -- (n20.west);
\draw[-] (n23.east) -- (n20.west);
\draw[-] (n3.east) -- (n24.west);
\draw[-] (n11.east) -- (n24.west);
\draw[-] (n15.east) -- (n24.west);
\draw[-] (n23.east) -- (n24.west);
\draw[-] (n7.east) -- (n24.west);

\draw[dashed, red!90!black, thick] (n19.east) -- (n12.west);

\node[draw=none, anchor=east] at (n3.west) {V1};
\node[draw=none, anchor=east] at (n7.west) {V2};
\node[draw=none, anchor=east] at (n11.west) {V3};
\node[draw=none, anchor=east] at (n15.west) {V4};
\node[draw=none, anchor=east] at (n19.west) {V5};
\node[draw=none, anchor=east] at (n23.west) {V6};

\end{tikzpicture}
    \caption{Without optimization: V5 remains undecided}
    \label{fig:opti6}
\end{subfigure}

\caption{Walkthrough of the early block production optimization}
\label{fig:optimization}
\end{figure}
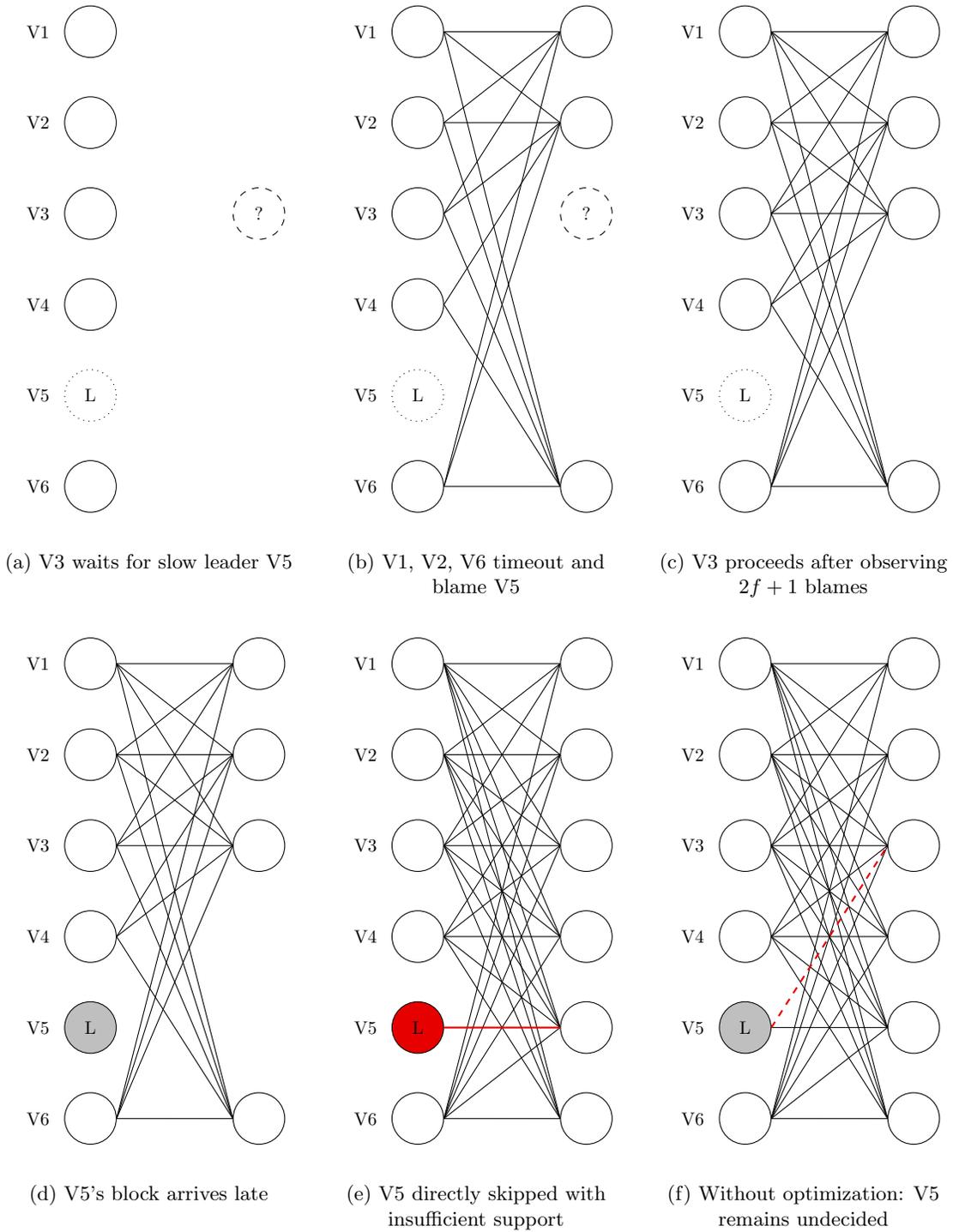

\section{Algorithms}\label{sec:algos}

\lmysticeti is formally presented with three algorithms that outline the consensus protocol. First, Algorithm~\ref{alg:helper} provides essential helper functions that implement common operations used in many DAG-based consensus protocols \cite{mysticeti, bullshark, narwhal, dag-rider}. Next, Algorithm~\ref{alg:decider} defines the Decider instance, which every validator creates for every leader slot. Finally, Algorithm~\ref{alg:lmys} presents the main \lmysticeti protocol that orchestrates the overall consensus process and is invoked whenever a new block is received.

\clearpage

Algorithm~\ref{alg:helper} establishes helper functions required by the consensus protocol. Of note, \linebreak \textsc{GetLeaderBlocks} accounts for potential leader equivocation by returning all blocks authored by the input leader and \textsc{GetPredefinedLeader} implements a round robin leader schedule. Under this schedule, each validator begins as the lowest-ranked leader and incrementally increases in rank until reaching the highest position, after which that validator is excluded from being a leader until the subsequent cycle. This uniform rotation among validators formalizes a deterministic ordering based on round and rank.

\begin{algorithm}[ht]
\caption{Helper Functions}\label{alg:helper}
\begin{algorithmic}[1]

\State \texttt{$V$} \Comment{The set of validators}
\Statex

\Function{GetDecisionBlocks}{$w$}
    \State $r_{decision} \gets $\Call{DecisionRound}{$w$}
    \State \Return $DAG[r_{decision}]$
\EndFunction
\Statex

\Function{GetPredefinedLeader}{$r, rank$} \Comment{Rank 0 is highest; round-robin schedule}
    \State \Return $V[(r + rank) \bmod |V|]$ 
\EndFunction
\Statex

\Function{GetLeaderBlocks}{$w, rank$} \Comment{Leader may equivocate with multiple blocks}
    \State $r_{propose} \gets $\Call{ProposeRound}{$w$}
    \State $leader \gets $\Call{GetPredefinedLeader}{$r_{propose}, rank$}
    \State \Return $\{ b \in DAG[r_{propose}] : b.author = leader\}$
\EndFunction
\Statex

\Function{IsSupport}{$b_{support}, b_{leader}$}
    \State \Return $b_{leader} \in b_{support}.parents$
\EndFunction
\Statex

\Function{Link}{$b_{old}, b_{new}$}
    \State \Return $\exists$ a sequence of $k \in \mathbb{N}$ blocks $b_1, \ldots, b_k$ s.t.
    \Statex \hspace{2em} $b_1 = b_{old} \land b_k = b_{new} \land \forall j \in [2, k] : b_j \in \bigcup_{r \geq 1} DAG[r] \land b_{j-1} \in b_j.parents$
\EndFunction

\end{algorithmic}
\end{algorithm}

Algorithm~\ref{alg:decider} implements the core decision-making logic for consensus participants to evaluate a leader slot. Each Decider instance is parameterized by wave and leader offsets, enabling the protocol to pipeline waves by running them in parallel. \textsc{SupportedLeader} and \textsc{SkippedLeader} implement the $4f + 1$ threshold that ensures safety. \textsc{TryDirectDecide} attempts a direct decision by checking whether a leader has sufficient support or blame. \textsc{TryIndirectDecide} searches the DAG to find an anchor, sees if that anchor is committed, and if so, determines whether the current leader has a thick link to the anchor. \textsc{ThickLink} verifies the indirect decision rule by counting decision blocks that both support the current leader and are connected to the anchor, returning true when this count meets the $2f + 1$ threshold.

\begin{algorithm}[ht]
\caption{Decider Instance}\label{alg:decider}
\begin{algorithmic}[1]

\State waveOffset \Comment{The first parameter of the Decider (i)}
\State leaderOffset \Comment{The second parameter of the Decider (l)}
\State waveLength $= 2$
\Statex

\Function{WaveNumber}{$r$}
    \State \Return $(r - \text{waveOffset}) / \text{waveLength}$ 
\EndFunction
\Statex

\Function{ProposeRound}{$w$}
    \State \Return $(w * \text{waveLength}) + \text{waveOffset}$ 
\EndFunction
\Statex

\Function{DecisionRound}{$w$}
    \State \Return \Call{ProposeRound}{$w$}$ + (\text{waveLength} - 1)$
\EndFunction
\Statex

\Function{SupportedLeader}{$w, b_{leader}$}
    \State $B_{decision} \gets$ \Call{GetDecisionBlocks}{$w$}
    \State \Return $|\{ b' \in B_{decision} : $ \Call{IsSupport}{$b', b_{leader}$}$ \}| \geq 4f + 1$
\EndFunction
\Statex

\Function{SkippedLeader}{$w, b_{leader}$}
    \State $B_{decision} \gets$ \Call{GetDecisionBlocks}{$w$}
    \State \Return $|\{ b' \in B_{decision} : \neg$\Call{IsSupport}{$b', b_{leader}$}$ \}| \geq 4f + 1$
\EndFunction
\Statex

\Function{TryDirectDecide}{$w$}
    \State $B_{leader} \gets$ \Call{GetLeaderBlocks}{$w$, leaderOffset}
    \For{$b_{leader} \in B_{leader}$}
        \If{\Call{SkippedLeader}{$w, b_{leader}$}} \Return \texttt{Skip}
        \EndIf
        \If{\Call{SupportedLeader}{$w, b_{leader}$}} \Return \texttt{Commit}$(b_{leader})$
        \EndIf
    \EndFor
    \State \Return $\bot$
\EndFunction
\Statex

\Function{ThickLink}{$b_{anchor}, b_{leader}$}
    \State $w \gets$ \Call{WaveNumber}{$b_{leader}.round$}
    \State $B_{decision} \gets$ \Call{GetDecisionBlocks}{$w$}
    \State \Return $|\{ b \in B_{decision} : $ \Call{IsSupport}{$b, b_{leader}$} $\land$ \Call{Link}{$b, b_{anchor}$}$ \}| \geq 2f + 1$
\EndFunction
\Statex

\Function{TryIndirectDecide}{$w, S$}
    \State $r_{decision} \gets $\Call{DecisionRound}{$w$}
    \State $s_{anchor} \gets$ first $s \in S$ s.t. $r_{decision} < s.round \land s \neq$ \texttt{Skip}
    \If{$s_{anchor} = \texttt{Commit}(b_{anchor})$}
        \State $B_{leader} \gets$ \Call{GetLeaderBlocks}{$w$, leaderOffset}
        \If{$\exists \; b_{leader} \in B_{leader}$ s.t. \Call{ThickLink}{$b_{anchor}, b_{leader}$}} \Return \texttt{Commit}$(b_{leader})$
        \Else \; \Return \texttt{Skip}
        \EndIf
    \EndIf
    \State \Return $\bot$
\EndFunction

\end{algorithmic}
\end{algorithm}

\clearpage

Algorithm~\ref{alg:lmys} orchestrates the overall consensus process by systematically examining potential decisions. The protocol operates with a \texttt{waveLength} of 2 rounds and supports between 1 and $4f + 1$, the number of honest validators, leaders per round as specified by the \texttt{leadersPerRound} parameter. \textsc{TryDecide} iterates through rounds and ranks in reverse order, as rank 0 is the highest rank in a round. It creates appropriate Decider instances and attempts both direct and, as needed, indirect decisions for each leader. \textsc{ExtendCommitSequence} processes the resulting decision sequence, keeping committed blocks until an undecided leader is encountered and invoking the linearization process to produce the total ordering output.

\begin{algorithm}[ht]
\caption{\lmysticeti}\label{alg:lmys}
\begin{algorithmic}[1]

\State leadersPerRound \Comment{A number between 1 and $4f + 1$}
\State waveLength $= 2$
\Statex

\Function{TryDecide}{$r_{committed}, r_{highest}$}
    \State $S \gets [ \; ]$
    \For{$r \gets [r_{highest}$ \textbf{down to} $r_{committed} + 1$]}
        \For{$l \gets \text{[leadersPerRound} - 1$ \textbf{down to} $0$]}
            \State $i \gets r \; \bmod$ waveLength
            \State $D \gets$ \Call{Decider}{$i, l$}
            \State $w \gets D.$\Call{WaveNumber}{$r$}
            \State $status \gets D.$\Call{TryDirectDecide}{$w$}
            \If{$status = \bot$} $status \gets D.$\Call{TryIndirectDecide}{$w, S$}
            \EndIf
            \State $S \gets status \parallel S$
        \EndFor
    \EndFor
    \State \Return $S$
\EndFunction
\Statex

\Function{ExtendCommitSequence}{$r_{committed}, r_{highest}$}
    \State $S \gets$ \Call{TryDecide}{$r_{committed}, r_{highest}$}
    \State $S_{commit} \gets [ \; ]$
    \For{$status \in S$}
        \If{$status = \bot$} \textbf{break}
        \EndIf
        \If{$status = \texttt{Commit}(b_{leader})$} $S_{commit} \gets S_{commit} \parallel b_{leader}$
        \EndIf
    \EndFor
    \State \Return \Call{LinearizeSubDags}{$S_{commit}$}
\EndFunction

\end{algorithmic}
\end{algorithm}

\chapter{Proofs}\label{chap:proofs}

For correctness, both safety and liveness, as defined in Section~\ref{sec:goals}, must be guaranteed. Safety ensures all honest participants reach identical consensus states, while liveness establishes that there is eventual progress toward agreement. Solving BAB is equivalent to guaranteeing safety and liveness.

\section{Safety}\label{sec:security}

\begin{lemma}\label{lem:safety1}
    There will never be a leader block which an honest participant directly commits while another honest participant directly skips.
\end{lemma}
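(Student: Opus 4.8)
The plan is to argue by contradiction using quorum intersection, the tool the paper develops for $5f+1$ networks. Suppose some honest participant $P_1$ directly commits a leader block $B$ from round $r$ while another honest participant $P_2$ directly skips the very same $B$. Both decisions are taken from the same decision round $r_{decision} = r + 1$, since the decision round is fixed by $B$'s wave and does not depend on the observer. By the direct decision rule, $P_1$'s local DAG must contain $4f + 1$ round-$(r+1)$ blocks that \emph{support} $B$ (list $B$ among their parents), while $P_2$'s local DAG must contain $4f + 1$ round-$(r+1)$ blocks that \emph{blame} $B$ (omit $B$ from their parents).

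First I would recast both thresholds in terms of distinct authoring validators rather than raw blocks, so that each validator is counted at most once. This is the step I expect to be the main obstacle: I must rule out that Byzantine equivocation inflates either count by letting a single faulty author contribute several supporting (or several blaming) blocks to the tally. I would resolve this by reading the $4f + 1$ threshold as a validator quorum — it is exactly the number of honest validators — and by using the fact that an honest validator signs exactly one block per round, so each honest author contributes at most one block to either side. Let $Q_S$ be the set of authors of the supporting blocks witnessed by $P_1$ and $Q_L$ the set of authors of the blaming blocks witnessed by $P_2$; each is a quorum of at least $4f + 1$ validators drawn from the $5f + 1$ total.

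Next I would apply quorum intersection in the $5f + 1$ setting. Any two quorums of size $4f + 1$ overlap in at least $(4f + 1) + (4f + 1) - (5f + 1) = 3f + 1$ validators, so $|Q_S \cap Q_L| \geq 3f + 1$. Since at most $f$ validators are Byzantine, at least $(3f + 1) - f = 2f + 1 \geq 1$ of the shared validators are honest.

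Finally I would extract the contradiction from any honest validator $v \in Q_S \cap Q_L$ (there are at least $2f + 1$ of them). Supporting $B$ and blaming $B$ are complementary properties of a single block: a round-$(r+1)$ block either references $B$ among its parents or it does not, so no block can do both. Because $v$ is honest it authors exactly one block in round $r + 1$, which therefore cannot simultaneously witness $v$'s membership in $Q_S$ (via a supporting block) and in $Q_L$ (via a blaming block). This is impossible, so the assumed simultaneous direct commit and direct skip cannot occur, establishing the lemma.
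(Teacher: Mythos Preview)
Your proof is correct and is essentially the same quorum-counting contradiction as the paper's, just phrased via intersection rather than disjoint union: the paper subtracts the $f$ Byzantine from each $4f+1$ side to get two \emph{disjoint} honest sets of size $3f+1$ (disjoint for exactly the reason you spell out, that an honest validator's unique round-$(r+1)$ block cannot both support and blame $B$), summing to $6f+2 > 4f+1$ honest participants; you instead intersect the two $4f+1$-quorums and find $\geq 2f+1$ honest validators forced to do both. These are dual formulations of the same pigeonhole argument, and your version is, if anything, more explicit about the step the paper leaves implicit.
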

\begin{proof}
    Assume for the sake of contradiction that such a block exists and call it $B$. Thus, there are $4f + 1$ participants which support $B$ and $4f + 1$ participants which blame $B$. Since $f$ participants in the network are Byzantine, there are $3f + 1$ honest participants which support $B$ and a distinct set of $3f + 1$ honest participants which blame $B$. This means there are $(3f + 1) + (3f + 1) = 6f + 2$ honest participants. With the $f$ Byzantine participants, the network contains $(6f + 2) + f = 7f + 2$ total participants. This is a contradiction.
\end{proof}

\begin{lemma}\label{lem:safety2}
    There will never be a leader block which an honest participant directly skips while another honest participant indirectly commits.
\end{lemma}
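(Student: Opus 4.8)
The plan is to mirror the contradiction argument of Lemma~\ref{lem:safety1}, adapting it to the asymmetric thresholds of the two rules: a direct skip requires $4f+1$ blames, whereas an indirect commit requires only a thick link of $2f+1$ supports. First I would suppose, for contradiction, that one honest participant directly skips a leader block $B$ in some round $r$ while another honest participant indirectly commits $B$. Unwinding \textsc{SkippedLeader} and \textsc{ThickLink}, the direct skip says that $4f+1$ of the decision-round (round $r+1$) blocks blame $B$, and the indirect commit says that $2f+1$ decision-round blocks both support $B$ and link to the committed anchor.

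Next I would convert these block counts into counts of honest validators. Since at most $f$ validators are Byzantine, the $4f+1$ blaming blocks contain at least $3f+1$ authored by distinct honest validators, and the $2f+1$ supporting-and-linked blocks contain at least $f+1$ authored by distinct honest validators. The additional \textsc{Link} requirement can only shrink the set that \textsc{ThickLink} counts, so it never raises the Byzantine share above $f$, and the $f+1$ honest lower bound is safe. This mirrors the distinct-participant accounting already used in Lemma~\ref{lem:safety1}.

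The crux --- and the step I expect to be the main obstacle --- is showing that these honest blamers and honest supporters are disjoint. This rests on two observations: every honest validator authors exactly one block in round $r+1$, and each block is immutable and uniquely identified by its (author, round, hash) triple, so a single honest block cannot count as a blame of $B$ in one participant's local DAG and as a support of $B$ in another's. The two honest sets are therefore disjoint subsets of the $4f+1$ honest validators, yielding $(3f+1)+(f+1)=4f+2>4f+1$, a contradiction. I would also point out that, in contrast to the other branches of the safety proof, the committed status of the anchor is irrelevant here: the contradiction follows purely from the support/blame tally, so no cross-view agreement about the anchor needs to be established.
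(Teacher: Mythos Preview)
Your proposal is correct and follows essentially the same counting argument as the paper's own proof: both derive a contradiction from $(3f+1)+(f+1)=4f+2$ honest validators exceeding the $4f+1$ available. You are more explicit than the paper about why the honest blamer and supporter sets are disjoint and why the \textsc{Link} condition in \textsc{ThickLink} does not disturb the count, but the structure and key idea are identical.
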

\begin{proof}
    Assume for the sake of contradiction that such a block exists and call it $B$. Thus, there are $4f + 1$ participants which blame $B$ and $2f + 1$ participants which support $B$. Since $f$ participants in the network are Byzantine, there are $3f + 1$ honest participants which blame $B$ and a distinct set of $f + 1$ honest participants which support $B$. This means there are $(3f + 1) + (f + 1) = 4f + 2$ honest participants. With the $f$ Byzantine participants, the network contains $(4f + 2) + f = 5f + 2$ total participants. This is a contradiction.
\end{proof}

\begin{lemma}\label{lem:safety3}
    If at a round x, 4f + 1 blocks from distinct participants support a block B, then all blocks at future rounds ($>$ x) will link to 2f + 1 supports for B from round x.    
\end{lemma}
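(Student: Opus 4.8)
The plan is to prove the statement by induction on the round $r > x$, with a quorum-intersection argument for the base case and transitivity of $\textsc{Link}$ for the inductive step. I read ``links to $2f+1$ supports for $B$'' as: for every block $b$ at a round $r > x$, there are at least $2f+1$ distinct round-$x$ blocks, each supporting $B$, lying in the causal history of $b$ (i.e. $\textsc{Link}(s, b)$ holds for each such support $s$). The first move is to isolate the \emph{honest} supporters. Of the $4f+1$ round-$x$ blocks from distinct participants that support $B$, at most $f$ are authored by Byzantine participants, so at least $3f+1$ are authored by distinct honest participants. Since an honest participant creates exactly one block per round, each of these authors contributes a single round-$x$ block, and that block supports $B$. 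Call this author set $H$, with $|H| \geq 3f+1$; its members cannot be ``undone'' by equivocation.

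For the base case ($r = x+1$), I would invoke block validity: every valid block references $4f+1$ blocks from distinct participants of the previous round. Let $R$ be the set of round-$x$ authors referenced by a block $b$ at round $x+1$, so $|R| \geq 4f+1$. Quorum intersection in a network of $5f+1$ participants gives $|H \cap R| \geq |H| + |R| - (5f+1) \geq (3f+1) + (4f+1) - (5f+1) = 2f+1$. Because each author in $H$ has a unique block and that block supports $B$, the parents of $b$ include at least $2f+1$ distinct round-$x$ supporting blocks, so $b$ links to $2f+1$ supports directly.

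For the inductive step ($r > x+1$), I would use transitivity of $\textsc{Link}$. Any block $b$ at round $r$ references $4f+1$ parents at round $r-1$, and since $r-1 > x$, I may pick any one parent $p$ and apply the induction hypothesis: $p$ links to $2f+1$ round-$x$ supporting blocks. As $p \in b.parents$, concatenating the witnessing path from each support to $p$ with the edge $p \to b$ shows $b$ links to those same $2f+1$ supports. This closes the induction, and since every block at a round $> x$ has a chain of parents passing through round $x+1$, the property holds everywhere above $x$.

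The main obstacle is the equivocation subtlety in the base case: intersecting on participant identities alone is not sufficient, because a Byzantine author counted among the $4f+1$ supporters could also hold a second, non-supporting round-$x$ block that $b$ references instead, so that a reference to that author need not be a reference to a support. Restricting the intersection to $H$ — whose members each have exactly one, genuinely supporting block — is precisely what rules this out, and it is also why the argument needs references to be to $4f+1$ \emph{distinct authors} rather than merely distinct block hashes; with only distinct-hash references, Byzantine equivocations could flood a block's parent set and drive the support count below $2f+1$. Everything else (the two cardinality counts and the transitivity of $\textsc{Link}$) is routine.
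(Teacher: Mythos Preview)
Your proof is correct and follows essentially the same approach as the paper: a quorum-intersection argument at round $x+1$ followed by propagation through parent links for later rounds. Your base case is in fact more carefully stated than the paper's (you isolate the $3f+1$ honest supporters before intersecting, making the equivocation issue explicit rather than handling it as an afterthought), and your inductive step is a cleaner standard induction on $r$, whereas the paper instead traces a single honest-author chain from round $r$ back to round $x+1$ and invokes the base case there; the two routes are equivalent.
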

\begin{proof}
    Each block links to $4f + 1$ blocks from the previous round. For the sake of contradiction, assume that a block $B'$ in round $r$ ($> x$) does not link to $2f + 1$ supports for $B$ from round $x$. 
    \begin{caseof}
        \case{$r = x + 1$}{$B'$ links to $4f + 1$ blocks from round $x$. Since $4f + 1$ blocks in round $x$ support $B$, by the standard quorum intersection the minimum overlap between $B'$ parents and $B$ supports is $2f + 1$. Thus, the only way for $B'$ to not link to $2f + 1$ supports is if a correct participant equivocated in round $x$. This is a contradiction.}
        \case{$r > x + 1$}{$B'$ links to $4f + 1$ blocks from round $r - 1$. At least $3f + 1$ of these blocks are produced by honest participants. Honest participants always link to their own blocks, which means they will eventually link to their block from round $x + 1$. The above case proves how these blocks from round $x + 1$ link to $2f + 1$ supports for $B$. Thus, the only way for $B'$ to not link to $2f + 1$ supports for $B$ is if all of these honest participants do not link to their own block in round $x + 1$. This is a contradiction.}
    \end{caseof}
\end{proof}

As a result of Lemma \ref{lem:safety3}, we have the following corollary.

\begin{corollary}\label{cor:safety4}
    There will never be a leader block which an honest participant directly commits while another honest participant indirectly skips.
\end{corollary}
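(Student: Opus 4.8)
The plan is to derive a contradiction by invoking Lemma~\ref{lem:safety3}, which is the workhorse linking a direct commit to the supports that must appear in the causal history of any later block. Suppose, for contradiction, that some honest participant directly commits a leader block $B$ in round $r$ while another honest participant indirectly skips $B$. A direct commit means that, in the committer's DAG, the decision round $r+1$ contains $4f+1$ blocks from distinct participants that support $B$ (i.e.\ \textsc{SupportedLeader} held). This is exactly the hypothesis of Lemma~\ref{lem:safety3} with $x = r+1$, so I would start by recording that every block in a round $> r+1$ links to at least $2f+1$ supports for $B$ from round $r+1$.

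Next I would unpack what the indirect skip entails. By the indirect decision rule (Algorithm~\ref{alg:decider}, functions \textsc{TryIndirectDecide} and \textsc{ThickLink}), the skipping participant must have located a committed anchor $A$ for $B$, where $A$ lies in some round strictly greater than $r+1$, and found that \textsc{ThickLink}$(A, B)$ failed: fewer than $2f+1$ blocks of round $r+1$ both support $B$ and satisfy \textsc{Link}$(\cdot, A)$, that is, lie in $A$'s causal history.

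The crux is applying Lemma~\ref{lem:safety3} to the anchor $A$. Because $A$ is a block in a round $> r+1$, the lemma guarantees that $A$ links to at least $2f+1$ supports for $B$ from round $r+1$. By the definition of \textsc{Link} these $2f+1$ supporting blocks lie in $A$'s causal history, and since an honest validator accepts a block only after recursively verifying its entire causal history, they are present in the skipping participant's DAG as well. But these are precisely the blocks counted by \textsc{ThickLink}$(A, B)$, so the count is at least $2f+1$, meaning the anchor would \emph{commit} $B$ rather than skip it. This contradicts the assumed indirect skip and completes the argument.

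The main obstacle I anticipate is not the counting but justifying that Lemma~\ref{lem:safety3} transfers across the two participants' possibly different local views: the $4f+1$ supports witnessed by the committer must constrain the anchor seen by the skipper. I would resolve this by emphasizing that the lemma's conclusion is a \emph{structural} property of block references, which are fixed at block creation and re-verified on acceptance. Once $4f+1$ distinct participants have produced round-$(r+1)$ supports for $B$ (at least $3f+1$ of them from non-equivocating honest validators), the quorum-intersection argument underlying Lemma~\ref{lem:safety3} forces every subsequently created valid block---including $A$---to inherit the $2f+1$-support guarantee, independently of whose DAG it is evaluated in.
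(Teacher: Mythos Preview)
Your proposal is correct and follows exactly the route the paper intends: the corollary is stated immediately after Lemma~\ref{lem:safety3} with the remark ``As a result of Lemma~\ref{lem:safety3},'' and your argument is a faithful unpacking of that one-line justification---a direct commit gives $4f+1$ round-$(r+1)$ supports, Lemma~\ref{lem:safety3} forces any later block (in particular the skipper's anchor) to link to $2f+1$ of them, which makes \textsc{ThickLink} succeed and contradicts the indirect skip. Your final paragraph on why the lemma transfers across local views is a worthwhile clarification that the paper leaves implicit.
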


\begin{lemma}\label{lem:safety5}
    All honest participants who have decided on a leader block, agree on the decision.
\end{lemma}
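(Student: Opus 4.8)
The plan is to show that, for any fixed leader slot, two honest participants cannot reach opposite classifications (one a commit-type decision, the other a skip-type decision), and that any two commits name the same block. Every decision is one of four kinds: a direct commit, a direct skip, an indirect commit, or an indirect skip. First I would dispose of every mixed pair involving at least one direct decision using the results already proved: a direct commit against a direct skip is ruled out by Lemma~\ref{lem:safety1}, a direct skip against an indirect commit by Lemma~\ref{lem:safety2}, and a direct commit against an indirect skip by Corollary~\ref{cor:safety4}. The pairs that already agree (direct commit with direct or indirect commit, direct skip with direct or indirect skip) need no argument beyond observing that the decisions coincide. A counting argument in the style of Lemma~\ref{lem:safety1} additionally shows that two distinct blocks of a single slot cannot both be directly committed, so any two commits of a slot identify the same block.

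This leaves the genuinely new case: one honest participant indirectly commits $B$ while another indirectly skips it. I would handle this by strong induction on the order in which \textsc{TryDecide} processes leader slots, namely from the highest round and rank downward. The induction hypothesis is that all honest participants who have decided any slot ranked strictly above $B$ agree on that slot's decision. Since an indirect decision for $B$ is made only when $B$'s anchor is committed, and the anchor is by definition the first leader slot after $B$'s wave that is not skipped, I would argue that both participants are forced to select the \emph{same} anchor block $b_{anchor}$: every slot preceding the anchor in the search order is skipped, and by the induction hypothesis both participants agree those slots are skipped; the anchor slot itself is committed, and by the induction hypothesis (with uniqueness of the committed block per slot) both agree it commits the same $b_{anchor}$. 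Crucially, a participant who had instead found one of those earlier slots \emph{undecided} would, by the definition of \textsc{TryIndirectDecide}, have returned $\bot$ and so not yet reached any decision on $B$, contradicting the assumption that both have decided.

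With a common committed anchor in hand, I would finish by observing that \textsc{ThickLink}$(b_{anchor}, b_{leader})$ counts exactly the decision-round blocks that support $B$ and lie in the causal history of $b_{anchor}$, which is the meaning of \textsc{Link}$(b, b_{anchor})$. Because an honest participant accepts a committed anchor only after recursively retrieving its entire causal history, this count is a deterministic function of the fixed block $b_{anchor}$ and is therefore identical for both participants. Hence they evaluate the $2f + 1$ threshold test identically and reach the same indirect decision, contradicting the assumption of opposite outcomes. Combining this with the direct cases shows no leader slot can be both committed and skipped by honest participants, which is the claim.

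The main obstacle I anticipate is the common-anchor step: making precise that the transient \emph{undecided} status cannot cause two honest participants to lock onto different anchors. The care needed is to argue that by the time a participant actually decides $B$ indirectly, all slots ahead of the anchor have already resolved to skip, and resolved consistently by the induction hypothesis, so the first non-skipped slot each participant sees is the same committed anchor. It is this synchronization of the anchor, rather than the threshold arithmetic, that carries the weight of the proof.
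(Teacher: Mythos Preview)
Your proposal is correct and follows essentially the same architecture as the paper's proof: dispose of every case involving a direct decision via Lemmas~\ref{lem:safety1}, \ref{lem:safety2}, and Corollary~\ref{cor:safety4}, then handle the indirect--indirect case by a downward induction over leader slots to force both participants onto the same anchor, whose fixed causal history makes \textsc{ThickLink} deterministic. The paper frames its induction slightly differently---starting from the highest directly committed leader $B_n$ of each participant and inducting down from there---but the core mechanism (induction hypothesis $\Rightarrow$ common anchor $\Rightarrow$ same indirect verdict) is identical; your treatment of the undecided-anchor-returns-$\bot$ subtlety and the uniqueness of the committed block per slot is in fact more explicit than the paper's.
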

\begin{proof}
    Let $B_n$ and $B_m$ be the highest committed leader blocks according to participants $X$ and $Y$ respectively. Without loss of generality, let $n \leq m$. Note that leader blocks decided by $X$ which are higher than $B_n$ are direct skips which according to Lemma \ref{lem:safety1} and Lemma \ref{lem:safety2} will be consistent with $Y$'s decision. The proof continues by induction on the statement for $0 \leq i \leq n$, if both $X$ and $Y$ decide on leader block $B_i$, then they either both commit or both skip the block.

    \proofpart{Base Case $i = n$}{By definition, $X$ directly commits $B_i$ and from Lemma \ref{lem:safety1} and Corollary \ref{cor:safety4}, $Y$ will also commit $B_i$.}
    \proofpart{Induction Step}{Assuming the statement is true regarding $B_i$ for $k + 1 \leq i \leq n$, we prove it is true for $B_k$. This is done by enumerating decision possibilities.
        \begin{enumerate}
            \item If either participant directly commits $B_k$, then by Lemma \ref{lem:safety1} and Corollary \ref{cor:safety4}, the other will commit.
            \item If either participant directly skips $B_k$, then by Lemma \ref{lem:safety1} and Lemma \ref{lem:safety2}, the other will skip.
            \item Both $X$ and $Y$ indirectly decide $B_k$. Let $A_c^X$ and $A_d^Y$ be the anchors used by $X$ and $Y$ to indirectly decide $B_k$. Since $k + 1 < c \leq n$, it follows from the induction hypothesis that $A_c^X = A_d^Y$. Thus, both $X$ and $Y$ use the same anchor to decide $B_k$. The indirect decision rule solely depends on the causal history of the anchor. By using the same anchor, $X$ and $Y$ will agree on the decision for $B_k$.
        \end{enumerate}}
\end{proof}

\begin{theorem}\label{thm:safety}
    \lmysticeti maintains safety.
\end{theorem}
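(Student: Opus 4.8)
The plan is to reduce safety to the BAB properties of Section~\ref{sec:bab-def} --- concretely Agreement, Integrity, and Total Order --- and to show that each follows once we establish that any two honest participants produce consistent total orderings. Most of the decision-layer work is already done: Lemma~\ref{lem:safety5} guarantees that whenever two honest participants both decide a given leader block, they reach the same verdict (commit or skip). The remaining task is to lift this per-slot agreement first to agreement on the entire committed leader sequence, and then to agreement on the linearized output.

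First I would argue that the committed leader sequences of any two honest participants are prefix-consistent. Because leaders are processed in the fixed deterministic order of round then rank (Step 3), and because the sequence is truncated at the first undecided slot, the sequence one participant has committed is a prefix of what it will commit later once more of the DAG arrives; combined with Lemma~\ref{lem:safety5}, two honest participants then agree on every leader slot up to the shorter of their two frontiers. I would also make explicit the monotonicity that a committed or skipped slot never changes status, which is exactly what Lemmas~\ref{lem:safety1}--\ref{lem:safety2} and Corollary~\ref{cor:safety4} rule out, since they forbid any pair of honest participants from reaching opposite direct or indirect verdicts on the same slot. Hence the committed leader sequences coincide on their common prefix, and the ordering only ever grows by appending.

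Next I would handle linearization. For each committed leader block in the agreed sequence, its unique sub-DAG is ordered by the deterministic \textsc{LinearizeSubDags} routine. The key observation is that the block-validity rule forces every honest participant to recursively retrieve and verify a block's entire causal history before accepting it, so any two honest participants holding a given committed leader block hold identical sub-DAGs for it. A deterministic linearization over identical inputs yields identical outputs, and the deduplication of the ``unique parts'' against previously ordered blocks is itself determined by the agreed sequence. Chaining these per-leader orderings along the agreed prefix therefore yields total orderings that are consistent (one a prefix of the other), which is precisely Total Order; Agreement follows because any block an honest participant delivers lies in some committed sub-DAG that every honest participant eventually shares, and Integrity follows from the uniqueness of the $(\text{author}, \text{round}, \text{hash})$ identifier together with the causal-history check that bars fabricated references.

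I expect the main obstacle to be the linearization step rather than the decision step. Lemma~\ref{lem:safety5} already settles agreement on individual leader verdicts, but turning this into a single unambiguous total order requires carefully justifying that the sub-DAG handed to \textsc{LinearizeSubDags} is genuinely identical across honest participants and that the deduplication against previously ordered blocks is order-independent. This hinges entirely on the causal-history verification requirement for block validity; without it, two honest participants could linearize differently populated sub-DAGs for the same committed leader. The crux of the argument is therefore to state that invariant explicitly and show it propagates cleanly through the deterministic ordering.
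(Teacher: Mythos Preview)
Your argument is correct and at its core coincides with the paper's: invoke Lemma~\ref{lem:safety5} to get agreement on each leader's verdict, conclude that honest participants share the same committed leader sequence (up to a prefix), and then appeal to determinism of \textsc{LinearizeSubDags} on identical causal histories to obtain identical total orderings. The paper's proof is exactly this, stated in three sentences without the intermediate detail.

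The one organisational difference worth flagging is your choice to cast safety as the conjunction of BAB's Agreement, Integrity, and Total Order. In the paper these are kept separate: safety (Theorem~\ref{thm:safety}) is proved directly as ``same committed leader sequence $\Rightarrow$ same total ordering,'' and only afterwards does Theorem~\ref{thm:bab} derive the BAB properties, \emph{using} Theorem~\ref{thm:safety} for Total Order and Agreement. Your route is not circular since you re-derive those properties from the lemmas rather than invoking Theorem~\ref{thm:bab}, but it is a detour: the paper's definition of safety is simply ``honest participants reach the same final state,'' which is the Total Order conclusion alone, so Integrity and Agreement need not enter this theorem. On the other hand, the extra care you take --- making prefix-consistency and monotonicity explicit, and spelling out why causal-history verification forces identical sub-DAGs at each committed leader --- fills in steps the paper's proof leaves implicit, and would transplant cleanly into the BAB theorem where they are actually needed.
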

\begin{proof}
    The final state of consensus is a total ordering of all blocks. In \lmysticeti, the total ordering is sustained until an undecided leader block is encountered. From Lemma \ref{lem:safety5}, all honest participants agree upon the decisions of all leader blocks until an undecided leader block. The total ordering of all blocks is a deterministic algorithm run on the sequence of committed leader blocks. Since all honest participants have the same sequence of committed leader blocks, the total ordering will be the same.
\end{proof}

\section{Liveness}\label{sec:liveness}

Liveness is proved in the setting of partial synchrony, fully defined in Section~\ref{sec:network-assumptions}. This model assumes the network eventually becomes synchronous after GST, at which point messages are delivered within a known bound $\Delta$.

\begin{lemma}\label{lem:liveness1}
    All honest participants will, after GST, enter the same round within $\Delta$.
\end{lemma}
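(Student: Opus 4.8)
The plan is to argue directly from the propagation of a single block rather than to rely on a heavy induction. I would let $p$ be the first honest participant to enter round $r$ at some time $t \ge \text{GST}$, and let $B$ be the round-$r$ block that $p$ creates upon entry. The whole lemma reduces to showing that receiving $B$ is enough to pull every other honest participant into round $r$, and that this happens within $\Delta$.

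The mechanism is the block-validity rule together with post-GST synchrony. By definition $B$ references $4f+1$ distinct valid round-$(r-1)$ blocks, and before accepting $B$ an honest participant recursively retrieves and verifies this entire causal history. Since $t \ge \text{GST}$, the reliable authenticated links deliver $B$ and its history to every honest participant within $\Delta$. Once a participant holds those $4f+1$ round-$(r-1)$ references it possesses a complete quorum for round $r-1$; it can locally backfill any of its own missing blocks (each new block's first reference must be the author's own previous block, but this is pure local computation requiring no further waiting) and then form and broadcast its own round-$r$ block. Hence every honest participant enters round $r$ by time $t + \Delta$, giving the claimed $\Delta$-window.

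Two protocol features must be reconciled with this bound. First, round entry is gated not only by collecting $4f+1$ blocks but by the leader-waiting rule, where a participant may wait up to $2\cdot\Delta$ for the round's leader blocks; I would show this cannot desynchronize honest participants after GST, since either the leader block reaches all honest participants within $\Delta$ of $p$ seeing it, or the early-block optimization of Section~\ref{sec:optimization} releases them once $2f+1$ blames (at least $f+1$ of them honest, hence themselves propagating within $\Delta$) accumulate. Second, the base case must absorb the fact that before GST honest participants may sit at arbitrarily separated rounds; here the same causal-history retrieval forces the laggards to fast-forward to the highest honest round within bounded time once the first post-GST messages arrive, after which the per-round argument above applies unchanged.

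The main obstacle I anticipate is making the single-$\Delta$ step fully rigorous in the presence of up to $f$ round-$(r-1)$ references in $B$ that are authored by Byzantine participants: a lagging honest participant does not already hold such a block and must fetch it, and a Byzantine author will not serve it. Justifying that this fetch still completes inside the one $\Delta$ window — rather than incurring an extra round-trip — rests entirely on the modelling assumption that a block is delivered only in conjunction with its verifiable causal history over the reliable post-GST links (so that $p$, which does hold the block, effectively supplies it). Pinning down this assumption, confirming that honest backfilling adds no waiting, and checking that neither the $2\cdot\Delta$ leader timeout nor the optimization ever stretches the window past $\Delta$, is where I would concentrate the care.
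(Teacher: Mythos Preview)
Your core mechanism---a highest-round block propagates after GST and, via its causal history, pulls every honest participant into that round---is exactly the paper's argument. The paper's proof is three sentences: it takes the highest-round valid block sent before GST, observes that it is delivered to all by $\text{GST}+\Delta$, and asserts that upon receipt every honest participant enters that round. There is no discussion of causal-history retrieval, backfilling one's own missing blocks, the $2\cdot\Delta$ leader timeout, or Byzantine-authored references; the paper simply treats ``receive block $\Rightarrow$ enter round'' as immediate.

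Two differences worth noting. First, the paper frames it as a one-shot catch-up at $\text{GST}+\Delta$ using the highest \emph{pre}-GST block, whereas you frame it per round using the first honest \emph{post}-GST entry; yours is the stronger statement (ongoing $\Delta$-synchronization rather than a single alignment) and is actually what Lemma~\ref{lem:liveness2} needs downstream, so your framing is arguably the right one. Second, the subtleties you flag---whether fetching Byzantine-authored ancestors fits in one $\Delta$, and whether the $2\cdot\Delta$ leader wait can desynchronize---are real, but the paper does not engage with them at all. You are being more careful than the source; the paper's proof would not survive the scrutiny you are applying to your own, and in particular your worry about the leader-timeout interaction is not resolved there either.
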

\begin{proof}
    Messages sent before GST will deliver in $\Delta$ after GST commences. Thus, the valid block of the highest round that any participant sent before GST will be delivered to all participants in GST + $\Delta$. Upon receiving this block, all honest participants will enter the round.
\end{proof}

\begin{lemma}\label{lem:liveness2}
    Leader blocks from honest participants will, after GST, receive support from all honest participants.
 \end{lemma}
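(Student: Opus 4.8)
The plan is to argue by a direct timing analysis, combining the round synchronization already established in Lemma~\ref{lem:liveness1} with the $2 \cdot \Delta$ leader timeout. Fix an honest leader $p$ whose leader slot lies in some propose round $r$, and let $B$ be the block $p$ proposes there. Since support for a round-$r$ leader block can come only from blocks of round $r+1$ (parents are restricted to the immediately preceding round), the goal reduces to showing that every honest participant references $B$ when it builds its round-$(r+1)$ block.

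First I would pin down when $B$ is broadcast and when it is received. As $p$ is honest, it produces and disseminates $B$ upon entering round $r$; call this time $t_p$. After GST every message between honest parties is delivered within $\Delta$, so $B$ reaches every honest participant by $t_p + \Delta$. Next I would invoke Lemma~\ref{lem:liveness1} to control the clock skew: any honest participant $q$ enters round $r$ at some time $t_q$ with $|t_p - t_q| \le \Delta$, so in particular $t_q \ge t_p - \Delta$.

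The crux is the timeout inequality. By the protocol, $q$ waits up to $2 \cdot \Delta$ for the round-$r$ leader proposals before timing out and advancing, so $q$ does not produce its round-$(r+1)$ block before time $t_q + 2\Delta$. Since $t_q + 2\Delta \ge (t_p - \Delta) + 2\Delta = t_p + \Delta$, the block $B$ has already been delivered to $q$ by the moment $q$ is allowed to advance. Hence $q$ has $B$ in hand, includes it as a parent, and thereby supports it. Letting $q$ range over all $4f + 1$ honest participants yields support from all of them.

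The main obstacle is making the timing bookkeeping airtight: I must measure the $2 \cdot \Delta$ timeout from the correct event (each participant's own entry into round $r$) and show it exactly absorbs the $\Delta$ of entry skew from Lemma~\ref{lem:liveness1} plus the $\Delta$ of post-GST message delay. A secondary point to dispatch is that every honest participant does in fact reach and advance past round $r$, so that a round-$(r+1)$ block exists to carry the support; this follows because the $4f + 1$ honest round-$r$ blocks are themselves delivered within $\Delta$, guaranteeing the $4f + 1$ references needed to advance, with the leader timeout only ever delaying, never blocking, progress.
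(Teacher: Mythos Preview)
Your argument is correct and follows essentially the same approach as the paper: invoke Lemma~\ref{lem:liveness1} for the $\Delta$ round-entry skew, add the $\Delta$ post-GST delivery bound, and observe that the $2\cdot\Delta$ leader timeout absorbs both so that every honest participant sees the honest leader's block before advancing. Your version is more explicit with the timing inequalities ($t_q + 2\Delta \ge t_p + \Delta$) and also handles the secondary point that honest participants actually reach round $r+1$, whereas the paper's proof leaves these details implicit, but the underlying idea is identical.
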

 \begin{proof}
    By Lemma \ref{lem:liveness1}, all honest participants will enter the same round within $\Delta$ after GST. When an honest participant sends its leader block for this round, it is delivered to all participants within $\Delta$. Since the protocol sets the timeout to $2 \cdot \Delta$, any honest participant that has entered the round will receive the leader block and support it before timing out, regardless of when exactly they entered the round relative to other honest participants.
 \end{proof}

As a result of Lemma \ref{lem:liveness2}, we have the following corollary. Recall that there are $4f + 1$ honest participants.

\begin{corollary}\label{cor:liveness3}
    Leader blocks from honest participants will, after GST, be (directly) committed.
\end{corollary}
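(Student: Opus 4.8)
The plan is to derive the corollary directly from Lemma~\ref{lem:liveness2} together with the direct decision rule of Step~1, as implemented by the \textsc{SupportedLeader} predicate of Algorithm~\ref{alg:decider}. Fix an honest validator that is assigned a leader slot in some propose round $r$ occurring after GST, and let $B$ denote the block it proposes for that slot. The goal is to show that every honest participant classifies $B$ as \texttt{Commit} through the direct rule, never leaving it undecided or skipping it.

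First I would invoke Lemma~\ref{lem:liveness2} to conclude that, after GST, $B$ receives support from all $4f + 1$ honest participants. I would then translate this qualitative statement into a concrete count of supporting blocks in the decide round $r + 1$: since each honest participant authors exactly one block per round and, by the lemma, that block lists $B$ among its parents, the decide round contains at least $4f + 1$ distinct blocks satisfying \textsc{IsSupport}$(\cdot, B)$. These supports originate from distinct authors, so there is no equivocation or double counting to worry about, and the count is achieved by honest blocks alone.

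Next I would apply the direct decision rule. The \textsc{SupportedLeader} predicate returns true once at least $4f + 1$ supporting blocks appear in the decision round, and the previous step establishes exactly this threshold. Hence \textsc{TryDirectDecide} returns \texttt{Commit}$(B)$, so $B$ is directly committed. This is precisely the point where the committee size $n = 5f + 1$ is essential: the number of honest participants equals the direct-commit threshold, so the Byzantine validators are never needed to reach the quorum and can neither block nor manipulate the decision.

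The main obstacle, though a mild one, is bookkeeping around visibility and timing rather than any deep argument: I must ensure that each deciding honest participant actually observes all $4f + 1$ honest supports in its local DAG before rendering a decision. This follows because honest-to-honest links are reliable and authenticated, so every supporting block is eventually delivered, and because the $2 \cdot \Delta$ timeout (combined with Lemma~\ref{lem:liveness1}, which synchronizes honest participants to within $\Delta$) guarantees that honest validators remain in the decide round long enough to gather the supports rather than advancing prematurely. Once these supports are collected, the threshold is met independently of any Byzantine behavior, yielding a direct commitment for every honest leader block after GST.
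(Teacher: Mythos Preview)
Your proposal is correct and follows the same route as the paper: the paper simply notes that the corollary is an immediate consequence of Lemma~\ref{lem:liveness2} together with the fact that there are $4f+1$ honest participants, which is exactly the argument you spell out. Your write-up is more detailed about the algorithmic predicates and the timing bookkeeping, but the underlying idea is identical.
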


\begin{lemma}\label{lem:liveness4}
    The round robin schedule of leader block proposers ensures that in a window of $2f + 2$ rounds, there are two consecutive rounds where an honest participant is the proposer of the highest ranked leader block.
\end{lemma}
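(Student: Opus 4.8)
The plan is to reduce the statement to a purely combinatorial fact about how the $f$ Byzantine validators can be distributed across consecutive highest-ranked leader slots. First I would unfold the round-robin schedule: by \textsc{GetPredefinedLeader} in Algorithm~\ref{alg:helper}, the highest-ranked (rank $0$) leader of round $r$ is $V[r \bmod (5f+1)]$. Hence as $r$ ranges over any $2f+2$ consecutive rounds, the highest-ranked leaders are $2f+2$ consecutive entries of the cyclic sequence $V[0], V[1], \ldots, V[5f], V[0], \ldots$. Because $2f+2 \le 5f+1$ whenever $f \ge 1$, these entries are $2f+2$ \emph{distinct} validators.

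Next I would count honest slots in the window. Among $2f+2$ distinct validators at most $f$ can be Byzantine, since the adversary controls only $f$ validators in total. Therefore at least $(2f+2) - f = f+2$ of the highest-ranked leaders appearing in the window are honest. I would encode the window as a binary string $s_1 \cdots s_{2f+2}$, setting $s_j = 1$ exactly when the highest-ranked leader of the $j$-th round in the window is honest, so that the string contains at least $f+2$ ones.

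The final step is an adjacency argument. I want two consecutive rounds with honest highest-ranked leaders, i.e.\ an index $j$ with $s_j = s_{j+1} = 1$. Suppose, for contradiction, that no two ones are adjacent; then the set of positions carrying a one forms an independent set in the path on $2f+2$ vertices, whose maximum size is $\lceil (2f+2)/2 \rceil = f+1$. This contradicts having at least $f+2$ ones, so two adjacent honest slots must occur somewhere in the window.

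The argument is short, and I expect the only delicate points to be bookkeeping rather than mathematical depth. I must verify that the $2f+2$ consecutive schedule entries are genuinely distinct, which relies on $2f+2 \le 5f+1$ and hence on $f \ge 1$; the degenerate case $f = 0$ collapses to a single honest validator and should be noted separately. I must also treat the window as a \emph{linear} segment of consecutive rounds even though the underlying schedule is cyclic, so that the relevant bound is the maximum independent set of a path, namely $f+1$, rather than the smaller bound one would use for a cycle. Confirming $\lceil (2f+2)/2 \rceil = f+1$ and that $f+2 > f+1$ is the crux that forces the two consecutive honest proposers.
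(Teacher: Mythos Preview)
Your argument is correct and takes a somewhat different combinatorial route from the paper. The paper counts the $2f+1$ overlapping pairs of consecutive rounds in the window, notes that each honest highest-ranked slot lies in (about) two such pairs, and applies pigeonhole to force some pair to contain two honest slots; it also works from the weaker count of only $f+1$ honest slots. You instead count honest slots directly---obtaining the sharper bound $f+2$ from distinctness of the $2f+2$ schedule entries---and argue by contradiction via the maximum independent set of a path on $2f+2$ vertices, which is exactly $f+1$. Your route avoids a small edge issue in the paper's phrasing (endpoint slots of the window belong to only one consecutive pair, not ``exactly two''), and the tighter $f+2$ count is precisely what makes the independent-set bound bite. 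Both arguments are elementary and of comparable length; yours is a bit more self-contained and handles the $f=0$ and linear-versus-cyclic bookkeeping explicitly, while the paper's pair-counting makes the pigeonhole structure more overt.
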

\begin{proof}
    The network contains $f$ Byzantine participants. In $2f + 2$ rounds, there are $2f + 1$ sets of two consecutive rounds. Due to the schedule being round robin, in at least $f + 1$ of the rounds, an honest participant will be the proposer of the highest ranked leader block. These blocks are the highest ranked leader block in exactly two of the sets. By pigeonhole principle, one set must contain $\lceil \frac{2 * (f + 1)}{2f + 1} \rceil = 2$ honest participants proposing the highest ranked leader block.
\end{proof}

\begin{lemma}\label{lem:liveness5}
    Undecided leader blocks will eventually, after GST, be decided.
\end{lemma}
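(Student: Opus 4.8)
The plan is to show that any leader block $B$ that is currently undecided eventually acquires a \emph{committed} anchor, at which point the indirect decision rule (Step~2) must classify it as committed or skipped. Recall that $B$, say in round $r$, can only remain undecided because its anchor search either fails or halts at an undecided leader; so the whole task reduces to guaranteeing that, far enough after GST, the first non-skipped leader encountered in $B$'s anchor search is committed rather than undecided.

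First I would produce two committed ``seed'' leaders. Taking any window of $2f+2$ rounds lying entirely after both GST and round $r+2$, Lemma~\ref{lem:liveness4} yields two consecutive rounds $r'$ and $r'+1$ (with $r' \ge r+2$) whose highest-ranked (rank~$0$) leaders are honest, and Corollary~\ref{cor:liveness3} then makes both of these leaders, call them $L'$ and $L''$, directly committed. Because rank~$0$ is examined first within a round, $L''$ is the very first leader inspected in the anchor search of every leader in round $r'-1$ (whose search begins in round $r'+1$), and likewise $L'$ is the first leader inspected for every leader in round $r'-2$ (whose search begins in round $r'$). Hence every leader in the two consecutive rounds $r'-1$ and $r'-2$ obtains a committed anchor and is therefore decided.

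Next I would run a downward induction on the round index to propagate decidedness from these two seed rounds down to round $r$. The inductive claim is that all leaders in rounds $j, j+1, \dots, r'-1$ are decided, with the two seed rounds serving as the base. For the step at a round $j \le r'-3$, the anchor search for a round-$j$ leader starts in round $j+2$ and climbs upward; by the induction hypothesis every leader in rounds $j+2, \dots, r'-1$ is already decided (so none is undecided), and should the search exhaust all of these as skips it reaches round $r'$, where $L'$ is committed and examined first. In every case the first committed-or-undecided leader encountered is in fact committed, so the indirect rule decides the round-$j$ leader. Since $r \le r'-2$, the induction reaches round $r$ and decides $B$.

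The main obstacle is precisely the step I have isolated: ruling out that the anchor search stalls on an undecided leader. A single committed leader does not suffice, since it only clears the one round two below it and leaves the adjacent round possibly undecided, which could itself block lower searches; this is exactly why Lemma~\ref{lem:liveness4} is stated for \emph{two consecutive} honest rank-$0$ rounds. The delicate part of the write-up will be verifying that two consecutive fully-decided rounds are enough to seed the downward induction so that no undecided leader ever obstructs an anchor search below them.
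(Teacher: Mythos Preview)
Your proposal is correct and follows essentially the same approach as the paper's own proof: use Lemma~\ref{lem:liveness4} together with Corollary~\ref{cor:liveness3} to obtain two consecutive rounds whose rank-$0$ leaders are directly committed, treat the two rounds immediately below them as the base case, and then run a downward induction showing every earlier round's leaders acquire a committed anchor. Your write-up is in fact more explicit than the paper's in two places---you spell out why the rank-$0$ leader being examined first in the anchor search guarantees the committed seed is hit before any possibly-undecided lower-ranked leader in the same round, and you articulate why a \emph{single} committed seed would not suffice---but the skeleton is identical.
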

\begin{proof}
    Let $B$ be an undecided leader block in round $r$. By Lemma \ref{lem:liveness4}, after GST, there will be two consecutive rounds, $j$ and $j + 1$ with $j > r$, where honest participants propose the highest ranked leader block. By Corollary \ref{cor:liveness3}, these highest ranked leader blocks in $j$ and $j + 1$ will be committed. The proof continues by induction on the statement for rounds earlier than $j$, all leader blocks are decided.
    \begin{caseof}
        \proofpart{Base Case}{All undecided leader blocks in rounds $j - 1$ and $j - 2$ will be decided as they now have committed anchors in rounds $j + 1$ and $j$ respectively.}
        \proofpart{Induction Step}{For undecided leader blocks in round $i < j - 2$, $j$ is higher than the decision round of the wave that $i$ is in. Thus, by the induction hypothesis, there are no undecided leader blocks between $i$ and $j$. Hence, the leader block in round $i$ will also decided.}
    \end{caseof}
\end{proof}

\begin{theorem}\label{thm:liveness}
    \lmysticeti maintains liveness.
\end{theorem}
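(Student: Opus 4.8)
The plan is to establish liveness by showing two things: (i) the committed-leader prefix produced by \textsc{ExtendCommitSequence} grows without bound after GST, so the protocol never permanently stalls, and (ii) every block broadcast by an honest validator eventually enters the total ordering. Together with safety (Theorem~\ref{thm:safety}), these yield the Validity and Agreement guarantees of the BAB specification in Section~\ref{sec:bab-def}, which is what liveness amounts to here. The lemmas of this section are organized precisely so that (i) and (ii) follow by assembling them, so the proof is mainly a matter of combining them carefully.

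First I would establish claim (i). After GST all honest participants advance rounds together up to $\Delta$ (Lemma~\ref{lem:liveness1}), and the round-robin schedule repeatedly assigns honest validators the highest-ranked leader slot (Lemma~\ref{lem:liveness4}). By Corollary~\ref{cor:liveness3}, each such honest leader block is directly committed, so fresh commits occur infinitely often. The only mechanism that halts \textsc{ExtendCommitSequence} is the \textbf{break} at the first undecided leader; Lemma~\ref{lem:liveness5} shows every undecided leader is eventually decided, so this truncation point is only ever temporary. Hence the stable committed prefix advances past every round, giving unbounded progress.

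Next I would prove claim (ii), Validity. Fix a block $b$ broadcast by an honest validator $A$ in round $r$. Because $A$ is honest, every block it later proposes references its own previous block (the validity rule forces the first parent to be the author's own), so $b$ lies in the causal history of $A$'s block in every round $> r$. By the round-robin schedule (Lemma~\ref{lem:liveness4}), $A$ holds the highest-ranked leader slot in infinitely many rounds, and by Corollary~\ref{cor:liveness3} each such leader block of $A$ after GST is directly committed. Choosing one committed leader block of $A$ in a round $j > r$, the block $b$ sits in its sub-DAG, so \textsc{LinearizeSubDags} deterministically emits $b$ and delivers it. Agreement then follows from Theorem~\ref{thm:safety}: all honest participants output the same total ordering, so once any honest participant delivers $b$, every honest participant eventually does.

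The main obstacle is making claim (i) rigorous, specifically showing that the committed prefix keeps growing despite the truncation at the first undecided leader. One must argue not only that every undecided leader is eventually decided (Lemma~\ref{lem:liveness5}), but also that these late decisions never contradict already-committed later leaders; this is exactly what safety (Lemma~\ref{lem:safety5} and Theorem~\ref{thm:safety}) guarantees, so the stable prefix is monotone and never regresses. The supporting subtlety is that rounds must continue to advance even when some leaders are slow or Byzantine, which the $2\cdot\Delta$ timeout together with Lemma~\ref{lem:liveness1} secures. Once the prefix is shown to advance past every round, claim (ii) and the delivery of every honest block follow directly from the argument above.
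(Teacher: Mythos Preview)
Your argument for claim~(i) is essentially the paper's own proof: invoke Lemma~\ref{lem:liveness1} for round synchrony, Lemma~\ref{lem:liveness4} together with Corollary~\ref{cor:liveness3} for recurring direct commits of honest highest-ranked leaders, and Lemma~\ref{lem:liveness5} to clear the backlog of undecided leaders, concluding that the commit sequence grows without bound.

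Where you diverge is in bundling claim~(ii)---that every honest block is eventually delivered---into the liveness theorem itself. The paper treats liveness more narrowly as ``the protocol makes continuous progress'' and defers your claim~(ii) to the Validity clause of Theorem~\ref{thm:bab}. Your packaging is defensible given the informal liveness description in Section~\ref{sec:goals}, and the argument you give (the honest author's chain of self-references places $b$ in the sub-DAG of one of its own eventually-committed leader blocks) is correct and in fact more explicit than the paper's later Validity sketch. One small citation slip: the fact that a \emph{specific} honest validator $A$ occupies the highest-ranked leader slot infinitely often comes directly from the round-robin schedule described for \textsc{GetPredefinedLeader} in Algorithm~\ref{alg:helper}, not from Lemma~\ref{lem:liveness4}, which only guarantees that \emph{some} honest validator holds that slot within each $2f+2$-round window.
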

\begin{proof}
    By Lemma \ref{lem:liveness1}, all honest participants synchronize to the same round within $\Delta$ after GST. By Lemma \ref{lem:liveness4}, within any window of $2f + 2$ rounds, there exist two consecutive rounds where honest participants propose the highest ranked leader blocks. By Corollary \ref{cor:liveness3}, these honest leader blocks will be directly committed.
    
    Furthermore, by Lemma \ref{lem:liveness5}, any previously undecided leader blocks will eventually be decided once we have committed blocks from honest participants in consecutive rounds. This creates a cascading effect where the commitment of new honest blocks triggers the decision of older undecided blocks.
    
    Since we can guarantee that consecutive honest leader blocks in the highest rank are committed every $2f + 2$ rounds, and each such instance resolves all pending undecided blocks, the protocol makes continuous progress. No block remains undecided indefinitely and new blocks are regularly committed.
\end{proof}

\section{BAB}\label{sec:bab}
\lmysticeti satisfies the complete BAB specification defined in Section~\ref{sec:bab-def}. In this context, blocks serve as the messages.

\begin{theorem}\label{thm:bab}
    \lmysticeti implements Byzantine Atomic Broadcast.
\end{theorem}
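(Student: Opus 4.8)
The plan is to verify the four BAB properties from Section~\ref{sec:bab-def}—Validity, Agreement, Integrity, and Total Order—treating each proposed block as a message, where ``broadcast'' corresponds to an honest validator proposing a block and ``deliver'' corresponds to outputting that block from the total ordering. I would dispatch each property in turn, leaning on the already-established safety (Theorem~\ref{thm:safety}) and liveness (Theorem~\ref{thm:liveness}) results so that the remaining work is purely structural.

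\textbf{Total Order} is immediate from Theorem~\ref{thm:safety}: every honest participant linearizes the same sequence of committed leader blocks and therefore produces an identical total ordering, so any two honest participants that both deliver $m_1$ and $m_2$ deliver them in the same relative order. \textbf{Agreement} then follows by combining safety with liveness: if an honest participant delivers $m$, then $m$ lies in some committed prefix of its ordering; by Theorem~\ref{thm:safety} every honest participant's ordering agrees on that prefix, and by Theorem~\ref{thm:liveness} every honest participant's committed sequence keeps growing, so each one eventually reaches a prefix containing $m$ and delivers it.

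For \textbf{Validity} I would argue that a block $b$ proposed by an honest validator cannot be lost. By the reliable-link assumption every honest validator eventually receives $b$, and since each honest block references its author's own previous block together with $4f+1$ blocks from the preceding round, $b$ propagates into the causal history of all later honest blocks. By Corollary~\ref{cor:liveness3} and Lemma~\ref{lem:liveness5}, after GST honest leader blocks are committed infinitely often, so some committed leader will have $b$ in its causal history; when that leader's unique sub-DAG is linearized, $b$ is delivered. This makes precise the informal argument given for the early block production optimization in Section~\ref{sec:optimization}. Finally, \textbf{Integrity} has two parts: delivered-at-most-once holds because linearization outputs only the unique portion of each committed sub-DAG, deduplicating any block already ordered; and delivered-only-if-broadcast holds because, per the block-validity rules of Chapter~\ref{chap:overview}, an honest validator admits a block into its DAG only after verifying the author's signature and recursively validating the entire causal history, so no fabricated block can appear in the ordering.

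The main obstacle is Validity: the delicate step is showing that an honest block is guaranteed to enter the causal history of a \emph{committed} leader rather than merely being referenced somewhere in the DAG. I would make this rigorous by tracking $b$ forward through the honest validators' self-references—each honest validator always links to its own previous block, so once $b$ is received it persists in every future block of that validator—and then invoking the liveness cascade of Lemma~\ref{lem:liveness5} to locate a committing honest leader downstream of $b$. The remaining three properties are comparatively routine once safety and liveness are in hand.
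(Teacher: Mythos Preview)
Your proposal is correct and follows essentially the same approach as the paper: verify the four BAB properties one by one, reducing Total Order and Agreement to Theorem~\ref{thm:safety} and Lemma~\ref{lem:safety5}, deriving Validity from the liveness machinery (Corollary~\ref{cor:liveness3}, Lemma~\ref{lem:liveness5}) plus causal-history propagation, and obtaining Integrity from signature checks and deduplication in linearization. If anything, your treatment of Validity is more careful than the paper's---you explicitly flag and address the issue of ensuring an honest block reaches the causal history of a \emph{committed} leader, which the paper's proof leaves largely implicit.
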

\begin{proof}    
    \textbf{Validity}: If an honest participant broadcasts a block, it will be included in the DAG and processed by the consensus protocol. For leader blocks, Theorem~\ref{thm:liveness} ensures they are eventually decided (either committed or skipped). For non-leader and skipped leader blocks, they are included in the linearization process of committed leader blocks. Thus, all honest participants eventually deliver the same decision regarding all blocks in the system.
    
    \textbf{Agreement}: By Lemma~\ref{lem:safety5}, all honest participants reach identical decisions for all leader blocks. Since the total ordering is constructed deterministically from these decisions, if any honest participant delivers a block in the total ordering, all other honest participants will deliver the same block.
    
    \textbf{Integrity}: The protocol's block validation and cryptographic signatures ensure that blocks are only accepted from their genuine authors. The ordering algorithm processes each decided block exactly once, preventing duplicate delivery in the total ordering.
    
    \textbf{Total Order}: From Theorem~\ref{thm:safety}, all honest participants construct identical total orderings from the same sequence of committed leader blocks. Since all blocks are delivered according to position within these identical orderings, all honest participants deliver blocks in the same sequential order.
\end{proof}

\chapter{Implementation}\label{chap:implementation}

The implementation of the \lmysticeti protocol was written in Rust as a fork of Mysticeti \cite{mysticeti-code}. This allowed for easy integration and reuse of core components from the Mysticeti validator, significantly accelerating development. Specifically, \lmysticeti benefited from using \texttt{tokio} \cite{tokio} with TCP sockets, enabling complete control over asynchronous communication, custom protocol design, and efficient resource management. For asymmetric cryptographic operations, \linebreak \texttt{ed25519-consensus} \cite{ed25519-consensus} was used, while cryptographic hashing was handled with \texttt{blake2} \cite{rustcrypto-hashes}. Both libraries offer strong security guarantees and were well established and maintained. The write ahead log (WAL) provided a lasting record of operations that aided in reasoning about system state. For safe and efficient manipulation of buffer data while minimizing data copying and serialization overhead, the WAL was read using memory-mapped files in combination with \texttt{minibytes} \cite{minibytes}. I/O performance was further improved through the use of vectored writes \cite{dieNetWritev}, which writes to multiple memory regions in a single syscall, streamlining data persistence. By adopting these components without modification, \lmysticeti inherited the performance optimizations already present.

\section{Consensus Protocol}

\lmysticeti's implementation required several modifications to Mysticeti's consensus protocol. First, the existing quorum threshold was adjusted to reflect the updated network size of $5f + 1$ validators and a new threshold was introduced for the indirect decision rule. Second, the wave length was reduced from three to two. Third, the logic related to certificate patterns was removed entirely. Fourth, the direct decision rule was revised to trigger based on accumulated support after a wave's second round. Fifth, the early block production optimization was added. Key code snippets that represent these primary changes are examined below.

In total, the implementation of the \lmysticeti consensus protocol logic required under 350 lines of code difference. These minimal core protocol modifications underscore the reusability and well-architected nature of the Mysticeti codebase, which provided a near plug and play foundation for \lmysticeti. The \lmysticeti implementation is available as open source\footnote{\url{https://github.com/phvv/mysticeti/tree/odontoceti} (commit
e02aeba)}.

\subsection{\normalfont\textsc{SupportedLeader}}

\texttt{enough\_leader\_support} mimics the \textsc{SupportedLeader} function from Algorithm~\ref{alg:decider}. It first retrieves all blocks from the decision round, then iterates through them to count support for the input leader block. Support is determined by checking if each decision block includes the leader block's author in its reference list, which corresponds directly to the \textsc{IsSupport} function. The $4f + 1$ threshold is enforced via collecting validator stakes in \texttt{StakeAggregator<QuorumThreshold>}. If sufficient support is accumulated, the function returns true.

\begin{lstlisting}[language=Rust, basicstyle=\footnotesize\ttfamily, breaklines=true, frame=single]
fn enough_leader_support(&self,
    decision_round: RoundNumber,
    leader_block: &Data<StatementBlock>
) -> bool {
    let decision_blocks = self.block_store.get_blocks_by_round(decision_round);
    let mut support_stake_aggregator = StakeAggregator::<QuorumThreshold>::new();
    for decision_block in &decision_blocks {
        let decider = decision_block.reference().authority;
        if decision_block
            .includes().iter()
            .any(|include| include.authority == leader_block.author_round().0) {
            if support_stake_aggregator.add(decider, &self.committee) {
                return true;
            }
        }
    }
    false
}
\end{lstlisting}

\subsection{\normalfont\textsc{TryIndirectDecide}}

\texttt{decide\_leader\_from\_anchor} implements the indirect decision logic from \textsc{TryIndirectDecide}, also in Algorithm~\ref{alg:decider}, after a committed anchor has been found (lines 30-32). The function begins by retrieving all blocks proposed by the input leader using \texttt{get\_blocks\_at\_authority\_round}, which mirrors \textsc{GetLeaderBlocks} by accounting for potential Byzantine equivocations.

The code's core logic verifies the presence of a thick link between the anchor and a leader block. As in \textsc{ThickLink}, information is gathered in order to collect the required decision blocks. \texttt{potential\_supports} then finds all the decision blocks which are linked to the anchor. The remaining blocks are next checked against each leader block to see if they support it. Support is collected in \texttt{StakeAggregator<IndirectQuorumThreshold>}, which enforces the $2f + 1$ requirement.

If a leader block accumulates enough support in the anchor's causal history, the implementation returns \texttt{LeaderStatus::Commit}; otherwise, it returns \texttt{LeaderStatus::Skip}.

\begin{lstlisting}[language=Rust, basicstyle=\footnotesize\ttfamily, breaklines=true, frame=single]
fn decide_leader_from_anchor(&self,
    anchor: &Data<StatementBlock>,
    leader: AuthorityIndex,
    leader_round: RoundNumber,
) -> LeaderStatus {
    let leader_blocks = self.block_store.get_blocks_at_authority_round(leader, leader_round);
    let wave = self.wave_number(leader_round);
    let decision_round = self.decision_round(wave);
    let decision_blocks = self.block_store.get_blocks_by_round(decision_round);
    let potential_supports: Vec<_> = decision_blocks
        .iter()
        .filter(|block| self.block_store.linked(anchor, block))
        .collect();
    let mut supported_leader_blocks: Vec<_> = leader_blocks
        .into_iter()
        .filter(|_| {
            let mut indirect_support_stake_aggregator =
                StakeAggregator::<IndirectQuorumThreshold>::new();
            for block in &potential_supports {
                let authority = block.reference().authority;
                if block
                    .includes().iter()
                    .any(|include| include.authority == leader) {
                    if indirect_support_stake_aggregator.add(authority, &self.committee) {
                        return true;
                    }
                }
            }
            false
        })
        .collect();
    match supported_leader_blocks.pop() {
        Some(supported_leader_block) => LeaderStatus::Commit(supported_leader_block.clone()),
        None => LeaderStatus::Skip(leader, leader_round),
    }
}
\end{lstlisting}

\subsection{Early Block Optimization}

\texttt{force\_due\_to\_leader\_blames} implements the early block production optimization described in Section~\ref{sec:optimization}. The function determines whether a validator can proceed to produce the next block before the timeout expires by seeing if $2f + 1$ blames have been accumulated for slow leaders from the current round.

The implementation begins by retrieving the next round and verifying that it is high enough. It then identifies leaders from the current round that have not yet produced blocks using \linebreak \texttt{block\_exists\_at\_authority\_round}. For each missing leader, the function examines all blocks from the next round to count blames. Since a block blames a leader if the leader is not in the block's reference list, the function confirms that \texttt{all} references exclude the target leader.

Blames are accumulated using \texttt{StakeAggregator<IndirectQuorumThreshold>}, which enforces the $2f + 1$ requirement. The function returns true only if all missing leaders have accumulated sufficient blame, ensuring that the validator can safely proceed without waiting the full timeout.

\clearpage

\begin{lstlisting}[language=Rust, basicstyle=\footnotesize\ttfamily, breaklines=true, frame=single]
pub fn force_due_to_leader_blames(&self,
    period: u64,
) -> bool {
    let quorum_round = ...; // the next round
    if quorum_round > ... {
        let leader_round = quorum_round - 1;
        let mut leaders = self.committer.get_leaders(leader_round);
        leaders.retain(|leader| {!self
                .block_store
                .block_exists_at_authority_round(*leader, leader_round)
        });
        let quorum_blocks = self.block_store.get_blocks_by_round(quorum_round);
        for leader in leaders.into_iter() {
            let mut blame_stake_aggregator = StakeAggregator::<IndirectQuorumThreshold>::new();
            let mut reached_threshold_for_leader = false;
            for quorum_block in &quorum_blocks {
                let decider = quorum_block.reference().authority;
                if quorum_block
                    .includes().iter()
                    .all(|include| include.authority != leader)
                {
                    if blame_stake_aggregator.add(decider, &self.committee) {
                        reached_threshold_for_leader = true;
                        break;
                    }
                }
            }
            if !reached_threshold_for_leader {
                return false;
            }
        }
        true
    } else {
        false
    }
}
\end{lstlisting}

\section{Testing}\label{sec:testing}

Another advantage of building on an existing codebase was the ability to use its testing infrastructure. The simulation layer, which emulates both the \texttt{tokio} runtime and TCP networking through discrete event simulation, performed controlled, repeatable testing of wide-area network scenarios on a single machine. To accommodate \lmysticeti, the previous tests were modified to reflect its updated requirements and protocol logic, including changes to validator set-ups and DAG construction. These testing revisions were more extensive than updating the consensus logic.

Testing was conducted in four different settings to evaluate the protocol in increasingly complicated scenarios: (1) a baseline with one leader per round and no wave pipelining to establish fundamental correctness, (2) an increasing number of leaders per round from one up to the full committee size, (3) wave pipelining enabled with one leader per round, and (4) a final scenario combining multiple leaders per round with wave pipelining to verify the protocol's complete potential. These comprehensive tests validated the correctness of \lmysticeti across different modes of operation.

\subsection{\texttt{direct\_commit\_late\_call}}

The fourth testing scenario combining multiple leaders per round with wave pipelining was created specifically for \lmysticeti. \texttt{direct\_commit\_late\_call}, one of the tests, demonstrates this scenario by generating 21 rounds of blocks, before attempting consensus with wave pipelining and five leaders per round.

The test constructs a committee of six validators and configures the committer with pipelining on and five leaders per round. The \texttt{build\_dag} function generates a fully interconnected DAG across the 21 rounds. After constructing the DAG, the test calls \texttt{try\_commit} to process all accumulated blocks simultaneously. The assertion validates that the commitment sequence length matches the expected number of committed leaders across the rounds. In total, there are 100 committed leaders as five leaders are committed for 20 straight rounds. The for loop checks each committed block corresponds to the correct leader as decided by the round robin schedule from \texttt{committee.elect\_leader}, confirming that the pipelined multi-leader protocol maintains the correct leader ordering.

\begin{lstlisting}[language=Rust, basicstyle=\footnotesize\ttfamily, breaklines=true, frame=single]
fn direct_commit_late_call() {
    let committee = committee(6); // with f = 1, n = 5*1+1 = 6
    let wave_length = DEFAULT_WAVE_LENGTH;
    let number_of_leaders = committee.quorum_threshold(); // 4*1+1 = 5
    let enough_blocks = 10 * wave_length + (wave_length - 1); // 10*2+(2-1) = 21
    let mut block_writer = TestBlockWriter::new(&committee);
    build_dag(&committee, &mut block_writer, None, enough_blocks);
    let committer = UniversalCommitterBuilder::new(
        committee.clone(), block_writer.into_block_store(), test_metrics(),
    )
    .with_wave_length(wave_length)
    .with_number_of_leaders(number_of_leaders)
    .with_pipeline(true)
    .build();
    let last_committed = BlockReference::new_test(0, 0);
    let sequence = committer.try_commit(last_committed);
    assert_eq!(sequence.len(), number_of_leaders * 10 * wave_length);
    for (i, leader) in sequence.iter().enumerate() {
        if let LeaderStatus::Commit(block) = leader {
            let leader_offset = ...;
            let leader_round = ...;
            let expected = committee.elect_leader(leader_offset + leader_round);
            assert_eq!(block.author(), expected);
        } else {
            panic!(...)
        };
    }
}
\end{lstlisting}

\chapter{Evaluation}\label{chap:evaluation}

\lmysticeti was evaluated to understand how the protocol's design decisions impact real world performance. Throughout this section, \textbf{latency} refers to the duration between transaction submission by a network participant and its inclusion in a committed block, measured in milliseconds (ms), while \textbf{throughput} denotes the rate of transactions submitted to the validator network over a time period, expressed in transactions per second (tx/s).

The evaluation methodology focuses on latency measurements across varying throughput loads using Amazon Web Services (AWS). The \texttt{orchestrator} crate was used to deploy globally distributed test clusters, allowing \lmysticeti to be evaluated under realistic deployment conditions with minimal setup effort. To better support the evaluation, the \texttt{orchestrator} crate was extended to selectively target simulated validators in specified regions, improving deployment flexibility compared to the original broadcast to all model.

\section{Setup}

Validators were deployed across thirteen geographically distributed AWS regions to mimic realistic blockchain network conditions. The selected regions were Northern Virginia (us-east-1), Oregon (us-west-2), Canada (ca-central-1), Frankfurt (eu-central-1), Ireland (eu-west-1), London (eu-west-2), Paris (eu-west-3), Stockholm (eu-north-1), Mumbai (ap-south-1), Singapore (ap-southeast-1), Sydney (ap-southeast-2), Tokyo (ap-northeast-1), and Seoul (ap-northeast-2). The locations span multiple continents and time zones, introducing realistic network delays and varying connectivity patterns characteristic of globally distributed systems. Each region was provisioned with the same commodity hardware, \texttt{m5d.8xlarge} instances, providing consistent computational resources.

Client transaction generation was simulated by assigning a subset of the globally distributed instances to submit transactions at predetermined rates, creating controlled load patterns for performance analysis. All transactions were standardized at 512 bytes, with payload content remaining arbitrary since the evaluation focuses on consensus performance rather than transaction specific processing.

\section{Results}

The performance results presented in Figure~\ref{fig:latency} illustrate latency for validator networks of different sizes and fault conditions. In the 10 node and 50 node networks all validators function correctly throughout the experiment. In addition, an 11 node network with 2 crash faults was evaluated (when $f = 2$, $n = 5f + 1 = 11$). Crash faults represent validators that fail by stopping operation entirely, ceasing to send messages, propose blocks, or participate in consensus, but do not exhibit malicious behavior or send conflicting information. These crashed validators effectively become unresponsive, requiring the remaining 9 honest validators to continue consensus operations. All networks were tested under sustained throughput loads of 10,000 and 50,000 tx/s. Each configuration stabilized for several minutes before data collection commenced to ensure steady state operation. Latency measurements were then collected from a dedicated monitoring machine that was not participating in consensus for a minimum of five minutes to establish accurate metrics.

\begin{figure}[htbp]
    \centering
    \includegraphics[trim=15mm 0 15mm 0, clip, width=\textwidth]{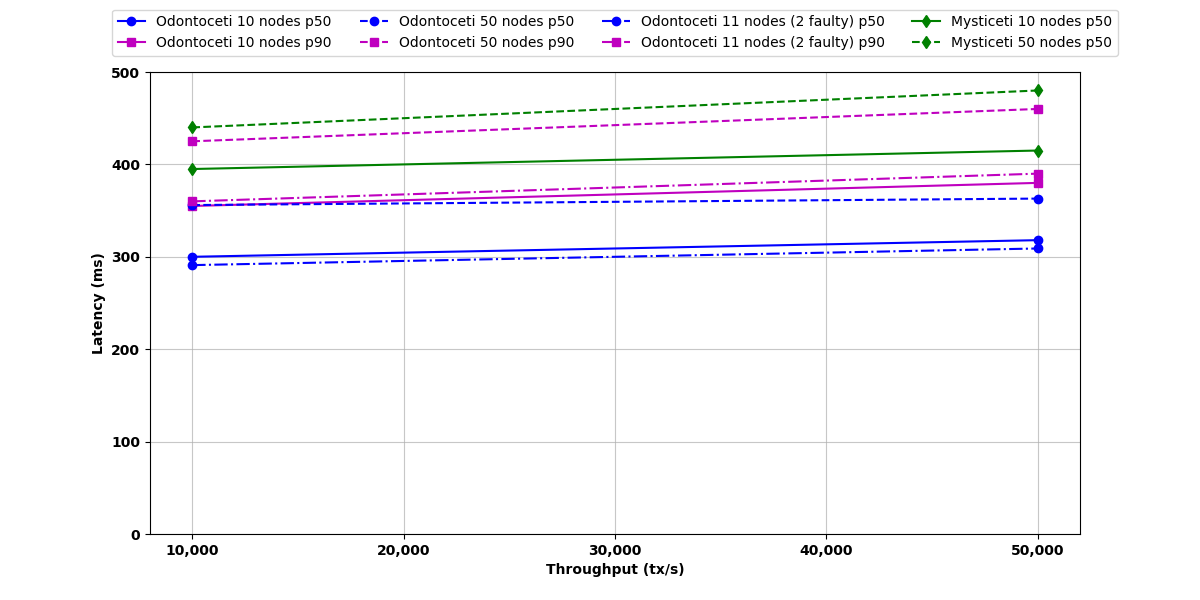}
    \caption{Graph illustrating \lmysticeti's performance under different network and throughput scenarios and in comparison against Mysticeti.}
    \label{fig:latency}
\end{figure}

\subsection{Protocol Performance}

\lmysticeti has strong speed performance across different network sizes and throughput loads in fault free conditions. For 10 node networks, the protocol has median latency of 300ms under 10,000 tx/s load and 320ms under 50,000 tx/s load, showing minimal speed change despite the five times increase in transaction volume. The 90th percentile latencies for the same configurations are 355ms and 380ms respectively, indicating reasonable tail latency behavior. Scaling to 50 node networks, \lmysticeti maintains similar performance with median latencies of 355ms and 360ms. The corresponding 90th percentile latencies are 425ms and 460ms, showing that while larger networks introduce some slow down, the performance deterioration remains manageable.

\subsection{Comparison Against Mysticeti}

To provide a meaningful performance comparison, \lmysticeti was evaluated against Mysticeti using identical experimental set ups. Both protocols were deployed using the same hardware, geographic distribution, and network conditions. The evaluation infrastructure, client workload patterns, and measurement methodology were kept consistent between both.

\lmysticeti delivers substantial latency improvements across all tested configurations compared to the Mysticeti baseline. For 10 node networks, \lmysticeti achieves 20-25\% faster median latency under both tested loads. The improvements are similarly impressive for 50 node networks, with 15-25\% faster median latencies. These performance gains demonstrate that the protocol's design choices remain effective under varying load conditions. The results validate the core thesis that reducing wave length from three rounds to two rounds yields meaningful performance benefits that outweigh the overhead introduced by the larger parent collection requirement.

\subsection{Crash Fault Resilience}

To evaluate \lmysticeti's robustness under adverse conditions, the 11 node network contained 2 crash faults. This represented realistic failure scenarios where validators become unavailable during consensus. Under these conditions, \lmysticeti exhibits median latencies which are nearly identical to the 10 node normal operation network and likely within measurement error bounds. The 90th percentile latencies are also extremely close indicating that crash faults have minimal impact on tail performance, confirming the protocol's excellent fault tolerance capabilities.

\subsection{Unsafe Variant}

To isolate the specific performance impact of \lmysticeti's decreased BFT requirement, a controlled experiment was conducted that decouples the effects of the reduced wave length from the larger parent collection threshold. This isolation is necessary for understanding which design decisions contribute to the observed performance improvements and quantifying the trade-offs involved.

The experiment compared the standard \lmysticeti implementation against an unsafe variant that mimics Mysticeti's parent collection behavior. While \lmysticeti requires references to 80\% of the network ($4f + 1$ nodes out of a $5f + 1$ network) before advancing to the next round to maintain safety guarantees (as seen in Section~\ref{sec:security}), the unsafe variant waits for only 66\% of the network, identical to Mysticeti's threshold. This comparison enables quantification of the latency overhead introduced by the stricter quorum requirement.

The results, presented in Table~\ref{tab:quorum-analysis}, reveal that the increased quorum threshold introduces a measurable but moderate latency penalty. For networks of 10 nodes under 10,000 tx/s, the standard \lmysticeti implementation exhibits 40ms additional latency compared to the unsafe variant (300ms vs. 260ms). This increases to 65ms for networks of 50 nodes under identical throughput (355ms vs. 290ms), suggesting that the impact of the larger quorum requirement scales with network size, as expected given the increased communication demands.

These findings provide insight into the performance trade-offs inherent in \lmysticeti's design. Given that \lmysticeti demonstrates a latency improvement over Mysticeti despite requiring more parents for round advancement, the results confirm that the wave length reduction from three rounds to two rounds supplies substantial performance benefits that more than compensate for the increased quorum size. Specifically, the net improvement suggests that reducing the common case commit path by one round yields latency savings that exceed the 40-65ms penalty imposed by the increased parent collection requirement.

\begin{table}[ht]
    \centering
    \begin{tabular}{|c|c|c|c|}
    \hline
    \textbf{Network Size} & \textbf{Standard \lmysticeti} & \textbf{Unsafe Variant} & \textbf{Overhead} \\
    \hline
    10 nodes & 300 & 260 & 40 \\
    \hline
    50 nodes & 355 & 290 & 65 \\
    \hline
    \end{tabular}
    \caption{Median latency (in ms) comparison between standard \lmysticeti and unsafe variant with reduced quorum threshold under 10,000 tx/s load.}
    \label{tab:quorum-analysis}
\end{table}

\chapter{Related Works}\label{chap:related}

The requirement for $n \geq 5f+1$ processes in BFT consensus protocols has changed from early theoretical foundations to modern blockchain implementations. The evolution spans three distinct phases: the foundational theoretical protocols that first used the $5f+1$ threshold, the practical challenges and improvements in implementing these fast Byzantine consensus protocols, and the recent adoption of lower fault tolerance models in current blockchains.

\section{Foundational \texorpdfstring{$5f+1$}{5f+1} Protocols}

The $5f + 1$ threshold first appeared in Ben-Or's seminal 1983 work \cite{async_random}, which demonstrated that randomized consensus was achievable in fully asynchronous networks despite the FLP impossibility result \cite{flp}, as discussed in Section~\ref{sec:flp}. Ben-Or's protocol requires $n \geq 5f + 1$ processes to tolerate $f$ Byzantine faults and achieves consensus in an expected exponential number of rounds—specifically $O(2^n)$. While this exponential complexity made the protocol impractical for large networks, it established the fundamental feasibility of asynchronous Byzantine consensus through randomization.

Concurrent to Ben-Or, Rabin \cite{rabin} presented an alternative randomized approach that also required $5f + 1$ processes. The protocol reached Byzantine agreement in a constant expected number of four communication rounds, independent of both $n$ and $f$. The protocol worked in the synchronous and asynchronous setting.

\section{Evolution of Two Round Consensus}

Beyond proving the important lower bound that two communication steps is optimal for the $5f + 1$ setting, Fast Byzantine (FaB) Paxos \cite{fab} also contributed the first protocol achieving this bound. Departing significantly from the early randomized protocols, FaB Paxos adopted the $5f + 1$ threshold as a deliberate trade-off to have faster consensus in favorable conditions. Operating in a partially synchronous network, the protocol leverages the additional processes to provide sufficient redundancy for detecting and handling Byzantine behavior through message patterns alone.

FaB Paxos uses a distinction between different types of participants: proposers who create consensus options, acceptors who vote on proposals, and learners who learn the decided values. It employs a leader-based approach where the designated leader proposes a value directly to all acceptors in the first phase. Acceptors respond with acknowledgments, and if the leader receives $4f + 1$ responses, it can immediately decide on the value without requiring a second proposal phase. This architecture reduces the typical three step consensus process to two steps by using the extra honest processes (in comparison to the $3f + 1$ network size) to ensure that honest acceptors can distinguish between proposals from honest and Byzantine leaders. However, when the ideal conditions are not met—such as during periods of asynchrony or when the leader is suspected of being faulty—the protocol falls back to a three step approach, maintaining safety while sacrificing the performance benefits of the fast path.

The FaB Paxos protocol faced practical challenges, with subsequent analysis revealing safety and liveness issues that rendered the protocol unusable in real deployments \cite{revisiting_fast_bft0}. These correctness problems were later addressed by SBFT \cite{sbft}, which resolved the identified issues while retaining the ability to commit blocks in two communication rounds, showing that the $5f + 1$ model could indeed be made practical for BFT. Additional theoretical work has further explored the $5f + 1$ model \cite{complexity_async_byz,oracle_consensus,revisiting_fast_bft1}.

\section{Modern Blockchain Adoption}

While many blockchain systems have operated under the standard $n \geq 3f + 1$ network size, there is growing interest in the community to explore lower fault tolerance thresholds to improve scalability. For example, Arc \cite{arc}, a new blockchain that currently operates with 33\% fault tolerance, explicitly states in its whitepaper that ``planned enhancements \ldots include \ldots lower fault-tolerance configurations'' as part of its goal to accomplish ``high throughput and rapid finality.'' This is emblematic of a broader shift in the blockchain space toward considering lower fault tolerances viable.

This shift toward lower fault tolerance is already being pursued by Solana \cite{solanawebsite}, which is currently developing a new consensus protocol called Alpenglow \cite{alpenglow} that operates with $n = 5f + 1$. The community is actively discussing this transition before its implementation \cite{alpenglow_code,alpenglow_forum}. Solana's motivation for adopting the $5f + 1$ model is stated as (emphasis theirs):

\begin{quote}
When discovering the fundamental result in 1980, Pease et al. considered systems where the number of nodes $n$ was small. However, today's blockchain systems consist of \textit{thousands} of nodes! While the 33\% bound also holds for large $n$, attacking one or two nodes is not the same as attacking thousands. In a large scale Proof of Stake blockchain system, running a thousand malicious ("byzantine") nodes would be a costly endeavor, as it would likely require billions of USD as staking capital. Even worse, misbehavior is often punishable, hence an attacker would lose all this staked capital. So, in a \textit{real} large scale distributed blockchain system, we will probably see \textit{significantly less} than 33\% byzantines. \cite{alpenglow}
\end{quote}

Alpenglow achieves consensus in one of two ways: blocks can be committed either through a single voting round if they receive $4f + 1$ votes, or through two voting rounds where each round requires $3f + 1$ votes. Kudzu \cite{kudzu} and Hydrangea \cite{hydrangea}, both developed concurrently within the cryptocurrency research community, employ identical commitment rules to Alpenglow, demonstrating the broader interest in this particular approach to $5f + 1$ consensus. The single round commitment is identical to \lmysticeti's direct decision rule, where all protocols commit blocks immediately upon receiving $4f + 1$ support. However, the protocols diverge in the other path. Alpenglow, Kudzu, and Hydrangea's two round commitment requires $3f + 1$ votes in both rounds, while \lmysticeti's indirect decision rule requires $2f + 1$ support in the subsequent round, with those supporters then being in an anchor's causal history. For skipping blocks, Alpenglow, Kudzu, and Hydrangea employ a symmetric approach to commits. They skip blocks that receive $3f + 1$ skip votes in either the first or second voting rounds. This is similar to \lmysticeti's direct skip rule, which requires $4f + 1$ blames, though \lmysticeti additionally supports indirect skipping through the same anchor-based mechanism used for indirect commits.

Direct performance comparisons between Alpenglow and \lmysticeti are not feasible due to different evaluation methodologies. Alpenglow's performance analysis is based on the actual stake distribution of Solana's validator network, where 65\% of stake is within 50ms round trip time of Zurich, with 25\% of the stake appearing to be located in Zurich itself. Over 80\% of stake is within 100ms round trip of Zurich, and only 1\% is farther than 200ms. This highly centralized geographic distribution creates favorable network conditions for consensus. In contrast, \lmysticeti's evaluation employs equal stake distribution among all participants in a controlled environment where round trip latencies between peers were rarely below 100ms, frequently ranged between 100 and 200ms, and occasionally spiked up to 500ms. Additionally, both Kudzu and Hydrangea remain theoretical protocols without available implementations. These contrasting evaluation approaches make useful latency comparisons between the protocols impossible.

\chapter{Conclusion}\label{chap:conclusion}

\lmysticeti is the first (practical) two round DAG-based consensus protocol. It uses a network of size $5f + 1$ and achieves a 20-25\% latency reduction over an existing state of the art DAG protocol through shortening the wave length. The early block production optimization provides additional benefits for crash faults, which are more prevalent in real blockchains than Byzantine faults. These performance gains may translate to lower transaction fees and improved user experience through faster confirmation times, addressing barriers to blockchain adoption.

The concurrent development of \lmysticeti, Alpenglow, and others may be indicative of a larger shift toward lower fault tolerance, higher performance consensus protocols. The viability of $5f + 1$ protocols for blockchain systems has been established, building a foundation for next generation high performance BFT systems. This paper demonstrates that the trade-off between fault tolerance and performance can yield practical benefits for production blockchain deployments.

\section{Future Research Directions}

Multi-threshold finality protocols could allow users to choose between faster finality at 20\% fault tolerance or stronger security at 33\% fault tolerance depending on application requirements. This approach would enable different use cases to select appropriate security-performance trade-offs while potentially supporting dynamic threshold adjustment based on network conditions.

Research is needed to formalize the economic security arguments underlying the shift from 33\% to 20\% fault tolerance in Proof of Stake blockchains. While intuitive arguments suggest that attacking these large scale networks becomes economically infeasible due to staking capital requirements and slashing penalties, more rigorous analysis of the incentive structures would give stronger evidence for deploying $5f + 1$ consensus protocols in production blockchains.

Additional optimizations may further reduce steps, waiting periods, or computation on the critical path beyond the early block production optimization.

Systems like Sui \cite{suiwebsite,mysticeti} employ a \textit{consensusless} transaction path to obtain faster finality for certain transaction types. As consensus protocols reach sufficiently fast finality, they may obviate the need for such complex alternatives. This would simplify blockchain architectures.

\appendix

\bibliographystyle{splncs04}
\bibliography{references}

\begin{thebibliography}{10}
\providecommand{\url}[1]{\texttt{#1}}
\providecommand{\urlprefix}{URL }
\providecommand{\doi}[1]{https://doi.org/#1}

\bibitem{revisiting_fast_bft0}
Abraham, I., Gueta, G., Malkhi, D., Alvisi, L., Kotla, R., Martin, J.P.: \href{https://doi.org/10.48550/arXiv.1712.01367}{Revisiting Fast Practical Byzantine Fault Tolerance} (2017)

\bibitem{revisiting_fast_bft1}
Abraham, I., Gueta, G., Malkhi, D., Martin, J.P.: \href{https://doi.org/10.48550/arXiv.1801.10022}{Revisiting Fast Practical Byzantine Fault Tolerance: Thelma, Velma, and Zelma} (2018)

\bibitem{prime}
Amir, Y., Coan, B., Kirsch, J., Lane, J.: \href{https://doi.org/10.1109/TDSC.2010.70}{Prime: Byzantine Replication under Attack}. IEEE Transactions on Dependable and Secure Computing  \textbf{8}(4),  564--577 (2011)

\bibitem{shoalpp}
Arun, B., Li, Z., Suri-Payer, F., Das, S., Spiegelman, A.: \href{https://doi.org/10.48550/arXiv.2405.20488}{Shoal++: High Throughput DAG BFT Can Be Fast!} (2025)

\bibitem{mysticeti}
Babel, K., Chursin, A., Danezis, G., Kokoris-Kogias, L., Sonnino, A.: \href{https://arxiv.org/abs/2310.14821}{Mysticeti: Low-Latency DAG Consensus with Fast Commit Path} (2025)

\bibitem{hedera}
Baird, L., Harmon, M., Madsen, P.: \href{https://hedera.com/hh_whitepaper_v2.1-20200815.pdf}{Hedera: A Governing Council and Public Hashgraph Network} (August 2020), white paper

\bibitem{hashgraph}
Baird, L., Luykx, A.: \href{https://doi.org/10.1109/COINS49042.2020.9191430}{The Hashgraph Protocol: Efficient Asynchronous BFT for High-Throughput Distributed Ledgers}. In: 2020 International Conference on Omni-layer Intelligent Systems (COINS). pp.~1--7 (2020)

\bibitem{async_random}
Ben-Or, M.: \href{https://doi.org/10.1145/800221.806707}{Another advantage of free choice: Completely asynchronous agreement protocols}. In: Proceedings of the Second Annual ACM Symposium on Principles of Distributed Computing. p. 27–30. PODC '83, Association for Computing Machinery, New York, NY, USA (1983)

\bibitem{tendermint}
Buchman, E., Kwon, J., Milosevic, Z.: \href{https://doi.org/10.48550/arXiv.1807.04938}{The latest gossip on BFT consensus} (07 2019)

\bibitem{trilemma}
Buterin, V.: Sharding {FAQ}. \href{https://vitalik.eth.limo/general/2017/12/31/sharding_faq.html}{https://vitalik.eth.limo/general/2017/12/31/sharding\_faq.html} (December 2017)

\bibitem{casper_one_message}
Buterin, V.: \href{https://ethresear.ch/t/casper-ffg-with-one-message-type-and-simpler-fork-choice-rule/103}{Casper FFG with one message type and simpler fork choice rule} (2018), ethereum Research Forum

\bibitem{casper_ffg}
Buterin, V., Griffith, V.: \href{https://arxiv.org/abs/1710.09437}{Casper the Friendly Finality Gadget} (2019)

\bibitem{pbft}
Castro, M., Liskov, B.: \href{https://dl.acm.org/doi/10.5555/296806.296824}{Practical Byzantine fault tolerance}. In: Proceedings of the Third Symposium on Operating Systems Design and Implementation. p. 173–186. OSDI '99, USENIX Association, USA (1999)

\bibitem{bab}
Cristian, F., Aghili, H., Strong, R., Dolev, D.: \href{https://doi.org/10.1006/inco.1995.1060}{Atomic Broadcast: From Simple Message Diffusion to Byzantine Agreement}. Information and Computation  \textbf{118}(1),  158--179 (1995)

\bibitem{narwhal}
Danezis, G., Kokoris-Kogias, L., Sonnino, A., Spiegelman, A.: \href{https://dl.acm.org/doi/abs/10.1145/3590140.3629114}{Bridging the Gap of Timing Assumptions in Byzantine Consensus}. In: EuroSys '22: Proceedings of the Seventeenth European Conference on Computer Systems (2022)

\bibitem{tusk}
Danezis, G., Kokoris-Kogias, L., Sonnino, A., Spiegelman, A.: \href{https://doi.org/10.1145/3492321.3519594}{Narwhal and Tusk: a DAG-based mempool and efficient BFT consensus}. In: Proceedings of the Seventeenth European Conference on Computer Systems. p. 34–50. EuroSys '22, Association for Computing Machinery, New York, NY, USA (2022)

\bibitem{dieNetWritev}
{Die.Net}: {writev(3) - Linux man page}. \href{https://linux.die.net/man/3/writev}{https://linux.die.net/man/3/writev} (2025)

\bibitem{complexity_async_byz}
Dutta, P., Guerraoui, R., Vukolic, M.: \href{https://infoscience.epfl.ch/handle/20.500.14299/214732}{The Complexity of Asynchronous Byzantine Consensus} (2004)

\bibitem{networks}
Dwork, C., Lynch, N., Stockmeyer, L.: \href{https://doi.org/10.1145/42282.42283}{Consensus in the presence of partial synchrony}. J. ACM  \textbf{35}(2),  288–323 (1988)

\bibitem{fischer_1982}
Fischer, M.J., Lynch, N.A.: \href{https://doi.org/10.1016/0020-0190(82)90033-3}{A lower bound for the time to assure interactive consistency}. Information Processing Letters  \textbf{14}(4),  183--186 (1982)

\bibitem{flp}
Fischer, M.J., Lynch, N.A., Paterson, M.S.: \href{https://doi.org/10.1145/3149.214121}{Impossibility of distributed consensus with one faulty process}. J. ACM  \textbf{32}(2),  374–382 (Apr 1985)

\bibitem{oracle_consensus}
Friedman, R., Mostefaoui, A., Raynal, M.: \href{https://doi.org/10.1109/RELDIS.2004.13530242}{Simple and efficient oracle-based consensus protocols for asynchronous Byzantine systems}. In: Proceedings of the 23rd IEEE International Symposium on Reliable Distributed Systems, 2004. pp. 228--237 (2004)

\bibitem{aleph}
Gagol, A., Lesniak, D., Straszak, D., Swietek, M.: \href{https://doi.org/10.1145/3318041.3355467}{Aleph: Efficient Atomic Broadcast in Asynchronous Networks with Byzantine Nodes}. In: Proceedings of the 1st ACM Conference on Advances in Financial Technologies. p. 214–228. AFT '19, Association for Computing Machinery, New York, NY, USA (2019)

\bibitem{sbft}
Golan~Gueta, G., Abraham, I., Grossman, S., Malkhi, D., Pinkas, B., Reiter, M., Seredinschi, D.A., Tamir, O., Tomescu, A.: \href{https://doi.org/10.1109/DSN.2019.00063}{SBFT: A Scalable and Decentralized Trust Infrastructure}. In: 2019 49th Annual IEEE/IFIP International Conference on Dependable Systems and Networks (DSN). pp. 568--580 (2019)

\bibitem{mahimahi}
Jovanovic, P., Kogias, L.K., Kumara, B., Sonnino, A., Tennage, P., Zablotchi, I.: \href{https://arxiv.org/abs/2410.08670}{Mahi-Mahi: Low-Latency Asynchronous BFT DAG-Based Consensus} (2024)

\bibitem{dag-rider}
Keidar, I., Kokoris-Kogias, E., Naor, O., Spiegelman, A.: \href{https://arxiv.org/abs/2102.08325}{All You Need is DAG}. In: PODC'21: Proceedings of the 2021 ACM Symposium on Principles of Distributed Computing (2021)

\bibitem{cordial_miners}
Keidar, I., Naor, O., Poupko, O., Shapiro, E.: \href{https://doi.org/10.4230/LIPIcs.DISC.2023.26}{Cordial Miners: Fast and Efficient Consensus for Every Eventuality}. In: Oshman, R. (ed.) 37th International Symposium on Distributed Computing (DISC 2023). Leibniz International Proceedings in Informatics (LIPIcs), vol.~281, pp. 26:1--26:22. Schloss Dagstuhl -- Leibniz-Zentrum f{\"u}r Informatik, Dagstuhl, Germany (2023)

\bibitem{fast_no_faults}
Keidar, I., Rajsbaum, S.: \href{https://doi.org/10.1145/504192.504195}{On the cost of fault-tolerant consensus when there are no faults: preliminary version}. SIGACT News  \textbf{32}(2),  45–63 (Jun 2001)

\bibitem{alpenglow}
Kniep, Q., Sliwinski, J., Wattenhofer, R.: \href{https://www.anza.xyz/alpenglow-1-1}{Solana Alpenglow Consensus: Increased Bandwidth, Reduced Latency} (July 2025), white paper v1.1

\bibitem{alpenglow_forum}
Kniep, Q., Sliwinski, K., Wattenhofer, R.: \href{https://forum.solana.com/t/simd-0326-proposal-for-the-new-alpenglow-consensus-protocol/}{SIMD-0326: Proposal for the new Alpenglow consensus protocol} (August 2025), solana forum

\bibitem{mysticeti-code}
Labs, M.: Mysticeti: Low-latency dag consensus with fast commit path. \href{https://github.com/asonnino/mysticeti}{https://github.com/asonnino/mysticeti} (2025)

\bibitem{paxos_made_simple}
Lamport, L.: \href{https://www.microsoft.com/en-us/research/publication/paxos-made-simple/}{Paxos Made Simple}. ACM SIGACT News (Distributed Computing Column) 32, 4 (Whole Number 121, December 2001) pp. 51--58 (December 2001)

\bibitem{fast_paxos}
Lamport, L.: \href{https://doi.org/10.1007/s00446-006-0005-x}{Fast Paxos}. Distrib. Comput.  \textbf{19}(2),  79–103 (Oct 2006)

\bibitem{byzantine_generals}
Lamport, L., Shostak, R., Pease, M.: \href{https://doi.org/10.1145/357172.357176}{The Byzantine Generals Problem}. ACM Trans. Program. Lang. Syst.  \textbf{4}(3),  382–401 (Jul 1982)

\bibitem{arc}
Liao, G.Y., Mayer, R., Soghoian, A., Jain, S., Tierney, E.: \href{https://arcnetwork.xyz/litepaper}{Arc: An open Layer-1 blockchain purpose-built for stablecoin finance} (August 2025), litepaper

\bibitem{fab}
Martin, J.P., Alvisi, L.: \href{https://doi.org/10.1109/DSN.2005.48}{Fast Byzantine consensus}. In: 2005 International Conference on Dependable Systems and Networks (DSN'05). pp. 402--411 (2005)

\bibitem{minibytes}
{Meta}: {Sapling (minibytes)}. \href{https://github.com/facebook/sapling/tree/main/eden/scm/lib/minibytes}{https://github.com/facebook/sapling/tree/main/eden/scm/lib/minibytes} (2025)

\bibitem{suiwebsite}
{Mysten Labs}: Sui network (2025), \url{https://sui.io/}

\bibitem{bitcoin}
Nakamoto, S.: \href{https://bitcoin.org/bitcoin.pdf}{Bitcoin: A Peer-to-Peer Electronic Cash System} (October 2008), white paper

\bibitem{ebb_flow}
Neu, J., Tas, E.N., Tse, D.: \href{https://doi.org/10.1109/SP40001.2021.00045}{Ebb-and-Flow Protocols: A Resolution of the Availability-Finality Dilemma}. In: 2021 IEEE Symposium on Security and Privacy (SP). pp. 446--465 (2021)

\bibitem{viewstamped_replication}
Oki, B.M., Liskov, B.H.: \href{https://doi.org/10.1145/62546.62549}{Viewstamped Replication: A New Primary Copy Method to Support Highly-Available Distributed Systems}. In: Proceedings of the Seventh Annual ACM Symposium on Principles of Distributed Computing. p. 8–17. PODC '88, Association for Computing Machinery, New York, NY, USA (1988)

\bibitem{sleepy}
Pass, R., Shi, E.: \href{https://doi.org/10.1007/978-3-319-70697-9_14}{The Sleepy Model of Consensus}. In: Takagi, T., Peyrin, T. (eds.) Advances in Cryptology -- ASIACRYPT 2017. pp. 380--409. Springer International Publishing, Cham (2017)

\bibitem{pease_agreement}
Pease, M., Shostak, R., Lamport, L.: \href{https://doi.org/10.1145/322186.322188}{Reaching Agreement in the Presence of Faults}. J. ACM  \textbf{27}(2),  228–234 (April 1980)

\bibitem{slipstream}
Polyanskii, N., M{\"u}ller, S., Raikwar, M.: \href{https://doi.org/10.48550/arXiv.2410.14876}{Slipstream: Ebb-and-Flow Consensus on a DAG with Fast Confirmation for UTXO Transactions} (2024)

\bibitem{rabin}
Rabin, M.O.: \href{https://doi.org/10.1109/SFCS.1983.48}{Randomized byzantine generals}. In: 24th Annual Symposium on Foundations of Computer Science. pp. 403--409 (1983)

\bibitem{rustcrypto-hashes}
RustCrypto: Rustcrypto: Hashes. \href{https://github.com/RustCrypto/hashes}{https://github.com/RustCrypto/hashes} (2025)

\bibitem{kudzu}
Shoup, V., Sliwinski, J., Vonlanthen, Y.: \href{https://doi.org/10.48550/arXiv.2505.08771}{Kudzu: Fast and Simple High-Throughput BFT} (2025)

\bibitem{hydrangea}
Shrestha, N., Kate, A., Nayak, K.: \href{https://eprint.iacr.org/2025/1112}{Hydrangea: Optimistic Two-Round Partial Synchrony}. Cryptology {ePrint} Archive, Paper 2025/1112 (2025)

\bibitem{sailfish}
Shrestha, N., Shrothrium, R., Kate, A., Nayak, K.: \href{https://doi.org/10.1109/SP61157.2025.00021}{Sailfish: Towards Improving Latency of DAG-based BFT}. Cryptology ePrint Archive  (2024)

\bibitem{solanawebsite}
{Solana Foundation}: Solana (2025), \url{https://solana.com/}

\bibitem{shoal}
Spiegelman, A., Arun, B., Gelashvili, R., Li, Z.: \href{https://arxiv.org/abs/2306.03058}{Shoal: Improving DAG-BFT Latency And Robustness} (2023)

\bibitem{bullshark}
Spiegelman, A., Giridharan, N., Sonnino, A., Kokoris-Kogias, L.: \href{https://dl.acm.org/doi/abs/10.1145/3548606.3559361}{Bullshark: DAG BFT Protocols Made Practical}. In: CCS '22: Proceedings of the 2022 ACM SIGSAC Conference on Computer and Communications Security (2022)

\bibitem{tokio}
Team, T.T.: Tokio. \href{https://tokio.rs}{https://tokio.rs} (2025)

\bibitem{ed25519-consensus}
de~Valence, H.: Ed25519 for consensus-critical contexts. \href{https://crates.io/crates/ed25519-consensus}{https://crates.io/crates/ed25519-consensus} (2025)

\bibitem{alpenglow_code}
Wattenhofer, R.: \href{https://github.com/solana-foundation/solana-improvement-documents/pull/326}{SIMD-0326: Alpenglow} (July 2025), solana Improvement Document Pull Request

\bibitem{hotstuff}
Yin, M., Malkhi, D., Reiter, M.K., Gueta, G.G., Abraham, I.: \href{https://doi.org/10.1145/3293611.3331591}{HotStuff: BFT Consensus with Linearity and Responsiveness}. In: Proceedings of the 2019 ACM Symposium on Principles of Distributed Computing. p. 347–356. PODC '19, Association for Computing Machinery, New York, NY, USA (2019)

\end{thebibliography}

\end{document}